\begin{document}

\title{Error Exponents for Oblivious Relaying and Connections to Source Coding with a Helper}


\author{Han Wu and Hamdi Joudeh
        \thanks{The authors are with Eindhoven University of Technology, the Netherlands. Email: \{h.wu1, h.joudeh\}@tue.nl.
        This work was supported in part by the European Research Council (ERC) under Grant 101116550.}
}



\maketitle

\begin{abstract}
The information bottleneck channel, also known as oblivious relaying, is a two-hop channel where a transmitter sends messages to a remote receiver via an intermediate relay node.
A codeword sent by the transmitter passes through a discrete memoryless channel to reach the relay, which then processes the noisy channel output and forwards it to the receiver through a noiseless rate-limited link.
The relay is oblivious, in the sense that it has no knowledge of the channel codebook used in transmission.
Previous works on oblivious relaying focus on characterizing achievable rates.
In this work, we study error exponents and explore connections to lossless source coding with a helper, also known as the Wyner-Ahlswede-Körner (WAK) problem.

We first establish an achievable error exponent for oblivious relaying under constant compositions codes.
A key feature of our analysis is the use of the type covering lemma to design the relay's compress-forward scheme.
We then show that employing constant composition code ensembles does not improve the rates achieved with their IID counterparts.
We also derive a sphere packing upper bound for the error exponent.
In the second part of this paper, we establish a connection between the information bottleneck channel and the WAK problem. We show that good codes for the latter can be produced through permuting codes designed for the former.
This is accomplished by revisiting Ahlswede's covering lemma, and extending it to achieve simultaneous covering of a type class by several distinct sets using the same sequence of permutations.
We then apply our approach to attain the best known achievable error exponent for the WAK problem, previously established by Kelly and Wagner.
As a byproduct of our derivations, we also establish error exponents and achievable rates under mismatched decoding rules.
\end{abstract}
\section{Introduction}
We study a basic two-hop network comprising a transmitter, a relay and a receiver.
The transmitter is connected to the relay through a discrete memoryless channel (DMC), denoted by $P_{Y|X}$, and the link between the relay and receiver is  noiseless but rate-limited with capacity $B$.
The goal is to send a message from the transmitter to the receiver, where the only connection between the two is via the relay.
To this end, the transmitter uses a channel codebook from which it sends a codeword representing the message to the relay.
The relay processes its noisy observation and forwards an index to the receiver.
From this index, the receiver attempts to retrieve the original message.
The complication here is that while the transmitter and receiver have access to the channel codebook in use over the DMC, the relay does not and hence is \emph{oblivious} to this codebook.
The setting is known as oblivious relaying \cite{sanderovichCommunicationDecentralizedProcessing2008, simeoneCodebookInformationInterference2011}, or equivalently, the information bottleneck (IB) channel \cite{caireInformationBottleneckOblivious2018, steinerBroadcastApproachInformation2021}.

In the process of analyzing a model for oblivious relaying, a key question that arises is how to rigorously model obliviousness at the relay.
An answer to this question was provided in the seminal work of Sanderovich \emph{et al.} \cite{sanderovichCommunicationDecentralizedProcessing2008} through a Bayesian formalization.
In particular, obliviousness is modeled by assuming that the codebook in use by the encoder at the transmitter and the decoder at the receiver is drawn at random from the class of all possible codebooks according to some prior distribution.
While the relay knows the prior distribution, it has no knowledge of the exact codebook being used, and therefore its processing strategy should be chosen such that it works for codebooks in the class with high probability.
Mathematically, this bears close resemblance to random coding as used in achievability proofs \cite{gamalNetworkInformationTheory2011,scarlettInformationtheoreticFoundationsMismatched2020}; or randomized encoding as used in arbitrarily varying channels \cite{csiszarInformationTheoryCoding2011}.
Nevertheless, the motivation here is different as the focus is on modeling the relay's lack of knowledge.
With this Bayesian approach, the task of modeling obliviousness now reduces to choosing a reasonable codebook prior distribution.

The IID prior is adopted in \cite{sanderovichCommunicationDecentralizedProcessing2008}, where all codeword symbols are independently drawn from the same distribution $P_X$ (i.e. IID random codebook ensemble).
This choice may reflect the relay's \emph{belief} that
the employed codebook is one that achieves, e.g., the capacity of the DMC $P_{Y|X}$, and hence its first-order empirical distribution must resemble the capacity-achieving distribution \cite{shamaishitzEmpiricalDistributionGood1997}.
This is also reminiscent of the discrete memoryless source (DMS) model in source coding \cite[Section 3]{shannonMathematicalTheoryCommunication1948},  which ignores higher-order structures.
Under the IID prior, the capacity of the oblivious relay channel described earlier is
\begin{equation}
    C_{\mathrm{IID}}(B) = \max_{P_{X}, P_{U|Y}} I(X;U) \qquad \text{s.t.} \quad I(Y;U) \leq B, \label{eq:introduction_capacity}
\end{equation}
where \(X \to Y \to U\) is a Markov chain.
This follows as a special case from \cite{sanderovichCommunicationDecentralizedProcessing2008}, where a more general model with multiple oblivious relays is considered.
This capacity formula, which can be seen as an instance of the IB problem \cite{tishbyInformationBottleneckMethod1999} (specifically if we fix the
input distribution \(P_X\) to match the source distribution in the IB problem), is the reason why the oblivious relaying setting is also known as the IB channel.
Henceforth we will use the two terms interchangeably.

In establishing \eqref{eq:introduction_capacity}, it becomes clear that obliviousness at the relay effectively limits the relay's processing to compress-forward schemes, and precludes the use of, e.g., decode-forward schemes.\footnote{If the relay is non-oblivious, i.e., it is cognizant of the codebook in use over the DMC \(P_{Y|X}\), then decode-forward achieves capacity, which in this case coincides with the cut-set bound $\min\{I(X;Y),B\}$.}
This limitation is particularly useful for modeling cloud radio access network (C-RAN) architectures, which feature distributed low-cost wireless access nodes, known as remote radio heads (RRHs), connected through wired front-haul links to a centralized cloud server \cite{simeoneCloudRadioAccess2016,pengFronthaulconstrainedCloudRadio2015}.
RRHs can only perform low-level basic processing, e.g., down-conversion and quantization, while more advanced signal processing and channel decoding tasks are performed by the central processor.
The oblivious relay model and compress-forward schemes are effective abstractions for RRHs and their limited functionality; and have been central for analyzing information-theoretic capacity limits for various C-RAN architectures, see, e.g., \cite{simeoneCodebookInformationInterference2011,aguerriCapacityCloudRadio2019,ensanCloudRadioAccess2021}.
Other extensions include, e.g., IB channels with state \cite{caireInformationBottleneckOblivious2018}, fading channels \cite{steinerBroadcastApproachInformation2021,xuInformationBottleneckRayleigh2021}, and multi-user downlink (broadcast) settings \cite{wangAchievabilityDownlinkCloud2018,patilGeneralizedCompressionStrategy2019,ghaddarLowcomplexityCodingTechniques2024}.
The IB channel under mismatched decoding or mismatched compressing rules is studied in \cite{dikshteinMismatchedObliviousRelaying2023}, while second-order achievable rates were recently derived in \cite{liuNonasymptoticObliviousRelaying2025}.
\subsection{Channel Reliability}
All aforementioned works focus on analyzing achievable code rates, or channel capacity, under the IID code ensemble.
Apart from channel capacity, another important figure of merit is the channel reliability function, or error exponent, which captures the exponential decay rate of the decoding error probability at the receiver.
For the DMC, lower and upper bounds for the reliability function, commonly known as the random coding exponent and sphere packing exponent, have been established in classical works by Gallager \cite{gallagerSimpleDerivationCoding1965} (who refined Fano's analysis), Shannon-Gallager-Berlekamp \cite{shannonLowerBoundsError1967}, Haroutunian \cite{haroutunianBoundsExponentProbability1968}, and Csiszár-Körner-Marton \cite{csiszarInformationTheoryCoding2011}, where the latter two rely on constant composition codes.
For the classical relay channel, error exponents have been studied in \cite{tanReliabilityFunctionDiscrete2015}.
However, for the IB channel with an oblivious relay, error exponents have received very little attention (apart from our preliminary work \cite{wuAchievableErrorExponent2024}).

In this work, we will establish an achievable random coding exponent for the IB channel, as well as a sphere packing upper bound.
The exponents we derive recover the corresponding exponents for the DMC when \(B\) is large.
Our analysis relies on the method of types, and therefore it is natural to use the constant composition code ensemble instead of the IID code ensemble commonly used in the oblivious relaying literature.
The use of the constant composition ensemble is also of independent interest, as it represents scenarios where the relay has knowledge of some high-order codebook structure used in transmission.
This naturally gives rise to the question of whether constant composition code ensembles can improve upon the IB channel capacity under IID codes given in \eqref{eq:introduction_capacity}, the same way they improve upon the rates achieved under mismatched decoding \cite{scarlettInformationtheoreticFoundationsMismatched2020}. We answer this question in the negative in this paper.
\subsection{Connections to Source Coding with a Helper}
For reasons that will become clear shortly, let us now turn our attention to the problem of almost lossless source coding with a helper, also known as the Wyner-Ahlswede-Körner (WAK) problem.
Here a transmitter wishes to describe a discrete memoryless source $X^n$ to a receiver, whose goal is to reconstruct this source. The receiver has access to side information provided by a helper, connected to the receiver through a rate-limited link of capacity $B$, and who observes a second source $Y^n$ correlated to $X^n$.

Let \(R_{\text{h}}(B)\) denote the minimum rate for the transmitter's description in the WAK setting described above. Wyner \cite{wynerSourceCodingSide1975} and Ahlswede and Körner \cite{ahlswedeSourceCodingSide1975} showed that this is given by
\begin{equation}
    R_{\text{h}}(B) = \min_{P_{U|Y}} H(X|U) \qquad \text{s.t.}  \quad  I(Y; U) \leq B, \label{eq:characterization_minimum_rate_coded_side_information}
\end{equation}
where \(X \to Y \to U\).
The IB channel capacity in \eqref{eq:introduction_capacity} is closely related to  this minimum rate, specifically if we fix the input distribution \(P_X\) in \eqref{eq:introduction_capacity} to match the source distribution in \eqref{eq:characterization_minimum_rate_coded_side_information}.
In fact, the WAK problem has also been recognized as an instance of information bottleneck problems \cite{zaidiInformationBottleneckProblems2020}.

Following the above observation, it is intriguing to ask the question of whether there exists a deeper level of connection between the IB channel and the WAK problem, beyond their common information-theoretic rate limits.
For example, can coding schemes developed for one problem be applied to the other? In this paper, we establish such a connection by showing that a class of \emph{good} codes which we construct for the IB channel can be transformed into a class of \emph{good} codes for the WAK problem, which in turn achieve the best known error exponent previously derived by Kelly and Wagner in \cite{kellyReliabilitySourceCoding2012}.

In establishing this code-level connection, we draw on an existing connection between special cases of the above problems. Suppose that the bottleneck capacity $B$ is large enough to describe $Y^n$ in an (almost) lossless fashion. This reduces the IB channel to the standard DMC, and the WAK problem to the Slepian-Wolf (SW) problem \cite{Slepian1973}.
Coding for the SW problem can be seen as partitioning the set of source sequences into bins, each of which constitutes a good channel code for the DMC.
This perspective was adopted by Ahlswede and Dueck in \cite{ahlswedeGoodCodesCan1982}, who showed that good constant composition codes for the DMC can be used to construct good partitions for the SW problem through permutations; and then utilized this observation to derive error exponents for the latter problem.\footnote{Similar results were derived by Csisz{\'a}r and K{\"o}rner \cite{csiszarGraphDecompositionNew1981} through a related yet different perspective that does not use permutations.}
Key to their construction is a result known as Ahlswede's covering lemma, which establishes a limit on the number of permutations required to cover a type class from a subset of sequences of the same type.
In this paper, we extend Ahlswede's covering lemma and further develop the Ahlswede-Dueck perspective, showing that good partitions for the WAK problem can also be constructed through permuting good codes for the IB channel.
\subsection{Contributions and Organization}
We now summarize the main technical contributions of this paper.
First, we establish an achievable error exponent for the IB channel under the constant composition ensemble, i.e., the prior at the relay is uniform on a certain type class.
As part of our coding scheme, we design a compress-forward scheme at the relay using the type covering lemma \cite{bergerRateDistortionTheory1971,csiszarInformationTheoryCoding2011}.
The error exponent is established through an intricate analysis of the intersection between conditional type classes.
We further show that the attained error exponent implies that \eqref{eq:introduction_capacity} is achievable, i.e., the IB channel capacity under the IID ensemble is also achievable with  the constant composition ensemble.
For the sake of generality, we carry out the analysis while assuming that the receiver employs a generalized \( \alpha \)-decoder \cite{csiszarGraphDecompositionNew1981}, allowing us to establish an achievable error exponent under mismatched decoding rules and recover an LM rate result derived in \cite{dikshteinMismatchedObliviousRelaying2023}.

Second, we provide a converse proof showing that under the constant composition ensemble, the rate in \eqref{eq:introduction_capacity} cannot be exceeded.
Together with the achievability result mentioned above, this establishes that \eqref{eq:introduction_capacity} is also the capacity of the IB channel under the constant composition ensemble.
In our proof, we analyze the behavior of the constant composition ensemble and establish several properties for its marginal and conditional distributions.
These properties reveal that as far as oblivious relaying is concerned, the constant composition ensemble asymptotically behaves similar to the IID ensemble (i.e., codes without structure), and its higher-order structure cannot help with processing at the oblivious relay.

Third, we establish a sphere packing upper bound for all achievable error exponents under the constant composition ensemble.
We accomplish this by following the approach of Kelly and Wagner \cite{kellyReliabilitySourceCoding2012}, which refines the standard sphere packing argument in the context of the WAK problem; and adapt it to the IB channel. For this, the constant composition converse proof mentioned above is essential.

Finally, we establish a code-level connection between the IB channel and the WAK problem.
In particular, we show that the helper in the WAK problem can be viewed as an oblivious relay, and good source partitions for the WAK problem can be produced through permuting good IB channel codes.
This is achieved by revisiting and extending Ahlswede's covering lemma, showing that a type class can be simultaneously covered by several distinct sets using a single sequence of permutations.
As a demonstration, we transform the coding scheme constructed for the IB channel in our current work to a coding scheme for the WAK problem, and show that it attains the best known achievable error exponent for the WAK problem, previously established in \cite{kellyReliabilitySourceCoding2012}.
Moreover, since the achievable error exponent for the IB channel is established under the generalized \(\alpha\)-decoder, this enables us to derive an achievable error exponent and LM rate for the WAK problem under mismatched decoding rules.

The rest of the paper is organized as follows. After describing key notations at the end of this section, in the next section we provide a formal description of the IB channel under consideration.
In Section \ref{sec:main_results}, we discuss the main results of this paper and provide some insights.
Sections \ref{sec:lower_bound_proof} to \ref{sec:source_coding_coded_side_information_exponent_lower_bound} are dedicated to proving the main results, while proofs of some technical lemmas are deferred to the appendices.
Concluding remarks and future directions are provided in  Section \ref{sec:conclusion}.
\subsection{Notation}
We describe the notation that will be used throughout the work.
Given a finite alphabet \(\mathcal{X}\), we use \(\mathcal{P}(\mathcal{X})\) to denote the set of all probability mass functions (pmfs) \(P_X\) on \(\mathcal{X}\).
We write \(\bm{x}=(x_1, x_2, \ldots, x_n)\) for an \(n\)-length sequence from \(\mathcal{X}^n\).
A random vector on \(\mathcal{X}^n\)  is denoted by \(\bm{X}=(X_1, X_2, \ldots, X_n)\).
Depending on the context, we may also write \(x^n\) and \(X^n\) instead of \(\bm{x}\) and \(\bm{X}\).
In the same way, we adopt the notation \(\bm{y} = (y_1, y_2, \ldots, y_n)\) or \(\bm{u}=(u_1, u_2, \ldots, u_n)\), and \(\bm{Y}\) or \(\bm{U}\), on \(\mathcal{Y}^n\) or \(\mathcal{U}^n\) respectively.
All alphabets in this work are finite.
Following convention, the hat symbol \(\hat{P}\) is used whenever we are looking at the empirical distribution induced by some deterministic sequences.
For a sequence \(\bm{x} \in \mathcal{X}^{n}\), we use \(\hat{P}_{\bm{x} }\) to denote its vector of relative frequencies of all symbols \(x \in \mathcal{X}\), i.e., its type.
\(\hat{P}_{\bm{x}\bm{y}}\) denotes the joint type of a sequence pair \((\bm{x}, \bm{y})\), while \(\hat{P}_{\bm{x} | \bm{y}}\) is the conditional type from \(\bm{y}\) to \(\bm{x}\) induced by  \(\hat{P}_{\bm{x}\bm{y}}\).
The set of all possible types \(\hat{P}_{\bm{x}}\) on \(\mathcal{X}^{n}\) is written as \(\mathcal{P}_n(\mathcal{X})\), while the set of all possible conditional types \(\hat{P}_{\bm{x} | \bm{y}}\) for sequences from \(\mathcal{Y}^n\) and \(\mathcal{X}^n\) is written as \(\mathcal{P}_n(\mathcal{X} | \mathcal{Y})\).
The type class \(\mathcal{T}_n(P_X)\) consists of all sequences \(\bm{x}\) that have the same type \(P_X \in \mathcal{P}_n(\mathcal{X})\).
For a given sequence \(\bm{y}\), the conditional type class \(\mathcal{T}_n(P_{X|Y} | \bm{y})\) is the set of all sequences \(\bm{x}\) such that the conditional type from \(\bm{y}\) to \(\bm{x}\) is \(P_{X|Y} \in \mathcal{P}_n(\mathcal{X} | \mathcal{Y})\).

The entropy of \(P_{X}\) is written as \(H(X)\) or \(H(P_X)\) and the conditional entropy between two random variables \(X\) and \(Y\) is denoted by \(H(Y|X)\) or \(H(P_{Y|X} | P_X)\), while the mutual information between \(X\) and \(Y\) is written as \(I(X;Y)\) or \(I(P_X, P_{Y|X})\).
\(D(Q_X\|P_X)\) is the KL-divergence between two pmfs \(Q_X\) and \(P_X\), and \(D(Q_{Y|X}\|P_{X|Y} | P_X)\) denotes the conditional KL-divergence.
Given an event \(\mathcal{A}\), we use \(P[\mathcal{A}]\) to denote the probability of \(\mathcal{A}\) under the probability measure \(P\), while \(\idc{\mathcal{A}}\) is the indicator function of \(\mathcal{A}\) and \(\abs{\mathcal{A}}\) is its cardinality or size.
Given two sets \(\mathcal{A}\) and \(\mathcal{B}\), we use \(\mathcal{A} - \mathcal{B}\) to denote the elements from \(\mathcal{A}\) but not in \(\mathcal{A} \cap \mathcal{B}\).
For a conditional distribution \(P_{Y|X}\) with \(X \overset{P_{Y|X}}{\to} Y\), we use \(P_X \cdot P_{Y|X}\) to denote the distribution of \(Y\) when the input distribution is \(P_X\).
For a Markov chain \(X \overset{P_{Y|X}}{\to} Y \overset{P_{U|Y}}{\to} U\), we use \(P_{Y|X} \cdot P_{U|Y}\) to denote the conditional distribution between \(X\) and \(U\) through the Markov chain.
We write \(a_n \ndot{=} b_n\) if \(\lim_{n \to \infty}\frac{1}{n}\log (a_n/b_n) = 0\) and \(a_n \ndot{\leq} b_n\) if \(\limsup_{n \to \infty}\frac{1}{n} \log (a_n/b_n) \leq 0\).
For a positive integer constant \(N\), we use \([N]\) to denote \(\{1,2,\ldots,N\}\).
Let \(|a|^{+} \triangleq \max\{0,a\}\).
The base of exponential and log functions is chosen as the natural base.
\section{Problem Setup}
\label{section:problem_setting}
We now provide a more detailed description of the information bottleneck (IB) channel.
As illustrated in Fig. \ref{fig:channel_model_oblivious_relaying}, the setting comprises a transmitter, an oblivious relay, and a receiver.
The task is to reliably transmit a message \(M\), uniformly distributed over the message set
\( [e^{nR}]\), to the receiver.

The relay's obliviousness is modeled by assuming the codebook \(\mathcal{C}_n\) used in transmission is drawn at random from a codebook ensemble.
The oblivious relay is cognizant of the random codebook ensemble, but not the exact codebook realization in use.
Let $ \bm{C} = (\bm{X}(1), \bm{X}(2), \ldots, \bm{X}(e^{nR}))$ denote the random codebook ensemble, where a fixed codebook \(\mathcal{C}_n\) is a realization of $ \bm{C}$.
We adopt the constant composition ensemble, where codewords in
$ \bm{C}$ are independently and uniformly distribution over the type class \(\mathcal{T}_n(P_X)\) for a certain type \(P_X \in \mathcal{P}_n(\mathcal{X})\).
Therefore, $ \bm{C}$ is uniformly distributed on the codebook set $\mathcal{T}_n(P_X)^{e^{nR}} $.

\begin{figure}[hbt!]
    \centering
    \scalebox{1}{




\begin{tikzpicture}
        \node at (0,0) [name = source] { $M$ } ;
        \node at (2,0) [draw, rectangle, name = tx, label= above:{\small Transmitter}, minimum height=0.8cm,minimum width=2cm] {$f_n(M, \bm{C})$};
        \node at (5,0) [draw, rectangle, name = channel, label= above:{\small Channel}, minimum height=0.8cm,minimum width=1.8cm] {$P_{Y|X}$};
        \node at (8,0) [draw, rectangle, name = relay, label= above:{\small \smash{Relay} }, minimum height=0.8cm,minimum width=1.8cm] {$\varphi_n$};
        \node at (12,0) [draw, rectangle, name = rx, label= above:{\small Receiver}, minimum height=0.8cm,minimum width=2cm] {$\phi_n(L, \bm{C})$};
        \node at (14,0) [name = output] { $\hat{M}$ } ;
        \node at (7, -1.5) [name = codebook] { $\bm{C}$} ;

        \draw [->] (source.east) -- (tx.west);
        \draw [->] (rx.east) -- (output.west);

        \draw [->] (tx.east) -- (channel.west);
        \draw [->] (channel.east) --  (relay.west);
        \draw [->] (relay.east) -- node[align = center, above]{\small $B$ } (rx.west);
        \draw (codebook.west) -- (2,-1.5);
        \draw [->] (2,-1.5) --(tx.south);
        \draw (codebook.east) -- (12,-1.5);
        \draw [->] (12,-1.5) --(rx.south);
\end{tikzpicture}}
    \captionsetup{justification=centering}
    \caption{Information Bottleneck Channel}
    \label{fig:channel_model_oblivious_relaying}
\end{figure}
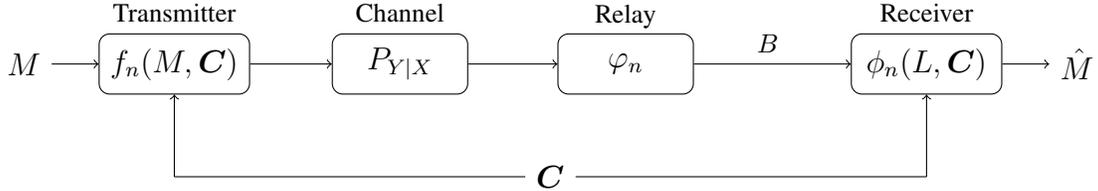

Given a random codebook selection \(\bm{C} = \mathcal{C}_n \), where \( \mathcal{C}_n = ( \bm{x}(1),\ldots,\bm{x}(e^{nR})) \), transmission proceeds as follows.
For a message \(M = m \in [e^{nR}]\), the transmitter assigns the codeword \(\bm{x}(m)\) from  \(\mathcal{C}_n\)
through the mapping \(f_n:[e^{nR}] \times  \mathcal{T}_n(P_X)^{e^{nR}}  \to \mathcal{X}^n\), and sends it over the channel.
The channel between the transmitter and the relay is a DMC \(P_{Y|X}\), i.e.,  the distribution of the channel output \(\bm{Y}\) at the relay follows the law
\begin{equation}
    P_{Y|X}^n(\bm{y} | \bm{x}(m)) = \prod_{i=1}^{n} P_{Y|X}(y_i | x_i(m)).
\end{equation}
The oblivious relay compresses its observation \(\bm{y}\) into \(l = \varphi_n(\bm{y}) \in [e^{nB}]\) and forwards it to the receiver through a noiseless link (i.e. bottleneck) of capacity \(B\),
where \(\varphi_n:\mathcal{Y}^n \to [e^{nB}]\) is the relay's mapping.
With knowledge of which codebook \(\mathcal{C}_n\) has been used by the transmitter, and the index \(l\) forwarded by the relay, the receiver attempts to determine which message has been sent and produces a message estimate \(\hat{M} = \hat{m}\), through a decoding mapping \(\phi_n:[e^{nB}] \times  \mathcal{T}_n(P_X)^{e^{nR}} \to [e^{nR}]\).

It should be noted that for any given message $m \in [e^{nR}]$ and index $l \in [e^{nB}]$, the encoding and decoding mappings $f_n(m,\bm{C})$ and $\phi_n(l,\bm{C})$ are random, due to the random codebook ensemble $\bm{C}$. Conditioned on \(\bm{C} = \mathcal{C}_n\), then $f_n(m,\mathcal{C}_n)$ and $\phi_n(l,\mathcal{C}_n )$ reduce to standard deterministic encoding and decoding rules.

The IB channel with bottleneck \(B\) will be written as \((P_{Y|X}, B)\).
The mapping vector \((f_n, \varphi_n, \phi_n)\) as described above is called an \((n, R, B)\)-code for the IB channel \((P_{Y|X}, B)\).
Given a codebook realization \(\bm{C} = \mathcal{C}_n\),
the decoding error probability of message \(m\) is defined as
\begin{equation}
    \lambda_m(n, R, B, \mathcal{C}_n) \triangleq \P \{ \hat{M} \neq M | M  = m, \bm{C} = \mathcal{C}_n\} \qquad \forall m \in [e^{nR}],
\end{equation}
where $\hat{M} = \phi_n \big( \varphi_n(\bm{Y}) ,\mathcal{C}_n \big)$.
The average decoding error probability over messages under \(\mathcal{C}_n\) is
\begin{equation}
    \bar{\lambda} (n, R, B, \mathcal{C}_n) \triangleq \frac{1}{e^{nR}}\sum_{m=1}^{e^{nR}}\lambda_m(n, R, B, \mathcal{C}_n).
\end{equation}
Since the relay is oblivious to the codebook realization \(\bm{C} = \mathcal{C}_n\), it instead seeks the compressor \(\varphi_n\) that minimizes the average decoding error probability over the entire random ensemble \(\bm{C}\).
Thus, the performance of an \((n, R, B)\)-code is measured through its ensemble-average decoding error probability
\begin{equation}
    \bar{\lambda}(n, R, B) \triangleq \E [ \bar{\lambda}(n, R, B, \bm{C}) ].
\end{equation}
We say that the rate \( R \) is achievable under constant composition codes if there exists a sequence of \((n, R, B)\)-codes such that $ \bar{\lambda}(n, R, B) \to 0$ as $n \to \infty$.
The \emph{capacity} \(C(B)\) is defined as the supremum of all achievable rates $R$  under constant composition codes.

Besides capacity, we are also interested in the exponential decay rate of \(\bar{\lambda}(n, R, B)\) for  \(R < C(B)\).
For the IB channel \((P_{Y|X}, B)\), the maximum achievable error exponent \(E(R, B)\), i.e., its reliability function, is the maximum \(\beta \geq 0\) for which there exists a sequence of \((n, R, B)\)-codes such that
\begin{equation}
    \liminf_{n \to \infty} -\frac{1}{n} \log \bar{\lambda}(n, R, B) \geq \beta, \quad \text{where} \ R < C(B).
\end{equation}
In this work, we will characterize the capacity \(C(B)\) under constant composition codes as well as establish lower and upper bounds for the reliability function \(E(R,B)\).
\begin{remark}
We may also define the ensemble-average error probability for message $m$ as
\begin{equation}
    \lambda_m(n, R, B) \triangleq \E [ \lambda_m(n, R, B, \bm{C}) ].
\end{equation}
which we use further on in the paper. It is easy to see that $ \bar{\lambda}(n, R, B) =  \frac{1}{e^{nR}}\sum_{m=1}^{e^{nR}} \lambda_m(n, R, B)$.
\end{remark}
\section{Main Results and Discussions}
\label{sec:main_results}
\subsection{Achievable Error Exponent and Rate}
We establish an achievable error exponent under constant composition codes, i.e.,  a lower bound for \(E(R, B)\).
To this end, consider an arbitrary auxiliary alphabet \(\mathcal{U}\) and define
\begin{align}
    E_{\textnormal{r}}(R, B, P_X) & \triangleq \min_{Q_Y } \max_{P_{U|Y}} \min_{ \substack{Q_{X|YU}: \\ Q_X = P_X} }  D(Q_{Y|X}  \| P_{Y|X} | P_X) + I_Q(X;U|Y) + \nonumber \\
    & \hspace{6cm} \big | I_Q(X;U) -R - |I_Q(Y;U) - B |^{+}  \big| ^{+}, \label{eq:achievability_definition_E_r_R_B_P_X}
\end{align}
where the inner minimization is over all \(Q_{X|YU}\) such that the joint distribution \(Q_{XYU} = Q_Y \times P_{U|Y} \times Q_{X|YU}\) satisfies \(Q_X = P_X\).
An interpretation of \(E_{\textnormal{r}}(R, B, P_X)\) is provided following the next theorem.
\begin{theorem}
    \label{thm:lower_bound}
    For the IB channel \((P_{Y|X}, B)\), we have
    \begin{equation}
        E(R,B) \geq \max_{P_X} E_{\textnormal{r}}(R,B ,P_X).
    \end{equation}
\end{theorem}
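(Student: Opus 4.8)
The plan is to build an $(n,R,B)$-code from three ingredients --- a constant composition codebook on a type class $\mathcal{T}_n(P_X)$, a relay compressor designed via the type covering lemma, and a generalized $\alpha$-decoder at the receiver --- and then bound the ensemble-average error probability $\bar{\lambda}(n,R,B)$ by a joint-type decomposition. Fix $P_X \in \mathcal{P}_n(\mathcal{X})$ approximating the maximizer of $E_{\textnormal{r}}(R,B,\cdot)$; by a standard continuity argument the loss from restricting to types vanishes as $n \to \infty$. The transmitter simply sends the codeword $\bm{X}(m)$ associated with the message.

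\emph{Relay.} For each possible type $Q_Y \in \mathcal{P}_n(\mathcal{Y})$ of the observation $\bm{y}$, and each test channel $P_{U|Y}$, the type covering lemma furnishes a family of ``centers'' $\bm{u} \in \mathcal{U}^n$ whose conditional type classes cover $\mathcal{T}_n(Q_Y)$, with the number of centers $\dot{=} e^{n I_Q(Y;U)}$ (mutual information under the joint induced by $Q_Y$ and $P_{U|Y}$). Since there are only polynomially many types, the relay can prepare such coverings for all of them at sub-exponential cost; upon observing $\bm{y}$ it reads its type $Q_Y$, selects the $P_{U|Y}$ maximizing the resulting exponent for that type (this realizes the inner $\max_{P_{U|Y}}$ of $E_{\textnormal{r}}$), and forwards an index identifying a covering center. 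When $I_Q(Y;U) > B$, several centers share a common index, so each index $l$ carries a list of $\dot{=} e^{n |I_Q(Y;U)-B|^{+}}$ candidate centers.

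\emph{Decoder and error decomposition.} Given $(l,\bm{C})$, the $\alpha$-decoder scores each message $m'$ by the largest value --- over centers $\bm{u}$ attached to index $l$, and over channel outputs the relay would have mapped to $\bm{u}$ --- of the metric $\alpha$ evaluated on the joint type of $\bm{X}(m')$ with such an output, and decodes to the top-scoring message. An error for the transmitted $m$ requires some $m' \neq m$ to match or exceed $m$'s score. Condition on $\bm{X}(m) = \bm{x} \in \mathcal{T}_n(P_X)$ and split the error probability according to the joint type $Q_{XYU}$ of $\bm{x}$, the channel output $\bm{Y}$, and the corresponding reconstruction; here $Q_X = P_X$ holds automatically and $Q_Y$ is the observation type, matching the constraint $Q_{XYU} = Q_Y \times P_{U|Y} \times Q_{X|YU}$, $Q_X = P_X$ in \eqref{eq:achievability_definition_E_r_R_B_P_X}. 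For a fixed $Q_{XYU}$ three factors appear: (i) the probability that the DMC takes the type-$P_X$ codeword to an output of conditional type $Q_{Y|X}$ is $\dot{=} e^{-n D(Q_{Y|X}\|P_{Y|X}|P_X)}$; (ii) an estimate of the intersections of conditional type classes controls how the deterministic relay map distributes the relevant outputs among centers and which joint types with $\bm{x}$ arise, contributing the factor $e^{n I_Q(X;U|Y)}$; (iii) a union bound over the $e^{nR}$ codewords and the $\dot{=} e^{n|I_Q(Y;U)-B|^{+}}$ centers per bin, together with $\P\{ \bm{X}(m') \in \mathcal{T}_n(Q_{X|U}|\bm{u}) \} \dot{=} e^{-n I_Q(X;U)}$ for a competing codeword, gives $\dot{=} e^{-n | I_Q(X;U) - R - |I_Q(Y;U)-B|^{+} |^{+}}$. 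Multiplying, then taking the worst $Q_{XYU}$, the relay's best $P_{U|Y}$, the channel's worst $Q_Y$, and finally optimizing over $P_X$, yields $\bar{\lambda}(n,R,B) \dot{\leq} e^{-n \max_{P_X} E_{\textnormal{r}}(R,B,P_X)}$, which is the claimed bound.

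\emph{Main obstacle.} The crux is item (ii): the intricate counting of intersections of conditional type classes required to account faithfully for the relay's deterministic compressor inside the exponent. One must show that the covering supplied by the type covering lemma is ``balanced'' enough relative to an arbitrary codeword $\bm{x}$ that the joint-type split is tight and yields precisely the $I_Q(X;U|Y)$ term rather than a looser quantity; this is the most delicate part of the analysis. A secondary difficulty is threading the generalized $\alpha$-decoder through the whole argument, so that the bound specializes to the matched decoding rule (recovering the stated $E_{\textnormal{r}}$) and, for other choices of $\alpha$, to mismatched decoding.
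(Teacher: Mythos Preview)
Your plan matches the paper's proof: type covering at the relay with a per-output-type choice of $P_{U|Y}$, binning of the covering centers to meet the bottleneck rate, an $\alpha$-decoder, and the intersection-of-conditional-type-classes estimate you correctly flag as the crux. One correction: the paper's decoder evaluates the metric on the joint type of $(\bm{X}(m'),\bm{u})$, not on $(\bm{X}(m'),\bm{y})$ via enumeration of channel outputs mapped to $\bm{u}$ --- your item (iii) already uses the $(X,U)$ type, so just align the decoder description with it; also, the intersection bound by itself yields $I_Q(Y;U|X)$, and the $I_Q(X;U|Y)$ term only appears after summing over covering centers (contributing $-I_Q(Y;U)$) and averaging over the transmitted codeword (contributing $I_Q(X;U)$) through the identity $I_Q(X;U)-I_Q(Y;U)+I_Q(Y;U|X)=I_Q(X;U|Y)$.
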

\begin{proof}
    See Section \ref{sec:lower_bound_proof}.
\end{proof}

We now briefly discuss the coding scheme employed to establish Theorem \ref{thm:lower_bound}, and provide some insights into the expression of \(E_{\textnormal{r}}(R, B, P_X)\).
The relay uses a compress-forward scheme based on \emph{type covering}, where each output type class \(\mathcal{T}_n(Q_Y)\) at the relay is covered using roughly \(e^{nI(Q_Y, P_{U|Y})}\) sequences from \(\mathcal{U}^n\) for some conditional type \(P_{U|Y}\).
Since the rate between the relay and the receiver is limited to \(B\), if \(I(Q_Y, P_{U|Y}) > B\), we partition the \(e^{nI(Q_Y, P_{U|Y})}\) sequences into \(e^{nB}\) bins  with bin size
\(e^{n|I(Q_Y, P_{U|Y}) - B |^{+}}\) and the relay forwards the bin index.
Note that \(P_{U|Y}\) can vary for different type classes \(\mathcal{T}_n(Q_Y)\).

Given a forwarded bin index, the receiver searches through all pairs of codewords and bin sequences from the codebook and the bin, and chooses a pair \((\bm{x}(m), \bm{u})\) that maximizes the empirical mutual information, i.e.,
MMI decoding.
This leads to the occurrence of \(I_Q(X;U) - R - |I_Q(Y;U) - B |^{+} \) in \(E_{\textnormal{r}}(R, B, P_X)\), reflecting the number of codeword-sequence pairs that can lead to an error, i.e., \(e^{n(R+ |I_Q(Y;U) - B |^{+})}\), and their  probability under the random ensemble, i.e., \(e^{-nI_Q(X;U)}\).

The conditional mutual information term \(I_Q(X;U|Y)\) in \(E_{\textnormal{r}}(R, B, P_X)\)
reflects the performance of the compress-forward strategy under the random codebook ensemble, i.e., it captures the correlation between the transmitted codeword \(X^n\) and its compress-forward sequence \(U^n\).
The more correlation between the two, i.e., the more informed the receiver is, the less likely the receiver will make a decoding error by deciding a different codeword is transmitted.
It is conditioned on \(Y\) since the relay has the knowledge of channel output \(Y^n\).
As for the sandwiched maximization over \(P_{U|Y}\), this reflects the fact that \(P_{U|Y}\) can be separately optimized for every output type class \(\mathcal{T}_n(Q_Y)\).

As a consequence of Theorem \ref{thm:lower_bound}, we obtain the following achievable rate.
\begin{corollary}
    \label{cor:achievable_rate}
    For the IB channel \((P_{Y|X}, B)\), we have
    \begin{equation}
        C(B) \geq \max_{P_X, P_{U|Y}} I(X;U) \qquad \text{s.t.} \quad I(Y;U) \leq B,
    \end{equation}
    where \(X \overset{P_{Y|X}}{\to} Y \overset{P_{U|Y}}{\to} U\) forms a Markov chain.
\end{corollary}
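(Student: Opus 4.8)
The plan is to derive the corollary directly from Theorem~\ref{thm:lower_bound} by showing that the exponent $E_{\textnormal{r}}(R,B,P_X)$ is strictly positive for a suitable input type whenever $R$ lies below the claimed rate. Fix any $R < \max_{P_X,P_{U|Y}}\{\,I(X;U): I(Y;U)\le B\,\}$, with $X\overset{P_{Y|X}}{\to}Y\overset{P_{U|Y}}{\to}U$; by definition of the supremum there is a pair $(P_X^\star,P_{U|Y}^\star)$ with $I(X;U) > R$ and $I(Y;U)\le B$ under the induced joint law $P_{XYU}^\star$, whose marginals we denote $P_Y^\star = P_X^\star\cdot P_{Y|X}$, $P_{XY}^\star$, $P_{YU}^\star$. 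If we can show $E_{\textnormal{r}}(R,B,P_X^\star) > 0$, then Theorem~\ref{thm:lower_bound} gives $E(R,B)\ge E_{\textnormal{r}}(R,B,P_X^\star) > 0$, hence $\bar\lambda(n,R,B)\to 0$ exponentially, so $R$ is achievable and $R\le C(B)$; letting $R$ approach the supremum yields the corollary.

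To lower-bound $E_{\textnormal{r}}(R,B,P_X^\star)$ I would forgo the sandwiched maximization over $P_{U|Y}$ and plug in the single conditional law $P_{U|Y}=P_{U|Y}^\star$ for \emph{every} output type $Q_Y$. This gives
\begin{equation}
    E_{\textnormal{r}}(R,B,P_X^\star)\;\ge\;\min_{Q_{XYU}}\;D(Q_{Y|X}\|P_{Y|X}\,|\,P_X^\star) + I_Q(X;U|Y) + \big|\,I_Q(X;U) - R - |I_Q(Y;U)-B|^{+}\,\big|^{+},
\end{equation}
the minimum being over all $Q_{XYU}\in\mathcal{P}(\mathcal{X}\times\mathcal{Y}\times\mathcal{U})$ with $Q_X = P_X^\star$ and $Q_{U|Y}=P_{U|Y}^\star$. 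This set is nonempty (it contains $Q_Y\times P_{U|Y}^\star\times P_X^\star$ for every $Q_Y$, in particular $P_{XYU}^\star$) and compact, and the objective is continuous and nonnegative, so the minimum is attained at some $Q_{XYU}$.

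Suppose for contradiction this minimum equals $0$. Then all three nonnegative summands vanish at $Q_{XYU}$. From $D(Q_{Y|X}\|P_{Y|X}\,|\,P_X^\star) = 0$ together with $Q_X = P_X^\star$ we get $Q_{XY} = P_{XY}^\star$, hence $Q_Y = P_Y^\star$; combined with $Q_{U|Y}=P_{U|Y}^\star$ this forces $Q_{YU} = P_{YU}^\star$, so $I_Q(Y;U) = I(Y;U)\le B$ and $|I_Q(Y;U)-B|^{+} = 0$. From $I_Q(X;U|Y) = 0$ the chain $X-Y-U$ is Markov under $Q$, which with $Q_{XY}=P_{XY}^\star$ and $Q_{U|Y}=P_{U|Y}^\star$ yields $Q_{XYU} = P_{XYU}^\star$, hence $I_Q(X;U) = I(X;U)$. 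The third summand then equals $|I(X;U)-R|^{+} > 0$, a contradiction. Therefore the minimum is strictly positive, so $E_{\textnormal{r}}(R,B,P_X^\star) > 0$ and the corollary follows.

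The argument is essentially routine; the one point requiring care is legitimizing the $Q_Y$-independent (and hence generally suboptimal) substitution $P_{U|Y}=P_{U|Y}^\star$ while still extracting a strictly positive bound. This works precisely because the divergence term forces any candidate minimizer of value $0$ to have $Q_Y = P_Y^\star$, which returns us to the true joint law where $I(X;U) > R$. Since $E_{\textnormal{r}}$ is defined for a fixed auxiliary alphabet $\mathcal{U}$, no cardinality bound on $\mathcal{U}$ is needed for this deduction.
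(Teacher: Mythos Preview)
Your argument is correct and follows the same overall strategy as the paper: lower-bound $E_{\textnormal{r}}(R,B,P_X^\star)$ by fixing $P_{U|Y}=P_{U|Y}^\star$ for every $Q_Y$, then show the resulting single minimum is strictly positive. The difference lies in how positivity is established. The paper invokes the identity $|a|^+=\max_{\rho\in[0,1]}\rho a$, rewrites the condition ``exponent $>0$'' as an inequality in $R$ parameterized by $\rho$, and then argues that the min--max over $(Q_{XYU},\rho)$ collapses to the Markov law $P_{XYU}^\star$, recovering exactly $R<R_0$. You instead argue by contradiction directly on the three nonnegative summands, forcing $Q_{XYU}=P_{XYU}^\star$ and reading off $|I(X;U)-R|^+>0$. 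Your route is a bit more elementary and avoids the $\rho$-parameterization; the paper's route makes the connection to the rate threshold $R_0$ explicit as an equivalence. One minor technical point: the divergence term is only lower semicontinuous (it can be $+\infty$), so ``continuous'' is slightly imprecise, but lower semicontinuity together with compactness and nonnegativity is all you need for the infimum to be attained, so the contradiction argument stands.
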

\begin{proof}
    See Section \ref{sec:proof_achievable_rate_corollary}.
\end{proof}
Corollary \ref{cor:achievable_rate} shows that the IB channel capacity under the IID ensemble in \eqref{eq:introduction_capacity} is also achievable with the constant composition ensemble, i.e., $C(B) \geq C_{\mathrm{IID}}(B)$ which is perhaps not surprising.
\begin{remark}[Mismatched decoding]
The proof of Theorem \ref{thm:lower_bound} is established under the generalized decoder, known as the \(\alpha\)-decoder \cite{csiszarGraphDecompositionNew1981}.
By specializing the generalized decoder, we obtain an achievable error exponent for the oblivious relaying setting under a mismatched decoding rule, and recover the LM-rates previously derived in \cite{dikshteinMismatchedObliviousRelaying2023}.
See Theorem \ref{thm:lower_bound_mismatch} and Corollary \ref{cor:lower_bound_rate_mismatch} in Section \ref{sec:lower_bound_error_exponent_mistmathced_decoder}.
\end{remark}
\subsection{Converse}
Having shown that $C(B) \geq C_{\mathrm{IID}}(B)$, we now address the question of whether $C(B)$ can be strictly greater than $C_{\mathrm{IID}}(B)$. We believe that this is not obvious or immediate for the following reasons. It has been shown in Gaussian settings that achievable rates are improved by using codebooks with some structure, e.g., BPSK instead of Gaussian ensembles \cite{sanderovichCommunicationDecentralizedProcessing2008}.
The intuition is that structure enables the oblivious relay to perform useful pre-processing, e.g., demodulation.
In DMC settings, constant composition ensembles have higher-order structure compared to their IID counterparts and result in better rates under, e.g., mismatched decoding rules \cite{scarlettInformationtheoreticFoundationsMismatched2020}.
It is therefore desirable to investigate whether constant composition codes are still capable of this for oblivious relaying.
In the following result, we answer this question in the negative.
\begin{theorem}
    \label{thm:weak_converse_constant_composition_ensemble}
    The capacity of the IB channel \((P_{Y|X}, B)\) under the constant composition ensemble is
    \begin{equation}
        C(B) = \max_{P_X, P_{U|Y}} I(X;U) \qquad \text{s.t.} \quad I(Y, U) \leq B, \label{eq:characterization_capacity_IB_channel}
    \end{equation}
    where \(X \overset{P_{Y|X}}{\to} Y \overset{P_{U|Y}}{\to} U\) forms a Markov chain and \(\abs{\mathcal{U}} \leq \abs{\mathcal{Y}} + 1\).
\end{theorem}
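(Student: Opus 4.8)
Since Corollary~\ref{cor:achievable_rate} already gives the achievability direction, I only need the converse $C(B)\le\max_{P_X,P_{U|Y}:\,I(Y;U)\le B}I(X;U)$; the cardinality bound $|\mathcal U|\le|\mathcal Y|+1$ then follows from the usual support-lemma (Fenchel--Eggleston--Carath\'eodory) argument applied to this single-letter program, whose objective and single constraint depend on $P_{U|Y}$ only through $P_Y$ and the two averaged quantities $\E_U H(Y|U)$ and $\E_U H(X|U)$. For the converse I would run a weak converse. Fix an $(n,R,B)$-code with $\bar\lambda(n,R,B)\to0$ and write $M$ for the message, $\bm C$ for the codebook, $\bm X=\bm X(M)$ for the sent codeword, $\bm Y$ for the channel output, $L=\varphi_n(\bm Y)\in[e^{nB}]$ for the relay index, and $\hat M=\phi_n(L,\bm C)$. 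Since $M\perp\bm C$ and $\hat M$ is a function of $(L,\bm C)$, Fano gives $nR=H(M|\bm C)\le I(M;L|\bm C)+n\epsilon_n$ with $\epsilon_n\to0$. Because $\bm Y$ depends on $(M,\bm C)$ only through $\bm X(M)$, the law of $L$ given $(\bm X(M),\bm C)$ depends only on $\bm X(M)$, and data processing yields $I(M;L|\bm C)\le I(\bm X(M);L)$. So the problem reduces to bounding $\tfrac1nI(\bm X;L)$, where now $\bm X$ is uniform on $\mathcal T_n(P_X)$, $\bm Y$ is its DMC output, and $L$ is an \emph{arbitrary} function of $\bm Y$ with $e^{nB}$ values.

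The goal is $\tfrac1nI(\bm X;L)\le\max_{P_{U|Y}:\,I(Y;U)\le B}I(X;U)+o(1)$ uniformly in $\varphi_n$. The first ingredient is a handful of structural facts showing that, for this purpose, the constant composition ensemble behaves like the i.i.d.\ one: since $(\bm X,\bm Y)$ is invariant under coordinate permutations, $\bm Y$ is uniform within each type class, the one-letter marginals are $X_i\sim P_X$, $Y_i\sim P_Y:=P_X\cdot P_{Y|X}$ and $X_i|Y_i\sim P_{X|Y}$, and $\hat P_{\bm Y}$ concentrates exponentially on $P_Y$ (Hoeffding, conditionally on $\bm X$, whose composition is fixed). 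Together with $\tfrac1n\log|\mathcal T_n(P_X)|\to H(P_X)$, type-uniformity plus concentration give $\tfrac1nH(\bm Y)\to H(P_Y)$, hence $\tfrac1nI(\bm X;\bm Y)\to I(X;Y)$ and $\tfrac1nH(\bm X|\bm Y)\to H(X|Y)$. Writing $I(\bm X;L)=I(\bm X;\bm Y)-I(\bm X;\bm Y|L)$, it then remains to show $\tfrac1nI(\bm X;\bm Y|L)\ge\min_{X\to Y\to U:\,I(Y;U)\le B}I(X;Y|U)-o(1)$.

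For this I would single-letterize $I(\bm X;\bm Y|L)=\sum_iI(\bm X;Y_i|Y^{i-1},L)\ge\sum_iI(X_i;Y_i|U_i)$ with $U_i:=(Y^{i-1},L)$, and time-share with $Q$ uniform on $[n]$, setting $X=X_Q$, $Y=Y_Q$, $U=(Q,U_Q)$, so that $\tfrac1n\sum_iI(X_i;Y_i|U_i)=I(X;Y|U)$ and $(X,Y)\sim P_X\cdot P_{Y|X}$. Two estimates are needed. (i) $I(Y;U)\le B+o(1)$: $I(Y;U)=\tfrac1n\sum_iI(Y_i;Y^{i-1},L)=\tfrac1n\big[(nH(P_Y)-H(\bm Y))+I(\bm Y;L)\big]\le H(P_Y)-\tfrac1nH(\bm Y)+B=B+o(1)$ by the previous paragraph. (ii) The \emph{approximate} Markov relation $I(X;U|Y)\le o(1)$: since given $Y_i$ the pair $(Y^{i-1},L)$ is a function of $Y_{-i}$, $I(X;U|Y)=\tfrac1n\sum_iI(X_i;Y^{i-1},L|Y_i)\le\tfrac1n\sum_iI(X_i;Y_{-i}|Y_i)\le H(X|Y)-\tfrac1nH(\bm X|\bm Y)=o(1)$. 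Because the chain $X\to Y\to U$ holds only approximately I would not conclude directly, but instead pass to $\tilde P_{XYU}:=P_{YU}\cdot P_{X|Y}$, which forces the chain, keeps $(\tilde X,\tilde Y)\sim P_X\cdot P_{Y|X}$, and has $D(P_{XYU}\|\tilde P_{XYU})=I(X;U|Y)=o(1)$. By Pinsker and the continuity of (conditional) entropy and mutual information in total variation---with moduli depending only on $|\mathcal X|,|\mathcal Y|$, not on the exponentially large alphabet of $U$---one gets $I_{\tilde P}(\tilde Y;\tilde U)\le B+o(1)$ and $I_P(X;Y|U)\ge I_{\tilde P}(\tilde X;\tilde Y|\tilde U)-o(1)=I(X;Y)-I_{\tilde P}(\tilde X;\tilde U)-o(1)\ge I(X;Y)-\max_{I(Y;U)\le B+o(1)}I(X;U)-o(1)$. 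Continuity of the information-bottleneck function in its rate argument then gives $\tfrac1nI(\bm X;L)\le\max_{I(Y;U)\le B}I(X;U)+o(1)$, hence $R\le\max_{P_X,P_{U|Y}:\,I(Y;U)\le B}I(X;U)$.

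The step I expect to be the main obstacle is making the correction terms vanish in the single-letterization. Unlike an i.i.d.\ codeword, a constant composition codeword has globally dependent symbols---any $n-1$ of them determine the last---so the Markov chains used in the classical Wyner--Ahlswede--K\"orner single-letterization genuinely fail, and crude bounding of the residuals loses $\Theta(n)$. The resolution is that this dependence is averaged out by the channel, in the precise sense that $\bm Y$ remains type-uniform with an exponentially concentrating type; establishing $\tfrac1nH(\bm Y)\to H(P_Y)$ (and its consequence $\tfrac1nH(\bm X|\bm Y)\to H(X|Y)$) is exactly what forces all correction terms to be $o(n)$, while the small residual violation of the Markov chain is absorbed by the Pinsker/continuity step.
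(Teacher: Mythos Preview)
Your approach is correct and takes a genuinely different route from the paper's. The paper chooses the auxiliary variable $\tilde U_i=(L,Y^{i-1},X^{i-1})$ (including the past of $\bm X$) and then repairs the failure of the chain $X_i\to Y_i\to\tilde U_i$ by directly analyzing the constant composition conditional $P_{X_i\mid X^{i-1}}$: it proves that for ``typical'' prefixes this conditional is $\delta_n$-close to $P_X$ (via lemmas on marginals, conditionals, and typical subsequences of $P_{X^n}$), and then invokes a replacing lemma to swap $P_{X_i\mid X^{i-1}}$ for $P_X$, after which the standard single-letterization goes through. You instead take $U_i=(L,Y^{i-1})$ (no past of $\bm X$), use the block-level facts $\tfrac1nH(\bm Y)\to H(P_Y)$ and $\tfrac1nH(\bm X\mid\bm Y)\to H(X\mid Y)$ to force $I(X;U\mid Y)=o(1)$, and then project onto the Markov manifold via $\tilde P_{XYU}=P_{YU}P_{X\mid Y}$, exploiting $D(P_{XYU}\|\tilde P_{XYU})=I(X;U\mid Y)$ together with Pinsker and the Fannes-type continuity of $H(X\mid U)$ (whose modulus indeed depends only on $|\mathcal X|$, since $P_U=\tilde P_U$ and one averages the per-$u$ entropy gap using concavity of the binary entropy). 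Your argument is shorter and more conceptual---it reduces the whole issue to the single asymptotic $\tfrac1nH(\bm Y)\to H(P_Y)$---while the paper's hands-on analysis yields finer structural lemmas about the constant composition ensemble that it later reuses verbatim when proving the sphere packing bound.
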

\begin{proof}
    See Section \ref{sec:weak_converse_for_capacity}.
\end{proof}

To establish Theorem \ref{thm:weak_converse_constant_composition_ensemble}, we investigate the marginal and conditional distributions of the constant composition ensemble.
We present several properties of the ensemble, listed in Section \ref{sec:weak_converse_properties_constant_composition_ensemble}.
These properties reveal that the higher-order structures of constant composition codes are weak, and the constant composition ensemble asymptotically behaves the same as the IID ensemble, i.e., codes without structure, as far as the capacity of the information bottleneck channel is concerned.
\subsection{Sphere Packing Bound}
Next, we provide an upper bound for \(E(R,B)\). For this purpose, define
\begin{equation}
    E_{\textnormal{sp}}(R, B, P_X) \triangleq \min_{Q_{Y}} \max_{ \substack{ P_{U|Y}: \\ I(Q_Y, P_{U|Y}) \leq B } }  \min_{ \substack{Q_{Y|X}: \\ P_X \cdot Q_{Y|X} = Q_{Y}, \\ I(P_X, Q_{Y|X} \cdot P_{U|Y} )  \leq R } } \! \! D(Q_{Y|X} \| P_{Y|X} | P_{X}) \label{eq:main_results_definition_E_sp}
\end{equation}
\begin{theorem}
    \label{thm:sphere_packing_bound}
    For the IB channel \((P_{Y|X}, B)\), every sequence of \((n, R, B)\)-codes with codeword composition being \(P_X\) satisfies
    \begin{equation}
        \limsup_{n \to \infty} -\frac{1}{n} \log \bar{\lambda}(n, R, B) \leq E_{\textnormal{sp}}(R, B, P_X),
    \end{equation}
    where \(\abs{\mathcal{U}} \leq \abs{\mathcal{X}}\abs{\mathcal{Y}} + \abs{\mathcal{Y}} + 1\).
    Therefore, we have
    \begin{equation}
        E(R, B) \leq \max_{P_X} E_{\textnormal{sp}}(R, B, P_X).
    \end{equation}
\end{theorem}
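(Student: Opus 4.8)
The plan is to adapt the refined sphere-packing argument of Kelly and Wagner \cite{kellyReliabilitySourceCoding2012} to the IB channel, leveraging the constant-composition converse (Theorem~\ref{thm:weak_converse_constant_composition_ensemble}) as the key ingredient that handles the ``inner'' minimization over $Q_{Y|X}$. Fix a sequence of $(n,R,B)$-codes $(f_n,\varphi_n,\phi_n)$ in which every codeword has composition $P_X$ (this is the hypothesis of the theorem; the $\max_{P_X}$ at the end follows since any code has some composition, and by a standard argument we may pass to a subsequence on which the composition stabilizes). First I would fix an arbitrary ``bad'' output type $Q_Y \in \mathcal{P}_n(\mathcal{Y})$; the idea is to show that conditioned on the channel having produced an output sequence of type $Q_Y$, the receiver cannot reliably decode unless the relay's description is ``good'' for $Q_Y$, and being good for $Q_Y$ costs either more than $B$ bits on the bottleneck or more than $R$ bits of resolvable information at the receiver. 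Concretely, the relay's map $\varphi_n$ induces a partition of $\mathcal{T}_n(Q_Y)$ into at most $e^{nB}$ cells; restricting to this type class, the relay is effectively doing rate-$B$ (or less) compression of a uniform source on $\mathcal{T}_n(Q_Y)$. One then picks, for the bad type $Q_Y$, the conditional type $P_{U|Y}$ realizing the outer $\max$ in \eqref{eq:main_results_definition_E_sp}, subject to $I(Q_Y,P_{U|Y})\le B$: this $P_{U|Y}$ describes the ``best the relay could hope to do'' on that type class.

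The heart of the argument is the inner step: with $Q_Y$ and $P_{U|Y}$ fixed, I would show that if $I(P_X, Q_{Y|X}\cdot P_{U|Y}) \le R$ for some $Q_{Y|X}$ with $P_X\cdot Q_{Y|X}=Q_Y$, then the receiver's error probability is bounded away from zero on the corresponding conditional type class $\mathcal{T}_n(Q_{Y|X}\mid \bm{x}(m))$. This is exactly where the constant-composition converse enters: the induced sub-problem — transmitter has a rate-$R$ constant-composition codebook, channel acts as the (type-)channel $Q_{Y|X}$, relay is oblivious and limited to $B$ bits, receiver wants the message — is an IB channel for which Theorem~\ref{thm:weak_converse_constant_composition_ensemble} says the capacity is $\max_{P_{U|Y}: I(Q_Y,P_{U|Y})\le B} I(P_X, Q_{Y|X}\cdot P_{U|Y})$, so $R$ below this forces failure. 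The subtlety is that the converse as stated is for the full ensemble-average error, so I would need the ``change of measure'' / conditioning argument: the probability that the true channel $P_{Y|X}$ produces an output in the bad conditional type class $\mathcal{T}_n(Q_{Y|X}\mid\bm{x}(m))$ is $\dot{=} e^{-nD(Q_{Y|X}\|P_{Y|X}\mid P_X)}$, and on that event the receiver (which sees only the relay's index and the codebook) is in the sub-problem above and errs with probability $\to 1$ when $R$ is subcritical. Putting these together, $\bar\lambda(n,R,B) \dot{\ge} e^{-nD(Q_{Y|X}\|P_{Y|X}\mid P_X)}$ for the minimizing $Q_{Y|X}$, and then optimizing the adversary's choice of $Q_Y$ (outer $\min$) gives $\limsup -\frac1n\log\bar\lambda \le E_{\textnormal{sp}}(R,B,P_X)$. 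The cardinality bound $|\mathcal{U}|\le |\mathcal{X}||\mathcal{Y}|+|\mathcal{Y}|+1$ comes from a Carathéodory/support-lemma argument applied to the optimization defining $E_{\textnormal{sp}}$, preserving the constraints $P_X\cdot Q_{Y|X}=Q_Y$ (which is $|\mathcal{X}||\mathcal{Y}|$-dimensional in the relevant sense), $I(Q_Y,P_{U|Y})\le B$, $I(P_X,Q_{Y|X}\cdot P_{U|Y})\le R$, and the objective value.

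The main obstacle I anticipate is making the conditioning-on-a-bad-type argument fully rigorous while invoking the \emph{converse} (not just the capacity formula). A weak converse only gives error bounded away from $0$, not $\to 1$; to chain the conditioning argument one wants either a strong converse for the constant-composition IB channel or a careful Fano-type bookkeeping that survives the restriction to a conditional type class of exponentially small probability. Kelly–Wagner circumvent this for the WAK problem by a delicate argument isolating a subcode supported on the bad type; I would expect the analogous step here — extracting, from the given code, a sub-ensemble whose codewords' channel outputs concentrate on $\mathcal{T}_n(Q_{Y|X}\mid\bm{x}(m))$ and then applying the converse to this sub-ensemble — to be the technically heaviest part, requiring that the relay's obliviousness (it cannot adapt its partition to the sub-ensemble) is used crucially to keep the sub-problem a bona fide oblivious-relaying instance. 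A secondary, more routine obstacle is the interchange of the order of the $\min_{Q_Y}$, $\max_{P_{U|Y}}$, $\min_{Q_{Y|X}}$ optimizations with the $n\to\infty$ limit, handled by the usual finiteness of the type sets and continuity of the information quantities.
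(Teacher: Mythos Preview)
Your high-level plan---adapt the Kelly--Wagner refinement and drive it with the constant-composition converse (Theorem~\ref{thm:weak_converse_constant_composition_ensemble})---is exactly what the paper does, and your change-of-measure/conditioning picture is right. But there is a genuine gap in how you propose to obtain the \emph{sandwiched} $\max_{P_{U|Y}}$. You write that you would ``pick the conditional type $P_{U|Y}$ realizing the outer $\max$'' and then, for a $Q_{Y|X}$ with $I(P_X,Q_{Y|X}\cdot P_{U|Y})\le R$, invoke Theorem~\ref{thm:weak_converse_constant_composition_ensemble} to force failure on $\mathcal{T}_n(Q_{Y|X}\mid\bm{x}(m))$. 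That does not work: the converse says the capacity of the test channel $(Q_{Y|X},B)$ is $\max_{P'_{U|Y}:I(Q_Y,P'_{U|Y})\le B}I(P_X,Q_{Y|X}\cdot P'_{U|Y})$, so to force failure you need $R$ to dominate the \emph{maximum} over $P'_{U|Y}$, not merely the value at your hand-picked $P_{U|Y}$. Imposing only the latter, you can at best recover the traditional Haroutunian-style bound $\min_{Q_{Y|X}:\max_{P_{U|Y}}I\le R}D(Q_{Y|X}\|P_{Y|X}\mid P_X)$ (this is precisely Remark~\ref{rem:traditional_sphere_packing_bound}), which is in general strictly weaker than $E_{\textnormal{sp}}$.

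The paper obtains the sandwich the other way around: it does not use Theorem~\ref{thm:weak_converse_constant_composition_ensemble} as a black-box capacity statement but re-runs its \emph{proof} quantitatively through Fano's lower bound. For any test channel $Q_{Y|X}$ one has $\bar\lambda_Q\ge 1-\tfrac{1}{nR}\bigl(I_Q(M;L,\bm C)+1\bigr)$, and the single-letterization of Section~\ref{sec:weak_converse_for_capacity} yields $I_Q(M;L,\bm C)\le n\bigl(I(P_X,Q_{Y|X}\cdot Q_{U|Y})+o(1)\bigr)$ together with $I(Q_Y,Q_{U|Y})\le B+o(1)$, where $Q_{U|Y}$ is \emph{induced by the code}. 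The point is that $P_{U|Y}$ is never chosen by the analyst---it is handed to you by the code via the converse. The sandwiched $\max_{P_{U|Y}}$ then appears because, not knowing which $Q_{U|Y}$ the code produces (only that $I(Q_Y,Q_{U|Y})\le B$), one weakens the lower bound on $\bar\lambda_P$ by minimizing over all admissible $P_{U|Y}$; on the exponent side this becomes the inner $\max$. This route also dissolves the obstacle you anticipated about weak versus strong converses: Fano delivers $\bar\lambda_Q\gtrsim \nu/\sqrt{n}$ (polynomially small, not bounded away from zero), and that already suffices after the divergence-typical-set change of measure---no sub-ensemble extraction is needed. Finally, your support-lemma sketch for the cardinality bound miscounts: the constraint $P_X\cdot Q_{Y|X}=Q_Y$ and the objective $D(\cdot\|\cdot)$ do not involve $U$ and are preserved for free; the $\abs{\mathcal X}\abs{\mathcal Y}$ functions actually come from requiring the KKT stationarity conditions of the inner $\min_{Q_{Y|X}}$ to survive the reduction, and the remaining $\abs{\mathcal Y}+1$ preserve $P_Y$, $H(Y|U)$ and $H(X|U)$.
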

\begin{proof}
    See Section \ref{sec:sphere_packing_bound}.
\end{proof}
To establish Theorem \ref{thm:sphere_packing_bound}, we follow the approach of Kelly and Wagner \cite{kellyReliabilitySourceCoding2012}, developed in the context of the WAK problem, and adapt it to the oblivious relaying problem.
The  Kelly-Wagner approach refines Haroutunian's traditional proof of the sphere packing bound for DMCs \cite{haroutunianBoundsExponentProbability1968} (see also \cite{martonErrorExponentSource1974} and \cite{blahutHypothesisTestingInformation1974}).
In particular, compared to the traditional approach, the refinement can be seen through the sandwiched maximization over \(P_{U|Y}\) in \eqref{eq:main_results_definition_E_sp}.
Note that the converse for the capacity under constant composition codes in Theorem \ref{thm:weak_converse_constant_composition_ensemble} is a cornerstone for establishing the sphere packing bound in Theorem \ref{thm:sphere_packing_bound}.
\subsection{Connections to the WAK Problem}
We now establish a connection between the IB channel and the WAK problem.
Before starting, we first provide a more detailed description of the WAK problem.
Consider a joint pmf \(P_{XY} \in \mathcal{P}(\mathcal{X} \times \mathcal{Y})\).
As seen in Fig. \ref{fig:the_wak_problem}, we have a DMS pair \((X^n, Y^n)\) following the distribution
\begin{equation}
    P_{X^nY^n}(x^n, y^n) = \prod_{i=1}^{n}P_{XY}(x_i, y_i).
\end{equation}
We can interpret \(X^n\) as a source and \(Y^n\) as its side information.
A transmitter observes the source \(X^n\) and describes it to a receiver through an encoder \(f^{\prime}_n: \mathcal{X}^n \to [e^{nR}]\).
A helper observes the side information \(Y^n\) and independently provides its description through another encoder \(\varphi_n': \mathcal{Y}^n \to [e^{nB}]\).
A receiver reconstructs \(\hat{X}^n\) through a decoder \(\phi_n^{\prime}:[e^{nR}] \times [e^{nB}] \to \mathcal{X}^n\) after receiving the two descriptions.

\begin{figure}[hbt!]
    \centering
    \scalebox{1}{




\begin{tikzpicture}
        \node at (0,0) [draw, rectangle,  name = source, label= above:{\small Source },  minimum height=0.8cm, minimum width=1.8cm]  {$X^n$};

        \node at (3,0) [draw, rectangle,  name = channel, label= above:{\small Channel }, minimum height=0.8cm,minimum width=1.8cm] {$P_{Y | X} $};
        \node at (6,0) [draw, rectangle,  name = helper, label= above:{\small \smash{Helper} }, minimum height=0.8cm,minimum width=1.8cm] {$\varphi_n^{\prime}$};
        \node at (10,0) [draw, rectangle,  name = receiver, label= above:{\small Receiver}, minimum height=0.8cm,minimum width=1.8cm] {$\phi_n^{\prime}$};
        \node at (5,-1.5) [draw, rectangle,  name = transmitter, label= above:{\small Transmitter}, minimum height=0.8cm,minimum width=1.8cm] {$f_n^{\prime}$};
        \node at (12,0) [name = output] { $\hat{X}^n$ } ;

        \draw [->] (source.east) -- (channel.west);
        \draw [->] (channel.east) --  (helper.west);
        \draw [->] (helper.east) -- node[align = center, above]{\small $B$ } (receiver.west);
        \draw [->] (0,-1.5) -- (transmitter.west);
        \draw  (0,-1.5) --(source.south);
        \draw (transmitter.east) -- (10,-1.5);
       \draw [->] (10,-1.5) --(receiver.south);
       \draw [->] (receiver.east) -- (output.west);

\end{tikzpicture}}
    \captionsetup{justification=centering}
    \caption{WAK Problem}
    \label{fig:the_wak_problem}
\end{figure}

We call the mapping vector \((f^{\prime}_n, \varphi_n', \phi^{\prime}_n)\) an \((n, R, B)\)-code for the DMS pair \((X^n, Y^n)\).
The performance of an \((n, R, B)\)-code is measured through the decoding error probability
\begin{equation}
    \lambda^{\prime}(n,R,B) \triangleq \P \{ \hat{X}^n \neq X^n \}.
\end{equation}

We say that rate \(R\) is achievable if there exists a sequence of \((n,R,B)\)-codes such that \(\lambda^{\prime}(n, R, B) \to 0\).
The optimal (i.e. minimum) achievable rate was found in \cite{wynerSourceCodingSide1975,ahlswedeSourceCodingSide1975}  to be equal to \(R_{\text{h}}(B)\) described in \eqref{eq:characterization_minimum_rate_coded_side_information}.
In this work, we are interested in the reliability function (error exponent) \(E_{\text{h}}(R, B)\), that is the maximum \(\beta \geq 0\) for which there exists a sequence of \((n, R, B)\)-codes such that
\begin{equation}
    \liminf_{n \to \infty} -\frac{1}{n} \log \lambda^{\prime}(n, R, B) \geq \beta, \quad \text{where } R > R_{\text{h}}(B).
\end{equation}
It has been observed in \cite{zaidiInformationBottleneckProblems2020} that solving \eqref{eq:characterization_minimum_rate_coded_side_information} is equivalent to solving \eqref{eq:characterization_capacity_IB_channel} (if we ignore the optimization over \(P_X\) in \eqref{eq:characterization_capacity_IB_channel}).
In this work, we will further explore the connections between the two problems.

In particular, we show that the helper in the WAK problem can be viewed as an oblivious relay.
Further, good codes for the WAK problem can be produced by permuting codes developed for the IB channel.
To demonstrate the above connection, we construct a code for the WAK problem by permuting the code developed  for the IB channel in Theorem 1, and show that it attains the best known achievable error exponent of the WAK problem, previously established by Kelly and Wagner \cite[Theorem 1]{kellyReliabilitySourceCoding2012}.
\begin{theorem}
    \label{thm:source_coding_coded_side_information_exponent_lower_bound}
    For the DMS pair \((X^n, Y^n)\), we have
    \begin{align}
        E_{\textnormal{h}}(R, B) & \geq  \min_{ \substack{  Q_{Y} } }  \max_{P_{U|Y}}  \min_{ \substack{Q_{X|YU}: \\ H(Q_X) \geq R } }  D(Q_{XY}  \| P_{XY}) + I_Q(X;U|Y) + \nonumber \\
        & \hspace{7cm} \big | R- H_Q(X|U) - |I_Q(Y;U) - B |^{+}  \big| ^{+}.
    \end{align}
\end{theorem}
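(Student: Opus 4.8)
The plan is to realize the helper in the WAK problem as an oblivious relay, and to build a good WAK code by \emph{permuting} the constant-composition IB-channel code from the proof of Theorem~\ref{thm:lower_bound}. I would first reduce to a single source type: the transmitter describes $Q_X := \hat{P}_{X^n}$ in an $O(\log n)$-nat prefix, after which only sequences of one type need be handled. If $H(Q_X) < R$, then $|\mathcal{T}_n(Q_X)| \dot{=} e^{nH(Q_X)} < e^{nR}$, so the transmitter indexes $\mathcal{T}_n(Q_X)$ exactly with no error, which is why the bound only optimizes over types with $H(Q_X) \ge R$. The remaining contribution of type $Q_X$ is weighted by $P_{X^nY^n}(\mathcal{T}_n(Q_{XY})) \dot{=} e^{-nD(Q_{XY}\|P_{XY})}$, and since $D(Q_{XY}\|P_{XY}) = D(Q_X\|P_X) + D(Q_{Y|X}\|P_{Y|X}|Q_X)$, this is exactly the divergence term of $E_{\textnormal{r}}$ \emph{plus} the extra $D(Q_X\|P_X)$ paid because the composition (here the source type) is no longer prescribed; the outer optimization then also handles atypical $Q_X$, where $D(Q_X\|P_X) > 0$ keeps the exponent positive even when the induced IB rate exceeds capacity.

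For the construction, fix $Q_X$ with $H(Q_X) \ge R$ and set the IB message rate $R_{\textnormal{IB}} := H(Q_X) - R \ge 0$. From a typical realization of the random ensemble underlying Theorem~\ref{thm:lower_bound} I would take a fixed constant-composition codebook $\mathcal{C}_n \subseteq \mathcal{T}_n(Q_X)$ of size $e^{nR_{\textnormal{IB}}}$ whose conditional-type intersection counts are near their ensemble averages and whose error under the oblivious type-covering relay and MMI decoder over $(P_{Y|X},B)$ is $\dot{\leq} e^{-nE_{\textnormal{r}}(R_{\textnormal{IB}},B,Q_X)}$. Since $|\mathcal{T}_n(Q_X)| / |\mathcal{C}_n| \dot{=} e^{nR}$, the extended covering lemma supplies permutations $\pi_1,\dots,\pi_K$ of $[n]$ with $K \dot{=} e^{nR}$ such that $\bigcup_k \pi_k(\mathcal{C}_n) \supseteq \mathcal{T}_n(Q_X)$ with subexponential overlap, so the copies $\pi_k(\mathcal{C}_n)$ nearly partition $\mathcal{T}_n(Q_X)$. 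On observing $x^n$, the WAK transmitter sends any $k$ with $x^n \in \pi_k(\mathcal{C}_n)$ (rate $\to R$); the helper, oblivious to $\mathcal{C}_n$ and the permutations, runs the relay of Theorem~\ref{thm:lower_bound}, type-covering $\mathcal{T}_n(\hat{P}_{Y^n})$ by $U^n$-sequences (with $P_{U|Y}$ chosen per $Y^n$-type and signalled in a negligible prefix), binning into $e^{nB}$ bins, and forwarding the bin index $L$; and the decoder, given $(k,L)$, knows $x^n \in \pi_k(\mathcal{C}_n)$ and, by permutation-invariance of empirical mutual information, reduces to IB decoding with codebook $\mathcal{C}_n$ and candidate set $\pi_k^{-1}(\mathcal{B}'_L)$, outputting $\pi_k$ applied to the decoded codeword.

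For the error accounting, the near-partition property makes the contribution of type $Q_X$ equal, up to subexponential factors, to $e^{-nD(Q_X\|P_X)}$ times the average, over $k \in [K]$ and $c \in \mathcal{C}_n$, of the WAK error of the source sequence $\pi_k(c)$; so it is enough to show that for each fixed $k$ the average over $c$ is $\dot{\leq} e^{-nE_{\textnormal{r}}(R_{\textnormal{IB}}, B, Q_X)}$ — essentially the IB ensemble error — and then substitute. For $x^n = \pi_k(c)$ with side information $y^n$, the sequence $\pi_k^{-1}(y^n)$ is distributed as a channel output of $c$; the helper's covering sequence $u^{\star}$ for $y^n$ lies in $\mathcal{B}'_L$, so $v^{\star} := \pi_k^{-1}(u^{\star}) \in \pi_k^{-1}(\mathcal{B}'_L)$ is jointly typical with $\pi_k^{-1}(y^n)$, and the conditional type of $c$ given $(\pi_k^{-1}(y^n), v^{\star})$ is the one produced by the covering step — exactly the configuration analysed for Theorem~\ref{thm:lower_bound}. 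The error then splits into covering failures (negligible by the type-covering lemma); the event that the covering sequence is poorly aligned with $x^n$, contributing $I_Q(X;U|Y)$ exactly as there; and a competing pair $(\pi_k(\tilde{c}),\tilde{u}^n)$ with $\tilde{c}\neq c$ whose empirical mutual information beats the true pair's — of which there are $\dot{=} e^{nR_{\textnormal{IB}}} \cdot e^{n|I_Q(Y;U)-B|^{+}}$, each occurring with ``probability'' $\dot{=} e^{-nI_Q(X;U)}$ over the random codebook, giving exponent $|I_Q(X;U) - R_{\textnormal{IB}} - |I_Q(Y;U)-B|^{+}|^{+}$. Substituting $R_{\textnormal{IB}} = H(Q_X)-R$ and $I_Q(X;U) = H(Q_X) - H_Q(X|U)$ converts the last term into $|R - H_Q(X|U) - |I_Q(Y;U)-B|^{+}|^{+}$; collecting the divergence weight and optimizing ($\min$ over $Q_Y$, $\max$ over the per-type covering law $P_{U|Y}$, $\min$ over $Q_{X|YU}$ with $H(Q_X)\ge R$) yields the claimed bound, with a routine continuity argument passing from types to distributions. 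Because Theorem~\ref{thm:lower_bound} holds with the generalized $\alpha$-decoder, the same construction with a mismatched metric would give the mismatched WAK exponent and LM rate mentioned in the introduction.

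The hardest part is justifying, uniformly over all $k$, that the average error of $\pi_k(\mathcal{C}_n)$ really matches the IB ensemble error. In the Slepian--Wolf special case — where the decoder sees $Y^n$ directly — one has the exact identity ``WAK error for $\pi_k(c)$ equals IB error for $c$'' simply by relabelling coordinates with $\pi_k^{-1}$. Here this fails: the helper is oblivious to $k$, so its covering set and bins are \emph{not} the $\pi_k$-images of the IB relay's, and $\pi_k^{-1}(\mathcal{B}'_L)$ is a ``wrong'' set of $U^n$-candidates — still correctly sized and still containing a valid covering sequence, but otherwise uncontrolled. Bounding the competing-codeword count then requires controlling $|\mathcal{C}_n \cap \mathcal{T}_n(V_{X|U}\,|\,v^n)|$ uniformly over the $v^n$ produced by these permuted bins, while the permutation $\pi_k$ attached to $x^n$ is correlated with essentially everything. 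This is precisely what the extended covering lemma must deliver: a single random sequence of permutations that simultaneously covers $\mathcal{T}_n(Q_X)$ by the copies $\pi_k(\mathcal{C}_n)$ \emph{and} preserves, for all of the polynomially many conditional types induced by $P_{Y|X}$ and the covering laws, the good conditional-type-counting properties of $\mathcal{C}_n$ — ``simultaneous covering of a type class by several distinct sets with the same permutations''. Establishing this extension of Ahlswede's covering lemma, and running the union bound over all the conditional types it must serve at once, is the technical core; everything else reuses the type-counting machinery behind Theorem~\ref{thm:lower_bound}.
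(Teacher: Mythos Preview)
Your architecture matches the paper's --- reduce to a single source type, handle $H(Q_X)<R$ losslessly, set the IB rate $\tilde R = H(Q_X)-R$, run the helper exactly as the oblivious relay of Section~\ref{sec:lower_bound_proof}, and permute a constant-composition IB codebook to cover $\mathcal{T}_n(Q_X)$ --- and you correctly isolate the obstacle: the helper is oblivious to $k$, so its covering set and bins are not the $\pi_k$-images of the IB relay's. But your resolution of this obstacle diverges from the paper and contains a gap. You fix a single good codebook $\mathcal{C}_n$ first and ask the extended covering lemma to simultaneously (i) cover $\mathcal{T}_n(Q_X)$ by the $\pi_k(\mathcal{C}_n)$ and (ii) preserve the conditional-type intersection counts of $\mathcal{C}_n$ uniformly over all permuted bins. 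The paper's simultaneous covering result (Lemma~\ref{lemma:Ahlswede_covering_lemma_expurgation}, Theorem~\ref{thm:Ahlswede_covering_lemma_ensemble}) does only (i), and for a different collection: the ``several distinct sets'' in its statement are the \emph{codebooks in the whole ensemble} $\mathcal{T}_n(Q_X)^{e^{n\tilde R}}$, and the lemma merely asserts that one permutation sequence makes covering hold for at least half of them. No individual-codebook regularity under permutation is claimed or needed.

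The paper bypasses your hard uniformity problem via an ensemble-permutation-invariance step you did not use. With $\bar\lambda^{(\pi)}(\mathcal{C}_n) := \frac{1}{k}\sum_{i}\bar\lambda(\,\cdot\,,\pi_i[\mathcal{C}_n])$ --- each summand being the IB error of the permuted codebook against the \emph{fixed unpermuted} helper --- one takes the ensemble expectation and uses $\pi_i[\bm C]\stackrel{d}{=}\bm C$ to get $\E[\bar\lambda^{(\pi)}(\bm C)] = \E[\bar\lambda(\,\cdot\,,\bm C)] = \bar\lambda(n,H(Q_X)-R,B)$, the IB ensemble error already bounded in Section~\ref{sec:lower_bound_proof}. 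Thus every permutation is absorbed into the ensemble distribution and no per-permutation control of a fixed codebook is ever required. A double expurgation (at least $\tfrac{2}{3}$ of codebooks have small $\bar\lambda^{(\pi)}$, at least $\tfrac{1}{2}$ satisfy covering by Theorem~\ref{thm:Ahlswede_covering_lemma_ensemble}, and $\tfrac{2}{3}+\tfrac{1}{2}>1$) then exhibits a single $\mathcal{C}_n(Q_X)$ with both properties. Your route --- showing one fixed $\mathcal{C}_n$ retains good conditional-type counts against every permuted bin --- would demand a substantially stronger covering lemma than the paper proves, and you have not indicated how to establish it.
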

\begin{proof}
    See Section \ref{sec:source_coding_coded_side_information_exponent_lower_bound}.
\end{proof}
We discuss a difficulty encountered when producing codes for the WAK problem through permutations.
Ahlswede and Dueck \cite{ahlswedeGoodCodesCan1982} employed Ahlswede's covering lemma to design the encoder \(f^{\prime}_n\) for the SW problem, which is effectively a sequence of permutations.
A key technique in their proof is to adapt the receiver's decoding regions to the permuted codebooks, i.e., the codebook \(f_n^{\prime}(m)^{-1}\) (see equation (31) in \cite{ahlswedeGoodCodesCan1982}).
However, this technique cannot be directly applied to the WAK problem, because here the side information \(Y^n\) is compressed by an oblivious helper that has no knowledge of \(f_n^{\prime}(m)^{-1}\), i.e., the helper cannot adapt its compress-forward strategy to the permuted codebook \(f_n^{\prime}(m)^{-1}\).

To address this issue, we will revisit and extend Ahlswede's covering lemma to show that a type class can be simultaneously covered by several distinct sets using a single sequence of permutations.
This new simultaneous covering result enables us to find a good encoder \(f_n^{\prime}\), i.e., a sequence of permutations, for the WAK problem that can cope with the lack of adaptability at the helper.
The connection between the IB channel and the WAK problem shows that good codes can still be produced through permutations even if the coordination of the permuting process is disrupted at an intermediate node.
\begin{remark}[Mismatched decoding]
Since Theorem \ref{thm:source_coding_coded_side_information_exponent_lower_bound} is obtained through employing the coding scheme developed in Theorem \ref{thm:lower_bound}, by specializing the \(\alpha\)-decoder, we can immediately derive an achievable error exponent and rate for the WAK problem under mismatched decoding rules. As far as we are aware, these have not been derived before.
See Theorem \ref{thm:source_coding_coded_side_information_mismatch_exponent} and Corollary \ref{cor:source_coding_coded_side_information_mismatch_rate} in Section \ref{sec:source_coding_coded_side_information_mismatch}.
\end{remark}
\section{Achievable Error Exponent and Rate}
\label{sec:lower_bound_proof}
In this section, we present a coding scheme for the IB channel and establish an achievable error exponent, leading to proving Theorem \ref{thm:lower_bound} and Corollary \ref{cor:achievable_rate}.
Consider an arbitrary auxiliary alphabet \(\mathcal{U}\).
In the coding scheme we present, the relay and receiver will share a common codebook with codewords selected from the set \(\mathcal{U}^n\).
They use this codebook for compress-forward at the relay and to decode at the receiver. In particular, the relay assigns a codeword \(\bm{u} \in \mathcal{U}^n\) to every received channel output \(\bm{y} \in \mathcal{Y}^n\), while the receiver uses \(\bm{u}\) to decide which message is sent.\footnote{In case the receiver gets the index of the bin containing $\bm{u}$, it will decode $\bm{u}$ and the message jointly.}
To distinguish it from the channel codebook, we call the codebook shared between the relay and receiver the \emph{bottleneck codebook}, and denote it by \(\mathcal{B}_n\).
Loosely speaking, this can also be thought of as a \emph{quantization} codebook.

The bottleneck codebook \(\mathcal{B}_n\) is constructed through the well-known type covering lemma, presented below.
The type covering lemma is originally due to Berger \cite{bergerRateDistortionTheory1971}. The version we adopt here appears in other literature, e.g., \cite[Lemma 3.34]{moserAdvancedTopicsInformation2024}.
For a joint pmf \(Q_{YU}\), we will write \(Q_Y\) and \(Q_U\) for its marginal distributions, as well as \(Q_{Y|U}\) and \(Q_{U|Y}\) for its conditional distributions, when there is no ambiguity.
\begin{lemma}
\label{lemma:type_covering}
For every joint type \(Q_{YU} \in \mathcal{P}_n(\mathcal{Y} \times \mathcal{U})\), there exists a subset \(\mathcal{A}_n \subset \mathcal{T}_n(Q_U)\) with
\begin{equation}
    |\mathcal{A}_n| \ndot{\leq} e^{nI(Q_Y, Q_{U|Y})}
\end{equation}
such that for every \(\bm{y} \in \mathcal{T}_n(Q_Y)\) we can find a \(\bm{u} \in \mathcal{A}_n\) satisfying \(\hat{P}_{\bm{y}  \bm{u}} = Q_{YU}\).
\end{lemma}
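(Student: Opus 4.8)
The plan is to prove the lemma by the classical greedy covering argument. Fix the joint type $Q_{YU} \in \mathcal{P}_n(\mathcal{Y}\times\mathcal{U})$, so that $Q_Y \in \mathcal{P}_n(\mathcal{Y})$, $Q_U \in \mathcal{P}_n(\mathcal{U})$, and $Q_{U|Y}$ is a valid conditional type on $\mathcal{T}_n(Q_Y)$. Say that $\bm{u} \in \mathcal{T}_n(Q_U)$ \emph{covers} $\bm{y} \in \mathcal{T}_n(Q_Y)$ if $\hat{P}_{\bm{y}\bm{u}} = Q_{YU}$, equivalently $\bm{y} \in \mathcal{T}_n(Q_{Y|U} | \bm{u})$. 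I would construct $\mathcal{A}_n$ iteratively: set $\mathcal{S}_0 = \mathcal{T}_n(Q_Y)$ (the not-yet-covered sequences), and having chosen $\bm{u}_1, \dots, \bm{u}_{k-1}$ with leftover set $\mathcal{S}_{k-1}$, pick $\bm{u}_k \in \mathcal{T}_n(Q_U)$ covering the largest number of sequences in $\mathcal{S}_{k-1}$ and set $\mathcal{S}_k = \mathcal{S}_{k-1} \setminus \mathcal{T}_n(Q_{Y|U} | \bm{u}_k)$.

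The quantitative heart of the proof is a double-counting estimate showing that each greedy step removes a fixed exponential fraction of the current leftover set. For any fixed $\bm{y} \in \mathcal{T}_n(Q_Y)$, the number of $\bm{u}$ covering it equals $|\mathcal{T}_n(Q_{U|Y} | \bm{y})|$, and a standard fact of the method of types is that this count depends only on $Q_{YU}$, not on the particular $\bm{y}$. Summing the covering indicator over $\bm{u} \in \mathcal{T}_n(Q_U)$ and $\bm{y} \in \mathcal{S}_{k-1}$ in the two possible orders gives $\sum_{\bm{u} \in \mathcal{T}_n(Q_U)} |\mathcal{T}_n(Q_{Y|U} | \bm{u}) \cap \mathcal{S}_{k-1}| = |\mathcal{S}_{k-1}| \cdot |\mathcal{T}_n(Q_{U|Y} | \bm{y})|$, so by averaging there exists a $\bm{u}_k$ covering at least a fraction $p := |\mathcal{T}_n(Q_{U|Y} | \bm{y})| / |\mathcal{T}_n(Q_U)|$ of $\mathcal{S}_{k-1}$. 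Hence $|\mathcal{S}_k| \leq (1-p)|\mathcal{S}_{k-1}|$, and iterating, $|\mathcal{S}_k| \leq (1-p)^k |\mathcal{T}_n(Q_Y)| \leq e^{-pk}\, e^{nH(Q_Y)}$. The usual bounds on type and conditional-type class sizes give $p \,\dot{=}\, e^{-n(H(Q_U) - H(Q_{U|Y} | Q_Y))} = e^{-nI(Q_Y, Q_{U|Y})}$, up to a factor polynomial in $n$.

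I would then take $k^\star$ to be the smallest integer larger than $nH(Q_Y)/p$; this forces $|\mathcal{S}_{k^\star}| \leq e^{-pk^\star + nH(Q_Y)} < 1$, hence $|\mathcal{S}_{k^\star}| = 0$ because it is a non-negative integer, so $\mathcal{A}_n := \{\bm{u}_1, \dots, \bm{u}_{k^\star}\} \subset \mathcal{T}_n(Q_U)$ covers every $\bm{y} \in \mathcal{T}_n(Q_Y)$ in the required sense. Finally, $|\mathcal{A}_n| \leq k^\star \leq nH(Q_Y)/p + 1$, and since $1/p$ exceeds $e^{nI(Q_Y, Q_{U|Y})}$ by at most a polynomial-in-$n$ factor, all polynomial and constant prefactors are absorbed upon passing to exponential growth rates, which yields the claimed bound $|\mathcal{A}_n| \,\dot{\leq}\, e^{nI(Q_Y, Q_{U|Y})}$ (in fact with only a polynomial-in-$n$ prefactor).

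I do not expect a genuine obstacle here, as this is essentially a textbook result; the one point that needs care is the double-counting identity together with the observation that $|\mathcal{T}_n(Q_{U|Y} | \bm{y})|$ does not depend on the chosen $\bm{y} \in \mathcal{T}_n(Q_Y)$ — this is precisely what keeps the per-step fraction $p$ uniform across steps and lets the geometric-decay bound close. A random-coding variant (draw $N$ i.i.d.\ uniform codewords from $\mathcal{T}_n(Q_U)$, bound the probability that a fixed $\bm{y}$ is missed by $(1-p)^N \leq e^{-Np}$, and union-bound over $\mathcal{T}_n(Q_Y)$) reaches the same conclusion; I would nonetheless present the greedy version, since it makes $|\mathcal{A}_n|$ deterministic and the size bound immediate.
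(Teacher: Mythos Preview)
Your proposal is correct. The paper does not give a self-contained proof but simply defers to the standard references (Csisz\'ar--K\"orner, Lemma~9.1, and Moser, Lemma~3.34), noting only that exact joint types replace typical sets; the referenced proofs use the random-selection argument you describe at the end, while your primary presentation via greedy covering with the double-counting step is the other classical route and is equally valid, yielding the same polynomial-in-$n$ prefactor that is absorbed by $\dot{\leq}$.
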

\begin{proof}
This follows by modifying the proof of \cite[Lemma 9.1]{csiszarInformationTheoryCoding2011}, while considering sequences with the exact joint type instead of jointly typical sequences. See \cite[Lemma 3.34]{moserAdvancedTopicsInformation2024}.
\end{proof}
\subsection{Bottleneck Codebook}
\label{sec:lower_bound_bottleneck_codebook_construction}
The bottleneck codebook \(\mathcal{B}_n\) comprises an array of (sub) codebooks \(\mathcal{B}_n = \{\mathcal{B}_n(Q_{Y})\}_{Q_{Y} \in \mathcal{P}_n(\mathcal{Y})}\), or simply written as \(\{\mathcal{B}_n(Q_{Y})\} \), where \(\mathcal{B}_n(Q_{Y})\) is used for observed channel outputs of type \(Q_{Y} \in \mathcal{P}_n(\mathcal{Y})\).
That is, depending on the type \(Q_{Y}\) of the observed channel output,  the relay adopts different codebooks \(\mathcal{B}_n(Q_{Y})\) for compress-forward.
The bottleneck codebook \(\{\mathcal{B}_n(Q_{Y})\} \) is constructed as follows.
\begin{enumerate}
    \item For every type \(Q_{Y} \in \mathcal{P}_n(\mathcal{Y})\), we select a conditional type \(P_{U|Y} \in \mathcal{P}_n(\mathcal{U} | \mathcal{Y})\). Note that \(P_{U|Y}\) can vary for different \(Q_{Y}\), and hence when necessary, we write \(P_{U|Y}\) as \(P_{U|Y, Q_{Y}}\) to emphasize this dependence. Denote by \(P_{Y|U}\) the reverse conditional type induced by \(Q_Y\) and \(P_{U|Y}\). In the same fashion, we write this as \(P_{Y|U, Q_Y}\) when necessary.
    \item For every pair \((Q_{Y}, P_{U|Y})\), we select a set \(\mathcal{A}_n(Q_Y)\) according to Lemma \ref{lemma:type_covering}, i.e., \(\mathcal{A}_n(Q_Y)\) covers the entire type class \(\mathcal{T}_n(Q_Y)\) under \(P_{U|Y}\).
    \item For every type \(Q_{Y}\), we partition \(\mathcal{A}_n(Q_Y)\) into \(e^{nB}\) subsets (bins) of roughly equal size. The arrangement of elements from \(\mathcal{A}_n(Q_Y)\) into bins is arbitrary. Bins are denoted by \(\mathsf{B}_i\), \(i \in [e^{nB}]\), and
    \begin{equation}
        \abs{\mathsf{B}_i} \ndot{ \leq } e^{n|I(Q_{Y}, P_{U|Y}) - B |^{+} } \qquad \forall i \in [e^{nB}], \label{eq:lower_bound_bin_size}
    \end{equation}
    where the operation \(|a|^{+} \) is introduced due to the possible scenario that the size of \(\mathcal{A}_n(Q_Y)\) is asymptotically less than \(e^{nB}\), i.e., \(I( Q_Y, P_{U|Y} ) < B\). In this case, a bin may contain a single sequence. The codebook is chosen to be the collection of the bins, i.e., \(\mathcal{B}_n(Q_{Y}) = ( \mathsf{B}_1, \mathsf{B}_2, \ldots, \mathsf{B}_{e^{nB}})\).
\end{enumerate}
\subsection{Encoding and Decoding}
\label{sec:lower_bound_encoding_decoding}
Given message \(M = m\) and codebook \(\bm{C} = \mathcal{C}_n\), the transmitter sends codeword \(\bm{x}(m)\) from \(\mathcal{C}_n\).
After receiving a channel output \(\bm{Y} = \bm{y}\), the relay first examines the type of \(\bm{y}\) and determines the bottleneck codebook \(\mathcal{B}_n(\hat{P}_{\bm{y}})\) to be used.
Compress-forward at the relay then proceeds as follows.
\begin{enumerate}
    \item The relay searches through the entire \(\mathcal{B}_n(\hat{P}_{\bm{y}})\) and identifies a codeword \(\bm{u} \in \mathcal{B}_n(\hat{P}_{\bm{y}})\) such that \(\bm{y} \in \mathcal{T}_n(P_{Y|U, \hat{P}_{\bm{y}}} | \bm{u} )\), where we recall that \(P_{Y|U, \hat{P}_{\bm{y}}}\) denotes the reverse conditional type selected for the type \(\hat{P}_{\bm{y}}\) when constructing \(\mathcal{B}_n(\hat{P}_{\bm{y}})\). Since we construct the bottleneck codebooks under Lemma \ref{lemma:type_covering}, the existence of such codeword is guaranteed.
    \item If multiple candidates \(\bm{u}\)  satisfy \(\bm{y} \in \mathcal{T}_n(P_{Y|U, \hat{P}_{\bm{y}}} | \bm{u} )\), the relay selects one of them arbitrarily.
    \item The relay sends the index of the bin that contains \(\bm{u}\), i.e., it sends \(l \in [e^{nB}]\) if \(\bm{u} \in \mathsf{B}_l\).
\end{enumerate}
The relay also describes the type \(\hat{P}_{\bm{y}}\) to the receiver by sending another index besides \(l\).
Since there are at most \((1+n)^{\abs{\mathcal{Y}}}\) possible types \(Q_Y\) (i.e., a polynomial number in \(n\)), including the type index does not break the rate limit \(B\) asymptotically.

With knowledge of \(\hat{P}_{\bm{y}}\), the receiver knows that \(\mathcal{B}_n(\hat{P}_{\bm{y}})\) is used by the relay.
Given a forwarded index \(l\), it also knows that  the codeword \(\bm{u}\) covering the channel output \(\bm{y}\) is from the bin \(\mathsf{B}_l\) inside \(\mathcal{B}_n(\hat{P}_{\bm{y}})\).
Combining this with knowledge of the channel codebook \(\bm{C} = \mathcal{C}_n\), it decides that message \(\hat{m}\) is sent if
\begin{equation}
    \hat{m} = \argmax_{ \bm{x}(m) \in \mathcal{C}_n, \bm{u} \in \mathsf{B}_l } g(\hat{P}_{\bm{x}(m)}, \hat{P}_{\bm{u} | \bm{x}(m)}),
\end{equation}
where \(g: \mathcal{P}(\mathcal{X}\times \mathcal{U}) \to \mathbb{R} \) is a fixed continuous function
known as an \(\alpha\)-decoder \cite{csiszarGraphDecompositionNew1981} or generalized decoder.
In other words, the receiver searches through the entire codebook \(\mathcal{C}_n\) and bin \(\mathsf{B}_l\); identifies the unique pair \((\bm{x}(\hat{m}), \bm{u})\) that maximizes \(g(\hat{P}_{\bm{x}(m)}, \hat{P}_{\bm{u} | \bm{x}(m)})\); and decides \(\hat{m}\) is sent.
Examples of \(g\) include
\begin{equation}
    g(\hat{P}_{\bm{x}(m)}, \hat{P}_{\bm{u} | \bm{x}(m)}) = \sum_{x, u} \hat{P}_{\bm{x}(m) \bm{u}}(x,u) \log q(x, u)
\end{equation}
for some decoding metric \(q(x, u)\), commonly known as  the mismatched decoder under \(q(x,u)\); and
\begin{equation}
    g(\hat{P}_{\bm{x}(m)}, \hat{P}_{\bm{u} | \bm{x}(m)}) = I(\hat{P}_{\bm{x}(m)}, \hat{P}_{\bm{u} | \bm{x}(m)})
\end{equation}
is the maximum empirical mutual information (MMI) decoder.
\subsection{Error Probability}
Suppose that \(M = 1\) and \(\bm{C} = \mathcal{C}_n\), and hence \(\bm{x}(1) \in \mathcal{C}_n\) is sent.
Let the channel output received at the relay be \(\bm{y}\), and thus the bottleneck codebook for compress-forward is \(\mathcal{B}_n(\hat{P}_{\bm{y}})\).
Denote by \(\bm{u}(\bm{y})\) the sequence selected at the relay, i.e., \(\bm{y} \in \mathcal{T}_n(P_{Y|U, \hat{P}_{\bm{y}}} | \bm{u}(\bm{y}) )\).
Let \(l\) be the index forwarded to the receiver, and hence \(\bm{u}(\bm{y}) \in \mathsf{B}_l\) within \(\mathcal{B}_n(\hat{P}_{\bm{y}})\).
Since we use constant composition codes with codeword type \(P_X\), given the index \(l \in [e^{nB}]\), the receiver seeks  \(\bm{x}(\hat{m}) \in \mathcal{C}_n\) and  \(\bm{u} \in \mathsf{B}_l\) that maximize \(g(P_X, \hat{P}_{\bm{u} | \bm{x}(m)})\).
A decoding error occurs if and only if there exists some \(\bm{u}^{\prime} \in \mathsf{B}_l\) and \(\bm{x}(j)\in \mathcal{C}_n\) with \(j \neq 1\) such that
\begin{equation}
    \label{eq:lower_bound_error_exponenet_error_event}
    g(P_X, \hat{P}_{\bm{u}^{\prime} | \bm{x}(j) }  ) \geq \max_{\bm{u} \in \mathsf{B}_l }g(P_X, \hat{P}_{\bm{u} | \bm{x}(1) }  ),
\end{equation}
because the right hand side of \eqref{eq:lower_bound_error_exponenet_error_event} is the maximum value of \(g(P_X, \hat{P}_{\bm{u}^{\prime} | \bm{x}(1) }  )\) over the entire bin \(\mathsf{B}_l\) for \(\bm{x}(1)\).
Due to \(\bm{u}(\bm{y}) \in \mathsf{B}_l\), by relaxing the maximum over the entire bin, if a decoding error occurs at the receiver, then we must have
\begin{equation}
\label{eq:relaxed error event}
    g(P_X, \hat{P}_{\bm{u}^{\prime}| \bm{x}(j) }  ) \geq g(P_X, \hat{P}_{\bm{u}( \bm{y}) | \bm{x}(1) } )
\end{equation}
for some \(\bm{u}^{\prime} \in \mathsf{B}_l\) and \(j \neq 1\).
As a result, we consider a channel output \(\bm{y}\) to be ``erroneous'' if its covering sequence \(\bm{u}(\bm{y})\) and forwarded index \( l = \varphi_n(\bm{y})\) satisfy \eqref{eq:relaxed error event} for some \( \bm{u}^{\prime} \in \mathsf{B}_{\varphi_n(\bm{y})}\) and \(j \neq 1\),
as these include channel outputs at the relay that can possibly lead to a decoding error at the receiver.

We analyze the probability of this relaxed ``error'' event, which naturally provides an upper bound on the true decoding error probability of the coding scheme.
The relaxed ``error'' region of \(\bm{x}(1)\) regarding the other codeword \(\bm{x}(j) \in \mathcal{C}_n\) with \(j \neq 1\) is defined as
\begin{equation}
    \mathcal{Y}^n[\bm{x}(1), \bm{x}(j)] \triangleq \{ \bm{y} \in \mathcal{Y}^n :  \exists \bm{u}^{\prime} \in \mathsf{B}_{\varphi_n(\bm{y})},  g(P_X, \hat{P}_{\bm{u}^{\prime} | \bm{x}(j)}) \geq g (P_X, \hat{P}_{\bm{u}( \bm{y}) | \bm{x}(1) } ) \}. \label{eq:lower_bound_decoding_error_region}
\end{equation}
Thus, for message \(M =1 \), the ensemble-average decoding error probability  is upper bounded as
\begin{align}
    & \lambda_1(n, R, B) \nonumber \\
    & \leq \E_{\bm{C}}\Big[ \sum_{\bm{y} \in \mathcal{Y}^n} P_{Y|X}^n(\bm{y} | \bm{X}(1)) \times \mathbbmss{1} \Big\{ \bm{y} \in \bigcup_{j \neq 1} \mathcal{Y}^n [\bm{X}(1), \bm{X}(j) ] \Big\} \Big] \\
    & = \E_{\bm{X}(1)}\Big[ \sum_{\bm{y} \in \mathcal{Y}^n} P_{Y|X}^n(\bm{y} | \bm{X}(1)) \times \P \Big\{ \bm{y} \in \bigcup_{j \neq 1} \mathcal{Y}^n [\bm{X}(1), \bm{X}(j) ] \Big\} \Big] \\
    & \leq \E_{\bm{X}(1)}\Big[ \sum_{\bm{y} \in \mathcal{Y}^n} P_{Y|X}^n(\bm{y} | \bm{X}(1)) \times \min \Big\{ 1, e^{nR} \times \P \big\{ \bm{y} \in \mathcal{Y}^n [\bm{X}(1), \bm{X}(2) ] \big| \bm{X}(1) \big\}  \Big\}  \Big] \label{eq:lower_bound_error_exponent_two_codewords_error_probability} \\
    & = \E_{\bm{X}(1)}\Big[ \sum_{Q_Y \in \mathcal{P}_n(\mathcal{Y})} \sum_{\bm{y} \in \mathcal{T}_n(Q_Y)} P_{Y|X}^n(\bm{y} | \bm{X}(1)) \times \min \Big\{ 1, e^{nR} \times \P \big\{ \bm{y} \in \mathcal{Y}^n [\bm{X}(1), \bm{X}(2) ] \big| \bm{X}(1) \big\}  \Big\}  \Big], \label{eq:lower_bound_error_exponent_two_codewords_error_probability_conditional_type}
\end{align}
where \eqref{eq:lower_bound_error_exponent_two_codewords_error_probability} follows from the truncated union bound and independent generation of codewords under the same distribution, i.e., for any fixed \(\bm{X}(1) = \bm{x}(1)\), it holds that
\begin{equation}
    \P \Big\{ \bm{y} \in \bigcup_{j \neq 1} \mathcal{Y}^n [\bm{x}(1), \bm{X}(j) ] \Big\} \leq \min \left\{  1, e^{nR} \times \P \left\{ \bm{y} \in \mathcal{Y}^n [\bm{x}(1), \bm{X}(2) ] \right\}  \right\}.
\end{equation}
Given any fixed \(\bm{x}(1)\) and \(\bm{y}\), the probability \(\P \left \{ \bm{y} \in \mathcal{Y}^n [\bm{x}(1), \bm{X}(2) ]  \right \}\) results from the random generation of codeword \(\bm{X}(2)\).
Hence, we see that
\begin{align}
    & \P \left \{ \bm{y} \in \mathcal{Y}^n [\bm{x}(1), \bm{X}(2) ]  \right \} \nonumber \\
    & = \P \{ \bm{X}(2): \exists \bm{u}^{\prime} \in \mathsf{B}_{\varphi_n(\bm{y})},  g(P_X, \hat{P}_{\bm{u}^{\prime} | \bm{X}(2)}) \geq g (P_X, \hat{P}_{\bm{u}( \bm{y}) | \bm{x}(1) } )   \} \\
    & \leq \sum_{\bm{u}^{\prime} \in \mathsf{B}_{\varphi_n(\bm{y})}} \P \{ \bm{X}(2): g(P_X, \hat{P}_{\bm{u}^{\prime} | \bm{X}(2)}) \geq g (P_X, \hat{P}_{\bm{u}( \bm{y}) | \bm{x}(1) } )    \}, \label{eq:lower_bound_error_exponent_two_codeword_union_bound_over_bins}
\end{align}
where \eqref{eq:lower_bound_error_exponent_two_codeword_union_bound_over_bins} is due to the union bound over the bin.
Since \(\bm{X}(2)\) is uniformly distributed over the type class \(\mathcal{T}_n(P_X)\), it follows that for any \(\bm{u}^{\prime} \in \mathsf{B}_{\varphi_n(\bm{y})}\), we have
\begin{align}
    &  \P \{ \bm{X}(2): g(P_X, \hat{P}_{\bm{u}^{\prime} | \bm{X}(2)}) \geq g (P_X, \hat{P}_{\bm{u}( \bm{y}) | \bm{x}(1) } )  \} \nonumber \\
    & = \sum_{ \substack{ Q_{U|X}:  \ P_X \cdot Q_{U|X} = \hat{P}_{\bm{u}^{\prime}},  \\ g(P_X, Q_{U|X}) \geq g (P_X, \hat{P}_{\bm{u}( \bm{y}) | \bm{x}(1) } ) } } \frac{ \sum_{\bm{x} \in \mathcal{T}_n(P_X) } \idc{  \bm{u}^{\prime} \in \mathcal{T}_n(Q_{U|X} | \bm{x}) } }{ |\mathcal{T}_n(P_X)| } \label{eq:lower_bound_error_exponent_uniform_distribution_codeword} \\
    & \leq \sum_{ \substack{ Q_{U|X}:  \  P_X \cdot Q_{U|X} = \hat{P}_{\bm{u}^{\prime}},  \\ g(P_X, Q_{U|X}) \geq g (P_X, \hat{P}_{\bm{u}( \bm{y}) | \bm{x}(1) } ) } } (n+1)^{\abs{\mathcal{X} } } e^{-nI(P_X, Q_{U|X})} \label{eq:lower_bound_error_exponent_reverse_conditional_type} \\
    & \ndot{=} \max_{ \substack{ Q_{U|X}:  \  P_X \cdot Q_{U|X} = \hat{P}_{\bm{u}^{\prime}},  \\ g(P_X, Q_{U|X}) \geq g (P_X, \hat{P}_{\bm{u}( \bm{y}) | \bm{x}(1) } ) } } e^{-n I(P_X, Q_{U|X} ) } \\
    & = e^{-n E_0(P_X, \hat{P}_{\bm{u}( \bm{y}) | \bm{x}(1) } ) }, \label{eq:lower_bound_error_exponent_bound_E_0}
\end{align}
where in \eqref{eq:lower_bound_error_exponent_uniform_distribution_codeword} we only need to consider conditional types \(Q_{U|X}\) such that \(P_X \cdot Q_{U|X} = \hat{P}_{\bm{u}^{\prime}}\), due to the fixed \(\bm{u}^{\prime}\) and constant composition codewords \(\bm{X}(2)\); \eqref{eq:lower_bound_error_exponent_reverse_conditional_type} can be seen from considering the reverse conditional type \(Q_{X|U}\) induced by \(P_X\) and \(Q_{U|X}\); in \eqref{eq:lower_bound_error_exponent_bound_E_0} for any pair \((P_X, P_{U|X})\) we define
\begin{equation}
    E_0(P_X, P_{U|X}) \triangleq \min_{ \substack{ Q_{U|X}:  \ P_X \cdot Q_{U|X} = P_X \cdot P_{U|X}, \\ g(P_X, Q_{U|X}) \geq g(P_X, P_{U|X} ) }  } I ( P_X, Q_{U|X}), \label{eq:lower_bound_error_exponent_generalized_decoder_E_0}
\end{equation}
and notice that \(\hat{P}_{\bm{u}^{\prime}} = \hat{P}_{\bm{u}(\bm{y})} = P_X \cdot \hat{P}_{\bm{u}( \bm{y}) | \bm{x}(1) }\).
Because the upper bound in \eqref{eq:lower_bound_error_exponent_bound_E_0} holds for any \(\bm{u}^{\prime}\in \mathsf{B}_{\varphi(\bm{y})}\), it follows that given \(\bm{y} \in \mathcal{T}_n(Q_Y)\),
\begin{align}
    & \min \big\{ 1, e^{nR} \times \! \! \sum_{\bm{u}^{\prime} \in \mathsf{B}_{\varphi_n(\bm{y})}} \! \P \{ \bm{X}(2): g(P_X, \hat{P}_{\bm{u}^{\prime} | \bm{X}(2)}) \geq \hat{P}_{\bm{u}( \bm{y}) | \bm{x}(1) }   \}  \big\} \nonumber \\
    & \ndot{\leq} \min \left\{ 1, e^{nR} \times \exp{-n( E_0(P_X, \hat{P}_{\bm{u}( \bm{y}) | \bm{x}(1) }  )  - |I(Q_Y, P_{U|Y}) - B |^{+} )} \right\}\label{eq:lower_bound_error_exponent_E_0_bin_size}\\
    & = \exp{-n \big | E_0(P_X, \hat{P}_{\bm{u}( \bm{y}) | \bm{x}(1) }  ) -R - |I(Q_Y, P_{U|Y}) - B |^{+}  \big| ^{+} }, \label{eq:lower_bound_error_exponent_compress_forward_exponent }
\end{align}
where in \eqref{eq:lower_bound_error_exponent_E_0_bin_size} we consider the upper bound in \eqref{eq:lower_bound_error_exponent_bound_E_0} and then the sum over the bin in \eqref{eq:lower_bound_error_exponent_two_codeword_union_bound_over_bins} is reduced to a product with the size of \(\mathsf{B}_{\varphi_n(\bm{y})}\), i.e., \(\exp{|I(Q_Y, P_{U|Y}) - B |^{+}  }\).
Thus, substituting \eqref{eq:lower_bound_error_exponent_compress_forward_exponent } back into \eqref{eq:lower_bound_error_exponent_two_codewords_error_probability_conditional_type} yields that for any fixed \(\bm{X}(1) = \bm{x}(1)\),
\begin{align}
    & \sum_{Q_Y \in \mathcal{P}_n(\mathcal{Y})} \sum_{\bm{y} \in \mathcal{T}_n(Q_Y)} P_{Y|X}^n(\bm{y} | \bm{x}(1)) \times \min \left\{ 1, e^{nR} \times \P \left\{ \bm{y} \in \mathcal{Y}^n [\bm{x}(1), \bm{X}(2) ] \right\}  \right\} \nonumber \\
    & \ndot{\leq} \sum_{Q_Y \in \mathcal{P}_n(\mathcal{Y})} \sum_{\bm{y} \in \mathcal{T}_n(Q_Y)} P_{Y|X}^n(\bm{y} | \bm{x}(1)) \times \nonumber \\
    & \hspace{5cm} \exp{-n \big | E_0(P_X, \hat{P}_{\bm{u}( \bm{y}) | \bm{x}(1) }  ) -R - |I(Q_Y, P_{U|Y}) - B |^{+}  \big| ^{+} } \\
    & = \sum_{Q_Y \in \mathcal{P}_n(\mathcal{Y})}   \sum_{\tilde{\bm{u}} \in \mathcal{A}_n(Q_Y)} \sum_{ \substack{\bm{y} \in \mathcal{T}_n(Q_Y) : \\ \bm{u}(\bm{y}) = \tilde{\bm{u}}} } P_{Y|X}^n(\bm{y} | \bm{x}(1)) \times \nonumber \\
    & \hspace{5cm} \exp{-n \big | E_0(P_X, \hat{P}_{ \tilde{\bm{u}} | \bm{x}(1) }  ) -R - |I(Q_Y, P_{U|Y}) - B |^{+}  \big| ^{+} } \label{eq:lower_bound_error_probability_codebook_construction} \\
    & = \sum_{Q_Y \in \mathcal{P}_n(\mathcal{Y})}  \sum_{\tilde{\bm{u}} \in \mathcal{A}_n(Q_Y)} \sum_{  \substack{ Q_{Y|X}: \\ P_X \cdot Q_{Y|X} = Q_Y } } \sum_{ \substack{\bm{y} \in \mathcal{T}_n(Q_{Y|X} | \bm{x}(1)) : \\ \bm{u}(\bm{y}) = \tilde{\bm{u}} } } P_{Y|X}^n(\bm{y} | \bm{x}(1)) \times \nonumber \\
    & \hspace{5cm} \exp{-n \big | E_0(P_X, \hat{P}_{ \tilde{\bm{u}} | \bm{x}(1) }  ) -R - |I(Q_Y, P_{U|Y}) - B |^{+}  \big| ^{+} }, \label{eq:lower_bound_error_probability_sum_over_Q_Y_X}
\end{align}
where in \eqref{eq:lower_bound_error_probability_codebook_construction} we recall that the bottleneck codebook \(\mathcal{B}_n(Q_Y)\) is constructed through \(\mathcal{A}_n(Q_Y)\) and hence we must have \(\bm{u}(\bm{y}) = \tilde{\bm{u}}\) for a certain \(\tilde{\bm{u}} \in \mathcal{A}_n(Q_Y)\).

Notice that the inner term in \eqref{eq:lower_bound_error_probability_sum_over_Q_Y_X}, i.e.,
\begin{equation}
    \exp{-n \big | E_0(P_X, \hat{P}_{ \tilde{\bm{u}} | \bm{x}(1) }  ) -R - |I(Q_Y, P_{U|Y}) - B |^{+}  \big| ^{+} } \nonumber,
\end{equation}
which can be understood as the receiver's decoding error probability conditioned on $\bm{y} \in \mathcal{T}_n(Q_{Y|X} | \bm{x}(1))$ and $\bm{u}(\bm{y}) = \tilde{\bm{u}}$, only depends on \(\tilde{\bm{u}}\) and \(Q_Y\) (recall that \(P_{U|Y}\) is preselected for \(Q_Y\)).
Thus, by pulling it out of the two innermost sums in \eqref{eq:lower_bound_error_probability_sum_over_Q_Y_X},
we are interested in the probability
\begin{equation}
    \sum_{  \substack{ Q_{Y|X}: \\ P_X \cdot Q_{Y|X} = Q_Y } } \sum_{ \substack{\bm{y} \in \mathcal{T}_n(Q_{Y|X} | \bm{x}(1)) : \\ \bm{u}(\bm{y}) = \tilde{\bm{u}} } } P_{Y|X}^n(\bm{y} | \bm{x}(1)), \nonumber
\end{equation}
i.e., the probability of channel outputs \(\bm{y} \in \mathcal{T}_n(Q_Y)\) resulting in \(\bm{u}(\bm{y}) = \tilde{\bm{u}}\).
Since \(\bm{u}(\bm{y}) = \tilde{\bm{u}}\) occurs only if \(\bm{y} \in \mathcal{T}_n(P_{ Y | U} | \tilde{\bm{u}})\) (recall that \(P_{Y|U}\) is the reverse conditional type selected for \(Q_Y\)), it holds that
\begin{equation}
    \{  \bm{y} \in \mathcal{T}_n(Q_{Y|X} | \bm{x}(1)) :  \bm{u}(\bm{y}) = \tilde{\bm{u}}  \} \subset  \mathcal{T}_n(Q_{Y|X} | \bm{x}(1)) \cap \mathcal{T}_n(P_{ Y | U} | \tilde{\bm{u}}).
\end{equation}
Therefore, the cardinality of this set is bounded as
\begin{align}
    \big|  \{  \bm{y} \in \mathcal{T}_n(Q_{Y|X} | \bm{x}(1)) : \bm{u}(\bm{y}) = \tilde{\bm{u}}  \}  \big| & \leq   \abs{  \mathcal{T}_n(Q_{Y|X} | \bm{x}(1)) \cap \mathcal{T}_n(P_{ Y | U} | \tilde{\bm{u}})  } \\
    & \leq \sum_{ Q^{\prime}_{XYU} } e^{nH_{Q^{\prime}}(Y|XU)} \label{eq:lower_bound_intersection_two_conditional_class}
\end{align}
where \eqref{eq:lower_bound_intersection_two_conditional_class} follows from \cite[Problem 2.10]{csiszarInformationTheoryCoding2011}, and the joint type \(Q_{XYU}^{\prime}\)
must satisfy
\begin{equation}
    Q_{XY}^{\prime} = \hat{P}_{\bm{x}(1)} \times Q_{Y|X} = P_X \times Q_{Y|X}, \qquad Q_{YU}^{\prime} = \hat{P}_{\tilde{\bm{u}} } \times P_{Y|U} = Q_Y \times P_{U|Y}
\end{equation}
as well as
\begin{equation}
    Q_{XU}^{\prime} = \hat{P}_{\bm{x}(1)\tilde{\bm{u}} } = P_X \times \hat{P}_{\tilde{\bm{u}} | \bm{x}(1)},
\end{equation}
in which we recall that \( \hat{P}_{\tilde{\bm{u}} } = Q_Y \cdot P_{U|Y}\) and \(P_{Y|U}\) is the reverse conditional type.
On the other hand, for every \(\bm{y} \in \mathcal{T}_n(Q_{Y|X} | \bm{x}(1)) \), we have
\begin{equation}
    P_{Y|X}^n( \bm{y} | \bm{x}(1) ) = \exp{-n(D(Q_{Y|X} \| P_{Y|X} | P_X) + H(Q_{Y|X} | P_X))}.
\end{equation}
Consequently, we see that
\begin{align}
    & \sum_{  \substack{ Q_{Y|X}: \\ P_X \cdot Q_{Y|X} = Q_Y } } \sum_{ \substack{\bm{y} \in \mathcal{T}_n(Q_{Y|X} | \bm{x}(1)) : \\ \bm{u}(\bm{y}) = \tilde{\bm{u}} } } P_{Y|X}^n(\bm{y} | \bm{x}(1)) \nonumber \\
    & \leq \sum_{  \substack{ Q_{Y|X}: \\ P_X \cdot Q_{Y|X} = Q_Y } }\sum_{ Q_{XYU}^{\prime} }\exp{-n(D(Q_{Y|X} \| P_{Y|X} | P_X) + H(Q_{Y|X} | P_X) - H_{Q^{\prime}}(Y|XU))} \\
    & = \sum_{ Q_{XYU} }\exp{-n(D(Q_{Y|X} \| P_{Y|X} | P_X) + H(Q_{Y|X}  | P_X) -H_{Q}(Y|XU) )} \label{eq:lower_bound_intersection_set_probability} \\
    & = \sum_{ Q_{XYU} }\exp{-n(D(Q_{Y|X} \| P_{Y|X} | P_X ) + I_Q(Y;U | X) ) } \\
    & \ndot{ = } \max_{ Q_{XYU} } \exp{-n(D(Q_{Y|X} \| P_{Y|X} | P_X ) + I_Q(Y;U | X) ) }, \label{eq:lower_bound_sum_over_Q_XYU}
\end{align}
where in \eqref{eq:lower_bound_intersection_set_probability} we combine the two sums, i.e., we drop the restriction  \(Q_{XY}^{\prime} = P_X \times Q_{Y|X}\) and the sum over \(Q_{XYU}\) now consists of all \(Q_{XYU}\) satisfying
\begin{equation}
    Q_{XU} = P_X \times \hat{P}_{ \tilde{\bm{u}} | \bm{x}(1)} \quad \text{and}  \quad Q_{YU} = Q_Y\times P_{U|Y}. \label{eq:lower_bound_Q_XYU_conditions}
\end{equation}
Incorporating \eqref{eq:lower_bound_sum_over_Q_XYU} into \eqref{eq:lower_bound_error_probability_sum_over_Q_Y_X}, we exponentially upper bound \eqref{eq:lower_bound_error_probability_sum_over_Q_Y_X} by
\begin{align}
    \sum_{Q_Y \in \mathcal{P}_n(\mathcal{Y})}  \sum_{\tilde{\bm{u}} \in \mathcal{A}_n(Q_Y)} e^{-nf(\hat{P}_{\tilde{\bm{u}} | \bm{x}(1)}  ) }, \label{eq:lower_bound_probability_of_u_times_error}
\end{align}
where to shorten notation, we define
\begin{align}
    f(\hat{P}_{\tilde{\bm{u}} | \bm{x}(1)}  )  \triangleq \min_{Q_{XYU}}  D(Q_{Y|X}  \| P_{Y|X} | P_X) + I_Q(Y;U|X) + \big | E_0(P_X, \hat{P}_{ \tilde{\bm{u}} | \bm{x}(1) }  ) -R - |I_Q(Y;U) - B |^{+}  \big| ^{+}, \nonumber
\end{align}
in which \(Q_{XYU}\) satisfies \eqref{eq:lower_bound_Q_XYU_conditions}.
Substituting \eqref{eq:lower_bound_probability_of_u_times_error} into \eqref{eq:lower_bound_error_exponent_two_codewords_error_probability_conditional_type}, we obtain
\begin{align}
    \lambda_1(n, R, B)  & \ndot{\leq}
    \sum_{Q_Y \in \mathcal{P}_n(\mathcal{Y})} \sum_{ \tilde{\bm{u}} \in \mathcal{A}_n(Q_Y) } \E_{\bm{X}(1)} \left[  e^{-nf(\hat{P}_{\tilde{\bm{u}}| \bm{X}(1)}  ) } \right]. \label{eq:lower_bound_interchange_Q_Y_and_X_1}
\end{align}
Recall that \( \hat{P}_{\tilde{\bm{u}} } = Q_Y \cdot P_{U|Y} \triangleq Q_U\).
For every \(\tilde{\bm{u}} \in \mathcal{A}_n(Q_Y)\), the inner term can be evaluated through
\begin{align}
    \E_{\bm{X}(1)} \left[  e^{-nf(\hat{P}_{\tilde{\bm{u}}| \bm{X}(1)}  ) } \right] & = \sum_{Q_{U|X}} \P \{ \bm{X}(1)  : \hat{P}_{\tilde{\bm{u}} | \bm{X}(1)} = Q_{U|X}   \} \times e^{-nf(Q_{U|X})} \label{eq:lower_bound_probability_u_x}\\
    & \ndot{=}  \sum_{Q_{U|X}} e^{-n(I(P_X, Q_{U|X})+ f(Q_{U|X}))} \label{eq:lower_bound_probability_revert_type_u_x} \\
    & \ndot{=}  \max_{Q_{U|X}} e^{-n(I(P_X, Q_{U|X})+ f(Q_{U|X}))}, \label{eq:lower_bound_not_depend_on_u}
\end{align}
where in \eqref{eq:lower_bound_probability_u_x} \(Q_{U|X}\) satisfies \(P_X \cdot Q_{U|X} = Q_U\); in \eqref{eq:lower_bound_probability_revert_type_u_x} the probability is obtained by considering the reverse conditional type \(Q_{X|U}\).
Observe that \eqref{eq:lower_bound_not_depend_on_u} does not depend on \(\tilde{\bm{u}}\).
Therefore, we proceed with
\begin{align}
    \lambda_1(n, R, B)  & \ndot{\leq} \sum_{Q_Y \in \mathcal{P}_n(\mathcal{Y})} \sum_{ \tilde{\bm{u}} \in \mathcal{A}_n(Q_Y) } \E_{\bm{X}(1)} \left[  e^{-nf(\hat{P}_{\tilde{\bm{u}}| \bm{X}(1)}  ) } \right]\\
    & \ndot{=} \sum_{Q_Y \in \mathcal{P}_n(\mathcal{Y})} |\mathcal{A}_n(Q_Y)| \times  \max_{Q_{U|X} } e^{-n(I(P_X, Q_{U|X})+ f(Q_{U|X}))} \\
    & \ndot{\leq} \sum_{Q_Y \in \mathcal{P}_n(\mathcal{Y})} e^{nI(Q_Y, P_{U|Y})} \times  \max_{Q_{U|X} } e^{-n(I(P_X, Q_{U|X})+ f(Q_{U|X}))} \\
    & = \sum_{Q_Y \in \mathcal{P}_n(\mathcal{Y})} \max_{Q_{U|X} } e^{-n(I(P_X, Q_{U|X}) - I(Q_Y, P_{U|Y}) + f(Q_{U|X}))} \label{eq:lower_bound_expurgation_u_0_bound}
\end{align}
Recall that by definition, we have
\begin{align}
    f( Q_{U|X}  )  = \min_{Q_{XYU}}  D(Q_{Y|X}  \| P_{Y|X} | P_X) + I_Q(Y;U|X) + \big | E_0(P_X, Q_{U|X}  ) -R - |I_Q(Y;U) - B |^{+}  \big| ^{+}, \label{eq:lower_bound_f_Q_U_X}
\end{align}
where \(Q_{XYU}\) satisfies $Q_{XU} = P_X \times Q_{U|X} $ and $Q_{YU} = Q_Y\times P_{U|Y}$.
Substituting \eqref{eq:lower_bound_f_Q_U_X} into \eqref{eq:lower_bound_expurgation_u_0_bound}, we get
\begin{align}
    & \lambda_1(n, R, B) \nonumber \\
    & \ndot{ \leq } \sum_{Q_Y \in \mathcal{P}_n(\mathcal{Y})} \max_{Q_{U|X} } e^{-n(I(P_X, Q_{U|X}) - I(Q_Y, P_{U|Y}) + f(Q_{U|X}))} \\
    & = \sum_{Q_Y \in \mathcal{P}_n(\mathcal{Y})} \max_{Q_{U|X} } \max_{Q_{XYU}} \exp \big\{ -n \big(  D(Q_{Y|X}  \| P_{Y|X} | P_X) + I_Q(X;U|Y) + \nonumber \\
    & \hspace{6.5cm} \big | E_0(P_X, Q_{U|X}  ) -R - |I_Q(Y;U) - B |^{+}  \big| ^{+}   \big) \big\} \label{eq:lower_bound_mutual_information_identity} \\
    & = \sum_{Q_Y \in \mathcal{P}_n(\mathcal{Y})} \max_{Q_{XYU}} \exp \big\{ -n \big(  D(Q_{Y|X}  \| P_{Y|X} | P_X) + I_Q(X;U|Y) + \nonumber \\
    & \hspace{6.5cm} \big | E_0(P_X, Q_{U|X}  ) -R - |I_Q(Y;U) - B |^{+}  \big| ^{+}   \big) \big\} \label{eq:lower_bound_combine_two_maximization} \\
    & = \sum_{Q_Y \in \mathcal{P}_n(\mathcal{Y})} \min_{P_{U|Y}} \max_{Q_{XYU}} \exp \big\{ -n \big(  D(Q_{Y|X}  \| P_{Y|X} | P_X) + I_Q(X;U|Y) + \nonumber \\
    & \hspace{6.5cm} \big | E_0(P_X, Q_{U|X}  ) -R - |I_Q(Y;U) - B |^{+}  \big| ^{+}   \big) \big\} \label{eq:lower_bound_optimize_over_P_U_Y} \\
    & \ndot{=} \max_{Q_Y \in \mathcal{P}_n(\mathcal{Y})} \min_{P_{U|Y}} \max_{Q_{XYU}} \exp \big\{ -n \big(  D(Q_{Y|X}  \| P_{Y|X} | P_X) + I_Q(X;U|Y) + \nonumber \\
    & \hspace{6.5cm} \big | E_0(P_X, Q_{U|X}  ) -R - |I_Q(Y;U) - B |^{+}  \big| ^{+}   \big) \big\} \label{eq:lower_bound_binning_error_bound_Q_XYU} \\
    & = \max_{Q_Y \in \mathcal{P}_n(\mathcal{Y})} \min_{P_{U|Y}}  \max_{ \substack{Q_{X|YU} : \\ Q_X = P_X}   } \exp \big\{ -n \big(  D(Q_{Y|X}  \| P_{Y|X} | P_X) + I_Q(X;U|Y) + \nonumber \\
    & \hspace{6.5cm} \big | E_0(P_X, Q_{U|X}  ) -R - |I_Q(Y;U) - B |^{+}  \big| ^{+}   \big) \big\} \label{eq:lower_bound_binning_error_bound}
\end{align}
where \eqref{eq:lower_bound_mutual_information_identity} is due to the following identity
\begin{equation}
\label{eq:E_MI_term}
    I(X;U) -I(Y;U) + I(Y;U|X) = I(X;U|Y);
\end{equation}
in \eqref{eq:lower_bound_combine_two_maximization} we combine the two maximizations, i.e.,  now the maximization is over all \(Q_{XYU} \) satisfying
\begin{equation}
    Q_{X} = P_X \quad \text{and}  \quad Q_{YU} = Q_Y\times P_{U|Y};
\end{equation}
in \eqref{eq:lower_bound_optimize_over_P_U_Y} we assume the optimal \(P_{U|Y}\) is selected for every \(Q_Y \in \mathcal{P}_n(\mathcal{Y})\) when constructing the bottleneck codebook;
\eqref{eq:lower_bound_binning_error_bound_Q_XYU} and \eqref{eq:lower_bound_binning_error_bound} are the same but expressed differently, i.e., in \eqref{eq:lower_bound_binning_error_bound} we consider all joint distributions \(Q_{XYU} = Q_Y \times P_{U|Y} \times Q_{X|YU}\) satisfying \(Q_X = P_X\).

We conclude the above analysis with some observations and remarks as follows.
\begin{enumerate}
    \item Using different bottleneck codebooks for different types \(Q_{Y}\) allows us to select \(P_{U|Y}\) depending on \(Q_{Y}\) to minimize the overall decoding error probability, yielding the sandwiched minimization in \eqref{eq:lower_bound_binning_error_bound}.
    \item The term \(I_Q(X;U|Y)\) follows from \eqref{eq:E_MI_term}, restated as
    $-I_Q(Y;U) + I_Q(X;U) + I_Q(Y;U|X)$, where \(I_Q(Y;U)\) is due to the total number of sequences required for type covering under \((Q_Y, P_{U|Y})\); \(I_Q(X;U)\) is due to the probability of the event \(\hat{P}_{ \tilde{\bm{u}} | \bm{X}(1)} = Q_{U|X}\); and \(I_Q(Y;U|X)\) is caused by the probability of channel outputs \(\bm{y} \in \mathcal{T}_n(Q_{Y|X} | \bm{x}(1))\) satisfying \(\bm{u}(\bm{y}) = \tilde{\bm{u}}\).
    Intuitively speaking, \(I_Q(X;U|Y)\) captures the correlation between the sent codeword \(\bm{x}(1)\) and the compress-forward sequence \(\bm{u}(\bm{y})\).
    The more informative \(\bm{u}(\bm{y})\) is towards \(\bm{x}(1)\), the less likely the receiver will make a decoding error, which results in a better achievable error exponent.
    \item We make use of binning when constructing the bottleneck codebooks and employ the union bound over the bin in the analysis. The bin size \(|I(Q_{Y}, P_{U|Y}) - B|^{+}\) hence appears in the exponent.
\end{enumerate}
\begin{remark}
    If we do not make use of binning when constructing the bottleneck codebooks, i.e., only considering \(P_{U|Y}\) such that \(I(Q_Y, P_{U|Y}) \leq B\), then we obtain an achievable exponent of
    \begin{equation}
    \min_{Q_Y \in \mathcal{P}_n(\mathcal{Y})} \max_{ \substack{P_{U|Y} : \\ I(Q_Y, P_{U|Y}) \leq B } } \min_{ \substack{Q_{X|YU} : \\ Q_X = P_X}  }  D(Q_{Y|X}  \| P_{Y|X} | P_X) + I_Q(X;U|Y) + | E_0(P_X, Q_{U|X}  ) -R   |^{+}.   \label{eq:lower_bound_error_exponent_without_binning}
    \end{equation}
    Binning in the compress-forward scheme enables us to take \(\{ P_{U|Y} : I(Q_Y, P_{U|Y}) > B\}\) into account, which produces a generally better error exponent in \eqref{eq:lower_bound_binning_error_bound} (compared to \eqref{eq:lower_bound_error_exponent_without_binning}).
   The binning scheme used here has its roots in the classical Wyner-Ziv scheme \cite{wynerRatedistortionFunctionSource1976}. The idea of utilizing binning in tandem with covering to achieve better error exponents dates back at least to \cite{shimokawaErrorBoundHypothesis1994}.
   It was also adopted in, e.g., Kelly and Wagner \cite{kellyReliabilitySourceCoding2012} and Tan \cite{tanReliabilityFunctionDiscrete2015} later on.
   In particular, both papers also employed a decoder that considers the maximization over an entire bin, which is similar to the one used here.
\end{remark}
\begin{remark}
 A common approach in the literature on multiterminal lossy source coding is to randomly generate
 \(e^{nI(Q_Y, P_{U|Y})}\) sequences for compress-forward, see, e.g., \cite{kellyReliabilitySourceCoding2012}.
    Here, by adopting the type covering lemma for compress-forward,  we effectively separate the two phases: the random sequence generation phase for covering  and the error probability analysis phase.
    Thus, when analyzing the decoding error probability, we can avoid considering error events arising from the random generation.
    Instead, the error exponent is established through investigating the intersection between conditional type classes.
\end{remark}
\subsection{Error Exponent}
The upper bound on the ensemble-average decoding error probability we just derived holds for any message \(m \in [e^{nR}]\), not necessarily \(m = 1\).
Therefore, we obtain
\begin{equation}
    \liminf_{n \to \infty} -\frac{1}{n} \log \bar{\lambda}(n, R, B) \geq E_{\textnormal{r}}(R, B, P_X, g),
\end{equation}
where we have
\begin{align}
    E_{\textnormal{r}}(R, B,P_X, g) & \triangleq \min_{Q_Y } \max_{P_{U|Y}} \min_{\substack{Q_{X|YU}: \\ Q_X = P_X}}  D(Q_{Y|X}  \| P_{Y|X} | P_X) + I_Q(X;U|Y) + \nonumber \\
    & \hspace{6cm} \big | E_0(P_X, Q_{U|X}  ) -R - |I_Q(Y;U) - B |^{+}  \big| ^{+}, \label{eq:lower_bound_error_exponent_generalized_decoder_exponent}
\end{align}
in which \(Q_{XYU} = Q_Y \times P_{U|Y} \times Q_{X|YU}\) satisfies \(Q_X = P_X\).
Thus, by optimizing over \(P_X\) and generalized decoders \(g\), we conclude that
\begin{equation}
    E(R,B) \geq \max_{P_X}\max_{g}E_{\textnormal{r}}(R, B, P_X, g).
\end{equation}
Before solving \(\max_{g}E_{\textnormal{r}}(R, B, P_X, g)\), we first have a look at \(\max_{g}E_0(P_X, Q_{U|X} )\).
Recall that
\begin{equation}
    E_0(P_X, Q_{U|X} ) = \min_{ \substack{ Q_{U|X}^{\prime} : P_X \cdot Q_{U|X}^{\prime} = P_X \cdot Q_{U|X}, \\ g(P_X, Q_{U|X}^{\prime}) \geq g(P_X, Q_{U|X}  ) }  } I ( P_X, Q_{U|X}^{\prime}). \label{eq:lower_bound_error_exponent_special_case_E_0}
\end{equation}
Consequently, we have
\begin{equation}
    E_0(P_X, Q_{U|X} ) \leq I(P_X, Q_{U|X} ),
\end{equation}
since we can choose \(Q_{U|X}^{\prime} = Q_{U|X} \).
The equality is achieved if \(g(P_X, Q_{U|X}) = I(P_X, Q_{U|X})\), i.e., if the MMI decoder is adopted.
Hence, it is evident that the MMI decoder is the optimal \(\alpha\)-decoder, i.e.,
\begin{equation}
    \max_g E_{\textnormal{r}}(R, B, P_X, g) = E_{\textnormal{r}}(R, B, P_X),
\end{equation}
which proves Theorem \ref{thm:lower_bound}.

\subsection{Achievable Rate}
\label{sec:proof_achievable_rate_corollary}
Here we prove Corollary \ref{cor:achievable_rate}.
Define
\begin{equation}
    R_0 \triangleq \max_{P_X, P_{U|Y}} I(P_X, P_{U|X}) \qquad \text{s.t.} \quad I(P_{Y}, P_{U|Y}) \leq B,
\end{equation}
where \(X \overset{P_{Y|X}}{\to} Y \overset{P_{U|Y}}{\to} U\) forms a Markov chain.
Assume \((P_X^\ast, P_{U|Y}^\ast)\) achieves  \(R_0\).
Hence, $R_0 = I(P_X^\ast, P_{Y|X} \cdot P_{U|Y}^{\ast})$ and $I(P_{Y}^{\ast}, P_{U|Y}^{\ast}) \leq B$.
We need to show that all rates up to \(R_0\) are achievable, i.e., for all rates \(R < R_0\), we have $\max_{P_X} E_{\textnormal{r}}(R, B, P_X) > 0$.
Recall that
\begin{align}
    E_{\textnormal{r}}(R, B, P_X) & \triangleq \min_{Q_Y } \max_{P_{U|Y}}  \min_{ \substack{Q_{X|YU}: \\ Q_X = P_X} }   D(Q_{Y|X}  \| P_{Y|X} | P_X) + I_Q(X;U|Y) + \nonumber \\
    & \hspace{6cm} \big | I_Q(X;U) -R - |I_Q(Y;U) - B |^{+}  \big| ^{+},
\end{align}
where \(Q_{XYU} = Q_Y \times P_{U|Y} \times Q_{X|YU}\) satisfies \(Q_X = P_X\).
For all rates \(R< R_0\), we definitely have \(\max_{P_X} E(R, B, P_X) > 0\) if we can show that the following inequality holds
\begin{equation}
    \min_{Q_Y } \min_{ \substack{ Q_{X|YU}: \\ Q_X = P_X^{\ast}} }   D(Q_{Y|X}  \| P_{Y|X} | P_X^{\ast}) + I_Q(X;U|Y) + \big | I(P_X^{\ast}, Q_{U|X}) -R - |I(Q_Y, P_{U|Y}^{\ast}) - B |^{+}  \big| ^{+} >0,
\end{equation}
where \(Q_{XYU} = Q_Y \times P_{U|Y}^{\ast} \times Q_{X|YU}\) satisfies \(Q_X = P_X^{\ast}\).
The rest of the proof is reminiscent of a similar proof in \cite{merhavGeneralizedStochasticLikelihood2017}.
Consider the identity
\begin{equation}
    |a|^{+} = \max_{\rho \in [0,1]}\rho a, \ \text{where} \ a \in \mathbb{R}.
\end{equation}
Hence, it suffices to show that for all \(R < R_0\), we have
\begin{equation}
    \min_{Q_{XYU} } \max_{\rho \in [0,1]}  D(Q_{Y|X}  \| P_{Y|X} | P_X^{\ast}) + I_Q(X;U|Y) + \rho \big ( I(P_X^{\ast}, Q_{U|X}) -R - |I(Q_Y, P_{U|Y}^{\ast}) - B |^{+} \big) >0, \label{eq:achievable_rates_sufficiency}
\end{equation}
where \(Q_{XYU} = Q_Y \times P_{U|Y}^{\ast} \times Q_{X|YU}\) satisfies \(Q_X = P_X^{\ast}\).
\eqref{eq:achievable_rates_sufficiency} states that for every such \(Q_{XYU}\), there exists a \(\rho \in [0,1]\) such that
\begin{equation}
    D(Q_{Y|X}  \| P_{Y|X} | P_X^{\ast}) + I_Q(X;U|Y) + \rho \big ( I(P_X^{\ast}, Q_{U|X}) -R - |I(Q_Y, P_{U|Y}^{\ast}) - B |^{+}  \big) >0,
\end{equation}
i.e.,
\begin{equation}
    R <  \frac{ D(Q_{Y|X} \| P_{Y|X} | P_{X}^{\ast}) + I_Q(X;U|Y) }{ \rho }   +   I(P_X^{\ast}, Q_{U|X}) - |I(Q_Y, P_{U|Y}^{\ast}) - B |^{+}. \label{eq:achievable_rates_R_upper_bound_rho}
\end{equation}
Thus, \eqref{eq:achievable_rates_sufficiency} is equivalent to
\begin{align}
    R & < \min_{Q_{XYU}} \max_{\rho \in [0,1]}  \frac{ D(Q_{Y|X} \| P_{Y|X} | P_{X}^{\ast}) + I_Q(X;U|Y) }{ \rho }   +   I(P_X^{\ast}, Q_{U|X}) - |I(Q_Y, P_{U|Y}^{\ast}) - B |^{+} \label{eq:achievable_rates_R_upper_bound_optimize} \\
    & = I(P_X^{\ast}, P_{Y|X} \cdot P_{U|Y}^{\ast} ) - |I(P_{Y}^{\ast}, P_{U|Y}^{\ast}) - B  |^{+} \label{eq:achievable_rates_Q_Y_X_equal_P_Y_X} \\
    & = R_0,
\end{align}
where in \eqref{eq:achievable_rates_R_upper_bound_optimize} \(Q_{XYU} = Q_Y \times P_{U|Y}^{\ast} \times Q_{X|YU}\) satisfies \(Q_X = P_X^{\ast}\), while the minimization over \(Q_{XYU}\)  is because \eqref{eq:achievable_rates_R_upper_bound_rho} holds for every such \(Q_{XYU}\) and the maximization over \(\rho \in [0,1]\) is due to the existence of such \(\rho\);
\eqref{eq:achievable_rates_Q_Y_X_equal_P_Y_X} holds since the minimization in \eqref{eq:achievable_rates_R_upper_bound_optimize} is achieved when \(Q_{XYU}\) satisfies \(Q_{Y|X} = P_{Y|X}\) as well as \(I_Q(X;U|Y) = 0\), i.e., \(P_{U|X} = P_{Y|X} \cdot P_{U|Y}^{\ast}\) and \(Q_Y = P_{Y}^{\ast}\), due to the maximization over \(\rho \in [0,1]\).
Therefore, \eqref{eq:achievable_rates_sufficiency} indeed holds for all \(R < R_0\) since the two are equivalent, which completes the proof.

\subsection{Mismatched Decoding}
\label{sec:lower_bound_error_exponent_mistmathced_decoder}
Dikshtein \etal \cite{dikshteinMismatchedObliviousRelaying2023} considered the problem of oblivious relaying under a mismatched decoding rule.
In such problem, the receiver is required to reconstruct a certain sequence \(\bm{u} \in \mathcal{U}^n\) for every forwarded index \(l\) from the relay, and decode under a mismatched decoder, i.e.,
\begin{equation}
    \hat{m} = \argmax_{ \hat{x}(m) \in \mathcal{C}_n, \bm{u} } g(\hat{P}_{\bm{x}(m)}, \hat{P}_{\bm{u} | \bm{x}(m)}),
\end{equation}
where
\begin{equation}
    g(\hat{P}_{\bm{x}(m)}, \hat{P}_{\bm{u} | \bm{x}(m)} ) = \sum_{x,u} \hat{P}_{\bm{x}(m) \bm{u}}(x,u)\log q(x, u),
\end{equation}
for some decoding metric \(q(x, u)\).
Therefore, we have this immediate result.
\begin{theorem}
    \label{thm:lower_bound_mismatch}
    For the IB channel \((P_{Y|X}, B)\) under a mismatched decoding rule, we have
    \begin{equation}
        \liminf_{n \to \infty} -\frac{1}{n} \log \bar{\lambda}(n, R, B) \geq \max_{P_X} E_{\textnormal{r}}(R, B, P_X, g),
    \end{equation}
    where \(E_{\textnormal{r}}(R, B, P_X, g)\) is given by \eqref{eq:lower_bound_error_exponent_generalized_decoder_exponent} and \eqref{eq:lower_bound_error_exponent_special_case_E_0}.
\end{theorem}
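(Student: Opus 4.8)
The plan is to observe that Theorem~\ref{thm:lower_bound_mismatch} is already contained in the analysis of Section~\ref{sec:lower_bound_proof}, which was deliberately carried out for an \emph{arbitrary} fixed continuous $\alpha$-decoder $g$. First I would verify that the mismatched decoder is an admissible instance of the $\alpha$-decoder: the map $P_{XU} \mapsto \sum_{x,u} P_{XU}(x,u)\log q(x,u)$ is affine in $P_{XU}$, hence continuous on $\mathcal{P}(\mathcal{X}\times\mathcal{U})$ (adopting the usual convention $0\log 0 = 0$ and restricting attention to joint types whose support avoids pairs $(x,u)$ with $q(x,u)=0$, which is without loss of generality since such types yield an undefined $/ -\infty$ metric and are never selected by the decoder). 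Thus taking $g(\hat{P}_{\bm{x}(m)},\hat{P}_{\bm{u}|\bm{x}(m)}) = \sum_{x,u}\hat{P}_{\bm{x}(m)\bm{u}}(x,u)\log q(x,u)$ places us squarely in the framework of Section~\ref{sec:lower_bound_proof}.

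Second, I would note that the coding scheme — bottleneck codebook built from the type covering lemma (Lemma~\ref{lemma:type_covering}) together with binning, relay covering as in Section~\ref{sec:lower_bound_encoding_decoding}, and receiver decoding with $g$ — applies verbatim, and so does the entire error-probability derivation culminating in \eqref{eq:lower_bound_error_exponent_generalized_decoder_exponent}. Crucially, none of the intermediate steps (the relaxed error event \eqref{eq:relaxed error event}, the truncated union bound, the intersection-of-conditional-type-classes estimate, the bin-size bound \eqref{eq:lower_bound_bin_size}, or the definition of $E_0$ in \eqref{eq:lower_bound_error_exponent_generalized_decoder_E_0}) uses any property of $g$ beyond continuity. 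In the mismatched setting of \cite{dikshteinMismatchedObliviousRelaying2023}, the candidate set over which the receiver maximizes $g$ is precisely the forwarded bin $\mathsf{B}_l$ (a single sequence when $I(Q_Y,P_{U|Y})\le B$), so the receiver model is consistent with the scheme used here. Hence $\liminf_{n\to\infty} -\frac{1}{n}\log\bar{\lambda}(n,R,B) \ge E_{\textnormal{r}}(R,B,P_X,g)$ holds for this particular $g$ and every codeword composition $P_X$, with $E_{\textnormal{r}}(R,B,P_X,g)$ and $E_0$ given by \eqref{eq:lower_bound_error_exponent_generalized_decoder_exponent} and \eqref{eq:lower_bound_error_exponent_special_case_E_0}.

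Finally, I would maximize over the codeword composition $P_X$ to obtain the stated bound. The only difference from the proof of Theorem~\ref{thm:lower_bound} is that here $g$ is fixed by the decoding metric $q$ and is \emph{not} optimized; in particular, one does not invoke the argument that the MMI rule is the optimal $\alpha$-decoder, so the $E_0$ term cannot be simplified to $I_Q(X;U)$ in general. There is essentially no hard step: the result is ``immediate'' precisely because Section~\ref{sec:lower_bound_proof} was written for a generic continuous $g$. The only point that warrants a brief remark is the continuity/admissibility check for the logarithmic metric when $q(x,u)$ vanishes for some $(x,u)$, which is handled by the standard support conventions noted above.
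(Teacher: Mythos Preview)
Your proposal is correct and matches the paper's approach: the paper also treats Theorem~\ref{thm:lower_bound_mismatch} as an ``immediate result'' of the general $\alpha$-decoder analysis in Section~\ref{sec:lower_bound_proof}, with the mismatched metric simply being a particular choice of $g$. Your additional remarks on continuity of the logarithmic metric and on the consistency of the bin-based decoding with the model in \cite{dikshteinMismatchedObliviousRelaying2023} are reasonable sanity checks that the paper leaves implicit.
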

Following the proof of Corollary \ref{cor:achievable_rate}, it can be verified that this exponent recovers the following achievable rate provided in \cite[Theorem 1]{dikshteinMismatchedObliviousRelaying2023}.
\begin{corollary}
    \label{cor:lower_bound_rate_mismatch}
    For the IB channel \((P_{Y|X}, B)\) under a mismatched decoding rule, all rates up to \(C_{\textnormal{LM}}(B)\) are achievable, where
    \begin{equation}
        C_{\textnormal{LM}}(B) = \max_{P_X, P_{U|Y}} E_0(P_X, P_{U|X}) \qquad \text{s.t.} \qquad  I(P_{Y}, P_{U|Y}) \leq B,
    \end{equation}
    in which \(X \overset{P_{Y|X}}{\to} Y \overset{P_{U|Y}}{\to} U\) forms a Markov chain and \(E_0(P_X, P_{U|X})\) is given by \eqref{eq:lower_bound_error_exponent_generalized_decoder_E_0}.
\end{corollary}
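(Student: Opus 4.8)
The plan is to specialize the proof of Corollary~\ref{cor:achievable_rate} in Section~\ref{sec:proof_achievable_rate_corollary}, replacing the empirical mutual information $I_Q(X;U)$ by the quantity $E_0(P_X,Q_{U|X})$ of \eqref{eq:lower_bound_error_exponent_generalized_decoder_E_0} throughout. By Theorem~\ref{thm:lower_bound_mismatch} it suffices to prove that $\max_{P_X} E_{\textnormal{r}}(R,B,P_X,g) > 0$ for every $R < C_{\textnormal{LM}}(B)$, where $E_{\textnormal{r}}(R,B,P_X,g)$ is the exponent in \eqref{eq:lower_bound_error_exponent_generalized_decoder_exponent} and $g$ is the prescribed mismatched metric. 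Let $(P_X^\ast,P_{U|Y}^\ast)$ attain $C_{\textnormal{LM}}(B)$, so $C_{\textnormal{LM}}(B) = E_0(P_X^\ast,P_{U|X}^\ast)$ with $P_{U|X}^\ast \triangleq P_{Y|X}\cdot P_{U|Y}^\ast$ and $I(P_Y^\ast,P_{U|Y}^\ast)\le B$. Taking $P_X = P_X^\ast$ and lower-bounding $E_{\textnormal{r}}$ by using the single fixed choice $P_{U|Y}^\ast$ in the sandwiched maximization (and then fusing $\min_{Q_Y}$ and $\min_{Q_{X|YU}}$ into one minimization over joint types $Q_{XYU} = Q_Y\times P_{U|Y}^\ast\times Q_{X|YU}$ with $Q_X = P_X^\ast$), positivity reduces to
\[
\min_{Q_{XYU}}\; D(Q_{Y|X}\|P_{Y|X}\,|\,P_X^\ast) + I_Q(X;U|Y) + \big| E_0(P_X^\ast,Q_{U|X}) - R - |I(Q_Y,P_{U|Y}^\ast) - B|^{+}\big|^{+} > 0 .
\]

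I would then run the $\rho$-argument from the passage around \eqref{eq:achievable_rates_sufficiency}--\eqref{eq:achievable_rates_Q_Y_X_equal_P_Y_X} verbatim. Writing $|a|^{+} = \max_{\rho\in[0,1]}\rho a$, the displayed positivity is equivalent to
\[
R < \min_{Q_{XYU}}\max_{\rho\in[0,1]}\; \frac{D(Q_{Y|X}\|P_{Y|X}\,|\,P_X^\ast) + I_Q(X;U|Y)}{\rho} + E_0(P_X^\ast,Q_{U|X}) - |I(Q_Y,P_{U|Y}^\ast) - B|^{+} .
\]
Any $Q_{XYU}$ with $D(Q_{Y|X}\|P_{Y|X}\,|\,P_X^\ast) + I_Q(X;U|Y) > 0$ makes the inner supremum $+\infty$ (let $\rho\downarrow 0$), so the outer minimum is attained only where this sum vanishes; but $D=0$ forces $Q_{Y|X} = P_{Y|X}$, hence (with $Q_X = P_X^\ast$) $Q_Y = P_Y^\ast$, and then $I_Q(X;U|Y)=0$ together with $Q_{YU} = Q_Y\times P_{U|Y}^\ast$ fixes $Q_{XYU}$ to the single Markov type $P_X^\ast\times P_{Y|X}\times P_{U|Y}^\ast$, for which $Q_{U|X} = P_{U|X}^\ast$. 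Hence the right-hand side equals $E_0(P_X^\ast,P_{U|X}^\ast) - |I(P_Y^\ast,P_{U|Y}^\ast) - B|^{+} = C_{\textnormal{LM}}(B)$, using $I(P_Y^\ast,P_{U|Y}^\ast)\le B$, which proves the corollary.

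The step requiring the most care is the collapse of the minimizing joint type to the single Markov point: one must confirm that the $\rho$-trick, which in the matched case rested only on the smooth map $I_Q(X;U)$, still forces $D + I_Q(X;U|Y) = 0$ even though the relevant objective now contains $E_0(P_X^\ast,Q_{U|X})$. This is immediate because $E_0(P_X^\ast,\cdot)$ is bounded (by $\log|\mathcal{U}|$), so it cannot offset the $1/\rho$ blow-up of a strictly positive numerator, and it then only needs to be evaluated at the single surviving type, where it equals $E_0(P_X^\ast,P_{U|X}^\ast)$ by construction. The remainder is identical to the proof of Corollary~\ref{cor:achievable_rate}. It is worth emphasizing that the whole argument is available precisely because Theorem~\ref{thm:lower_bound} (hence Theorem~\ref{thm:lower_bound_mismatch}) was proved for a general $\alpha$-decoder, so the mismatched LM-rate $C_{\textnormal{LM}}(B)$ drops out as a one-line specialization rather than requiring a separate coding argument; retaining $E_0$ rather than collapsing it to $I_Q(X;U)$, as one may in the matched MMI case, is exactly what yields $C_{\textnormal{LM}}(B)$ in place of the matched capacity.
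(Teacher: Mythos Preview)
Your proposal is correct and follows essentially the same approach as the paper, which simply states that the result ``can be verified'' by ``following the proof of Corollary~\ref{cor:achievable_rate}''; you have carried out precisely that verification, with $E_0(P_X^\ast,Q_{U|X})$ in place of $I_Q(X;U)$ and the same $\rho$-trick collapsing the minimizing type to the Markov point $P_X^\ast\times P_{Y|X}\times P_{U|Y}^\ast$. Your explicit remark that boundedness of $E_0$ ensures the $1/\rho$ blow-up still dominates is a useful clarification beyond what the paper spells out.
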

The proof of \cite[Theorem 1]{dikshteinMismatchedObliviousRelaying2023} (i.e., \cite[Appendix B]{dikshteinMismatchedObliviousRelaying2023}) relies on joint typicality and does not incorporate binning. On the other hand, to establish the achievable error exponent in Theorem \ref{thm:lower_bound_mismatch}, we use an improved scheme that employs binning and more refined analysis based on the method of types. Nevertheless, the resulting LM rate in Corollary \ref{cor:lower_bound_rate_mismatch} is identical to the one in \cite{dikshteinMismatchedObliviousRelaying2023}.

\section{Converse}
\label{sec:weak_converse_for_capacity}
This section is dedicated to the proof of Theorem \ref{thm:weak_converse_constant_composition_ensemble}.
Before starting, we first introduce some definitions and notation that will be used down the line.
\subsection{Definitions and Notation}
For convenience, in this section we write \(x^n = (x_1, x_2, \ldots, x_n)\) for a deterministic sequence from \(\mathcal{X}^n\) and \(X^n = (X_1, X_2, \ldots, X_n)\) for a random sequence.
Given a certain type \(P_X \in \mathcal{P}_n(\mathcal{X})\), define the following distribution on $\mathcal{X}^n$
\begin{equation}
    P_{X^n}(x^n) \triangleq \frac{\idc{x^n \in \mathcal{T}_n(P_X)}}{|\mathcal{T}_n(P_X)|}. \label{eq:weak_converse/definition_ensemble_distribution}
\end{equation}
Then, every \(X^n(i)\)  in the constant composition ensemble $\bm{C} = (X^n(1), X^n(2), \ldots, X^n(e^{nR}))$ with codeword composition \(P_X\) independently follows the same distribution \(P_{X^n}\).

Given any distribution \(P_X\) on a finite set \(\mathcal{X}\), we denote its support by \(\supp(P_X)\), i.e., $\supp(P_X) = \{x \in \mathcal{X}: P_X(x)>0\}$.
For a sequence \(x^n = (x_1, x_2, \ldots, x_n)\) and \(i \in [n]\), we call the subsequence \(x^i = (x_1, x_2, \ldots, x_i)\) its prefix and the remaining subsequence \(x^{n}_{i+1} = (x_{i+1}, x_{i+2}, \ldots, x_n)\) its suffix.
We denote the type of its prefix \(x^{i}\) by \(\hat{P}_{x^i}\) and the type of its suffix \(x_{i+1}^n\) by \(\hat{P}_{x^{n}_{i+1}}\).
For every \(x^n \in \mathcal{T}_n(P_X)\) and \(i \in [n]\), it is clear that
\begin{equation}
    i\hat{P}_{x^i}(a) + (n-i)\hat{P}_{x^{n}_{i+1}}(a) = nP_X(a), \qquad \forall a \in \mathcal{X}. \label{eq:weak_converse_constant_ensemble_Q_hat}
\end{equation}
We denote by \(\mathcal{S}_i(\mathcal{X})\) the set of all possible prefix types \(\hat{P}_{x^{i}}\) under the condition \(x^n \in \mathcal{T}_n(P_X)\).
Hence, we have \(\mathcal{S}_i(\mathcal{X}) \subseteq \mathcal{P}_i(\mathcal{X})\).
Note that the set of all possible suffix types \(\hat{P}_{x^{n}_{i+1} }\) in \( \mathcal{T}_n(P_X)\) is the same as \(\mathcal{S}_{n-i}(\mathcal{X})\), since any \(\hat{P}_{x^{n}_{i+1} }\)   can also be a prefix type \(\hat{P}_{x^{n-i} }\).

Given a constant \(\delta \geq 0\) and pmf \(P_X\), we write \(Q_X \overset{\delta}{\sim} P_X\) if
\begin{align}
    \abs{Q_X(a) - P_X(a)} \leq \delta P_X(a), \qquad \forall a \in \mathcal{X}.
\end{align}
We say a sequence \(x^n\) is \(P_X\)-typical with \(\delta\) if its type \(\hat{P}_{x^n}\) satisfies \(\hat{P}_{x^n} \overset{\delta}{\sim} P_X\).
The set of typical sequences \(x^n \in \mathcal{X}^n\) is denoted by \(\mathcal{T}_n^{\delta}(P_X)\).
The notion of typicality adopted here is known as robust typicality \cite{gamalNetworkInformationTheory2011}.
The reason for not using, e.g., strong typicality \cite{csiszarInformationTheoryCoding2011}, will be clear further on.

\subsection{Preliminaries}
We now begin the proof of Theorem \ref{thm:weak_converse_constant_composition_ensemble}.
Consider a sequence of \((n, R, B)\)-codes, or equivalently a sequence of mappings \((f_n, \varphi_n, \phi_n)\) as defined in Section \ref{section:problem_setting}, satisfying \(\bar{\lambda}(n, R, B) \to 0\).
Conditioned on \(\bm{C} = \mathcal{C}_n\), where
\(\mathcal{C}_n = (x^n(1),\ldots,x^n(e^{nR}))\), we write
$x^n(M) \triangleq f_n(M,\mathcal{C}_n)$. Recall that $M$ is uniform on $[e^{nR}]$.
The codeword \(x^n(M)\) passes through the DMC \(P_{Y|X}\) to reach the relay.
Let the random output at the relay be \(Y^n\), and denote by \(L\) the index forwarded from the relay to the receiver, i.e.,   \(L = \varphi_n(Y^n)\).
At the decoder side, we write the estimated message as $\hat{M} = \phi_n(L,\mathcal{C}_n) $.
Thus, conditioned on a codebook  \(\bm{C} = \mathcal{C}_n\), we have the Markov chain
\begin{equation}
    M \to x^n(M) \to Y^n \to L \to \hat{M}.
\end{equation}
From Fano's inequality, conditioned on any codebook \(\bm{C} = \mathcal{C}_n\), we have
\begin{equation}
    H(M|L, \bm{C} = \mathcal{C}_n) \leq H(M|\hat{M}, \bm{C} = \mathcal{C}_n) \leq 1 + \bar{\lambda}(n, R, B, \mathcal{C}_n)nR, \label{eq:Fano_inequality}
\end{equation}
where \(\bar{\lambda}(n, R, B, \mathcal{C}_n)\) is the average decoding error probability of codebook \(\mathcal{C}_n\) and the first inequality is due to the chain rule.
After averaging over the ensemble \(\bm{C}\), we obtain
\begin{equation}
    H(M|L, \bm{C})  \leq 1 + \bar{\lambda}(n , R, B) nR \triangleq n \epsilon_n. \label{eq:Fano_inequality_ensemble}
\end{equation}
To proceed, we first follow the footsteps of the converse proof in \cite[Theorem 2]{sanderovichCommunicationDecentralizedProcessing2008} and write
\begin{align}
    nR & = H(M) \\
    &=I(M;L, \bm{C}) + H(M | L, \bm{C})\\
    &\leq I(M;L, \bm{C}) + n\epsilon_n \label{eq:weak_converse_use_fano_inequality}\\
    &= I(M ; \bm{C}) + I(M;L | \bm{C}) + n \epsilon_n \\
    &= I(M ;L | \bm{C}) + n \epsilon_n  \label{eq:appendix_weak_converse_IID_codebook_message_independence}\\
    &\leq I(M, \bm{C} ; L) + n\epsilon_n \\
    &\leq I(X^n(M) ;L) + n\epsilon_n \label{eq:appendix_weak_converse_IID_chain_rule} \\
    &=I(X^{n};L) + n \epsilon_n  \label{eq:appendix_weak_converse_IID_codeword_distribution} \\
    &=H(X^{n})- H(X^{n}|L) + n\epsilon_n \\
    &\leq \sum_{i=1}^{n} \left( H(X_i) - H(X_i|L, X^{i-1}) \right)+n\epsilon_n \\
    &\leq \sum_{i=1}^{n} \left( H(X_i) - H(X_i|L, Y^{i-1}, X^{i-1}) \right)+n\epsilon_n \\
    &=  \sum_{i=1}^{n} I(X_i;L, Y^{i-1}, X^{i-1}) + n \epsilon_n,
\end{align}
where \eqref{eq:weak_converse_use_fano_inequality} follows from \eqref{eq:Fano_inequality_ensemble}; \eqref{eq:appendix_weak_converse_IID_codebook_message_independence} is because the random ensemble is independent of the message, i.e., \(I(M;\bm{C}) = 0\);
\eqref{eq:appendix_weak_converse_IID_chain_rule} is due to the chain rule, in which \(X^n(M)\) is the random codeword due to the random message as well as the random ensemble \(\bm{C}\);
in \eqref{eq:appendix_weak_converse_IID_codeword_distribution} we notice that \(X^n(M)\) follows the same distribution as \(X^n\), i.e., \(X^n (M) \sim P_{X^n}\) (recall the definition in \eqref{eq:weak_converse/definition_ensemble_distribution}).
On the other hand, we have
\begin{align}
    nB & \geq H(L) \\
    &\geq I(L;Y^{n}, X^{n}) \\
    &= \sum_{i=1}^{n} I(L;Y_i, X_i | Y^{i-1}, X^{i-1}) \\
    &\geq \sum_{i=1}^{n} I(L;Y_i| Y^{i-1}, X^{i-1}).
\end{align}

Our proof will divert from the proof of \cite[Theorem 2]{sanderovichCommunicationDecentralizedProcessing2008} from now on.
The reason for this is the different prior distribution of codebooks we selected to model the obliviousness.
In \cite{sanderovichCommunicationDecentralizedProcessing2008}, the IID ensemble with codeword distribution \(P_X^n\) is considered, while we consider the constant composition ensemble.
In our case, \(X^n\) follows the distribution \(P_{X^n}\) rather than the IID distribution \(P_X^n\), so \(X^{i-1}\) and \(X^{i}\) are not independent of each other.
Therefore, under the constant composition ensemble, we cannot assert that \(Y_i\) is independent of \((Y^{i-1}, X^{i-1})\) and that the Markov chain \(X_i \to Y_i \to (L, Y^{i-1}, X^{i-1})\) holds, which are key steps of the proof in \cite{sanderovichCommunicationDecentralizedProcessing2008}.
As a result, we are unable to proceed in the standard manner of identifying an auxiliary random variable and Markov chain.
To address this issue, we will investigate the behavior of the conditional distribution \(P_{X_i | X^{i-1}}\) under \(P_{X^n}\).
As a result, we establish several properties for \(P_{X^n}\) and \(P_{X_i | X^{i-1}}\) which are essential for our converse proof, and may also be of independent interest.

\subsection{Properties of the Constant Composition Distribution}
\label{sec:weak_converse_properties_constant_composition_ensemble}
The first property of \(P_{X^n}\) concerns its marginal distributions, which has appeared in, e.g., \cite[Lemma 5.9]{moserAdvancedTopicsInformation2024} in a different context.
Here, we provide a different proof from the one in
\cite{moserAdvancedTopicsInformation2024}.
\begin{lemma}[Marginal Distribution]
    \label{lemma:marginal_ditribution_constant_composition_ensemble}
    The marginal distribution of \(P_{X^n}\) satisfies \(P_{X_i} = P_X\) for every \(i \in [n]\).
\end{lemma}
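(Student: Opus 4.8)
The plan is to exploit the permutation symmetry of the uniform distribution on a type class. First I would observe that $P_{X^n}$ is, by construction, invariant under any coordinate permutation $\pi \in \mathfrak{S}_n$: if $x^n \in \mathcal{T}_n(P_X)$ then $\pi(x^n) \in \mathcal{T}_n(P_X)$ as well, since permuting symbols does not change the type, and $|\mathcal{T}_n(P_X)|$ is unchanged; hence $P_{X^n}(\pi(x^n)) = P_{X^n}(x^n)$ for all $x^n$. Consequently the law of $X_i$ under $P_{X^n}$ is the same for every $i \in [n]$, because the transposition swapping coordinates $1$ and $i$ maps the event $\{X_i = a\}$ bijectively onto $\{X_1 = a\}$ while preserving probabilities. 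So it suffices to show $P_{X_1} = P_X$, or more conveniently to compute the average $\frac{1}{n}\sum_{i=1}^n P_{X_i}$ and use that all terms are equal.

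The second step is the averaging identity. For any fixed $x^n \in \mathcal{T}_n(P_X)$ and any symbol $a \in \mathcal{X}$, the number of coordinates $i$ with $x_i = a$ is exactly $n P_X(a)$, by definition of the type class. Therefore
\begin{align}
    \sum_{i=1}^n P_{X_i}(a)
    &= \sum_{i=1}^n \sum_{x^n \in \mathcal{T}_n(P_X)} P_{X^n}(x^n)\, \idc{x_i = a} \nonumber \\
    &= \sum_{x^n \in \mathcal{T}_n(P_X)} P_{X^n}(x^n) \sum_{i=1}^n \idc{x_i = a} \nonumber \\
    &= \sum_{x^n \in \mathcal{T}_n(P_X)} P_{X^n}(x^n)\, n P_X(a) = n P_X(a).
\end{align}
Combining this with the symmetry argument above, which gives $P_{X_i}(a) = P_{X_1}(a)$ for all $i$, we get $n P_{X_1}(a) = n P_X(a)$, i.e. $P_{X_1}(a) = P_X(a)$, and hence $P_{X_i} = P_X$ for every $i$.

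There is essentially no obstacle here; the lemma is a direct consequence of exchangeability of the uniform distribution on a type class. The only point requiring a line of care is justifying the interchange of the finite sums and the appeal to symmetry, both of which are routine. (An alternative, slightly more computational route would directly count: $P_{X_i}(a) = |\mathcal{T}_n(P_X) \cap \{x^n : x_i = a\}| / |\mathcal{T}_n(P_X)|$, and evaluate the numerator via the multinomial coefficient identity $\binom{n-1}{\,\{nP_X(b) - \idc{b=a}\}_b\,} = P_X(a)\binom{n}{\{nP_X(b)\}_b}$; but the symmetry-plus-averaging argument avoids the bookkeeping.) I expect the symmetry argument to be the cleaner presentation and would use that as the main proof.
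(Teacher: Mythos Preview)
Your proof is correct and takes a genuinely different route from the paper. The paper does precisely the direct multinomial count you mention parenthetically at the end: it writes $P_{X_i}(b) = |\{x^n \in \mathcal{T}_n(P_X): x_i = b\}|/|\mathcal{T}_n(P_X)|$, evaluates numerator and denominator as multinomial coefficients, and cancels to obtain $nP_X(b)/n = P_X(b)$. Your main argument instead combines exchangeability (all coordinate marginals coincide) with the averaging identity $\sum_{i=1}^n P_{X_i}(a) = nP_X(a)$ coming from the fixed symbol count in a type class. Your approach is cleaner and more conceptual, avoiding any factorial bookkeeping; the paper's approach is slightly more self-contained in that it computes each $P_{X_i}(b)$ directly without first invoking the symmetry step. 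Both are short and equally valid; your symmetry-plus-averaging route arguably generalizes more readily (e.g., to other exchangeable distributions supported on a single type).
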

\begin{proof}
See Appendix \ref{appendix:proof_marginal_distribution_lemma}.
\end{proof}
The next result looks into the conditional distribution \(P_{X_{i+1}|X^{i}}\) under joint distribution \(P_{X^n}\).
\begin{lemma}[Conditional Distribution]
    \label{lemma:conditional_distribution_constant_composition_ensemble}
    Under \(P_{X^n}\) and for every  \(i \in [n]\), the marginal prefix distribution \(P_{X^{i}}\) is supported on the set of \(x^i\) satisfying that there exists a suffix type \(Q_X^{\ast} \in \mathcal{S}_{n-i}(\mathcal{X})\) such that
    \begin{equation}
        i\hat{P}_{x^{i} }(a) + (n-i)Q_X^{\ast}(a) = nP_X(a), \qquad \forall a \in \mathcal{X}.
    \end{equation}
    Further, given a prefix \(x^i \in \supp(P_{X^i})\) with its corresponding suffix type being \(Q_X^{\ast}\), we have
    \begin{equation}
        P_{X_{i+1}|X^{i}}(a|x^i) = Q_X^{\ast}(a), \qquad \forall a \in \mathcal{X}.
    \end{equation}
\end{lemma}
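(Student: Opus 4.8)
The plan is to exploit the permutation invariance of the type class: since $P_{X^n}$ is uniform on $\mathcal{T}_n(P_X)$ and this set is closed under coordinate permutations, conditioning on any prefix $x^i$ produces a uniform distribution on the set of admissible suffixes, and that suffix set is itself a \emph{single} type class. Once this is in place, the support statement reduces to asking which suffix types can complete $x^i$, and the conditional-distribution statement reduces to reading off the first-coordinate marginal of a uniform sequence from that type class, which is exactly what Lemma \ref{lemma:marginal_ditribution_constant_composition_ensemble} computes.

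First I would settle the support characterization. A prefix $x^i$ has $P_{X^i}(x^i)>0$ precisely when some suffix $x_{i+1}^n$ completes it to a sequence in $\mathcal{T}_n(P_X)$. By \eqref{eq:weak_converse_constant_ensemble_Q_hat}, any such suffix must have the type
\[
  Q_X^{\ast}(a)=\frac{nP_X(a)-i\hat{P}_{x^i}(a)}{n-i},\qquad a\in\mathcal{X},
\]
so the completing suffix type is \emph{uniquely determined} by $x^i$ (this is what makes the phrase ``its corresponding suffix type'' in the statement well-posed), and it lies in $\mathcal{S}_{n-i}(\mathcal{X})$ exactly because it is realized as the suffix type of a member of $\mathcal{T}_n(P_X)$. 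Conversely, if $Q_X^{\ast}\in\mathcal{S}_{n-i}(\mathcal{X})$ satisfies the displayed identity, then $\mathcal{T}_{n-i}(Q_X^{\ast})\neq\varnothing$ and any of its elements completes $x^i$, so $P_{X^i}(x^i)>0$. This proves the first claim.

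For the second claim, I would fix $x^i\in\supp(P_{X^i})$ with corresponding suffix type $Q_X^{\ast}$, and note that since $P_{X^n}$ is uniform on $\mathcal{T}_n(P_X)$,
\[
  P_{X_{i+1}^n\mid X^i}(x_{i+1}^n\mid x^i)=\frac{\idc{x_{i+1}^n\in\mathcal{T}_{n-i}(Q_X^{\ast})}}{|\mathcal{T}_{n-i}(Q_X^{\ast})|},
\]
i.e., conditioned on $X^i=x^i$, the suffix $X_{i+1}^n$ is uniformly distributed on the type class $\mathcal{T}_{n-i}(Q_X^{\ast})$ --- which is exactly the constant-composition distribution of block length $n-i$ and type $Q_X^{\ast}$. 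Applying Lemma \ref{lemma:marginal_ditribution_constant_composition_ensemble} with block length $n-i$ and type $Q_X^{\ast}$ then gives that the first coordinate $X_{i+1}$ has marginal $Q_X^{\ast}$, i.e. $P_{X_{i+1}\mid X^i}(a\mid x^i)=Q_X^{\ast}(a)$ for all $a$.

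I do not anticipate a real obstacle; the only points needing care are the well-definedness of the conditioning (covered by the support characterization above) and the uniqueness of $Q_X^{\ast}$, and noting the case $i=n$ is vacuous. If one prefers to avoid invoking Lemma \ref{lemma:marginal_ditribution_constant_composition_ensemble}, the identical conclusion follows from a direct count: $P_{X_{i+1}\mid X^i}(a\mid x^i)=|\mathcal{T}_{n-i-1}(Q_X')|/|\mathcal{T}_{n-i}(Q_X^{\ast})|$, where $(n-i-1)Q_X'(b)=(n-i)Q_X^{\ast}(b)-\idc{b=a}$, and expanding the two multinomial coefficients collapses this ratio to $(n-i)Q_X^{\ast}(a)/(n-i)=Q_X^{\ast}(a)$, with the subcase $(n-i)Q_X^{\ast}(a)=0$ being trivially consistent on both sides.
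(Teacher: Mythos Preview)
Your proposal is correct and matches the paper's proof essentially step for step: the paper also observes that conditioning on a prefix $x^i$ leaves the suffix uniform on $\mathcal{T}_{n-i}(Q_X^{\ast})$ (via the ratio $P_{X^i,X_{i+1}}(x^i,a)/P_{X^i}(x^i)$), and then invokes Lemma~\ref{lemma:marginal_ditribution_constant_composition_ensemble} on the suffix distribution to read off $P_{X_{i+1}\mid X^i}(a\mid x^i)=Q_X^{\ast}(a)$. Your alternative direct-count argument at the end is likewise just an inlining of the proof of Lemma~\ref{lemma:marginal_ditribution_constant_composition_ensemble}.
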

\begin{proof}
    See Appendix \ref{appendix:proof_conditional_distribution_lemma}.
\end{proof}

The following immediate corollary of Lemma \ref{lemma:conditional_distribution_constant_composition_ensemble} reveals the behavior of \(P_{X_{i+1} | X^{i}}\) for certain \(x^i\).
\begin{corollary}[Almost Independent]
    \label{cor:typical_conditional_distribution_constant_composition_ensemble}
    Given any \(i \in [n]\) and \(\delta \geq 0\), and for every prefix \(x^{i} \in \supp(P_{X^i})\) whose suffix type \(Q_X^{\ast}\) satisfies \( Q_X^{\ast} \overset{\delta}{\sim} P_X\), we have \(P_{X_{i+1} |X^i }(\cdot| x^{i}) \overset{\delta}{\sim} P_X \).
\end{corollary}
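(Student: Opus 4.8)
The plan is to obtain the corollary as a one-line consequence of Lemma~\ref{lemma:conditional_distribution_constant_composition_ensemble}, so there is essentially no additional work. First I would fix $i \in [n]$, $\delta \geq 0$, and a prefix $x^{i} \in \supp(P_{X^i})$ whose associated suffix type is $Q_X^{\ast}$ (this suffix type exists and is well defined for every such $x^i$ by the first part of Lemma~\ref{lemma:conditional_distribution_constant_composition_ensemble}, so nothing new needs to be verified there). By the second part of that lemma, the conditional distribution is given \emph{exactly} by $P_{X_{i+1}\mid X^{i}}(a \mid x^{i}) = Q_X^{\ast}(a)$ for every $a \in \mathcal{X}$; in particular $\abs{P_{X_{i+1}\mid X^{i}}(a \mid x^{i}) - P_X(a)} = \abs{Q_X^{\ast}(a) - P_X(a)}$ for all $a$.

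Then I would simply invoke the hypothesis $Q_X^{\ast} \overset{\delta}{\sim} P_X$, which by definition is the statement $\abs{Q_X^{\ast}(a) - P_X(a)} \leq \delta P_X(a)$ for all $a \in \mathcal{X}$, and combine it with the previous identity to get $\abs{P_{X_{i+1}\mid X^{i}}(a \mid x^{i}) - P_X(a)} \leq \delta P_X(a)$ for all $a$, i.e.\ $P_{X_{i+1}\mid X^{i}}(\cdot \mid x^{i}) \overset{\delta}{\sim} P_X$, as claimed. Since the argument is a direct substitution, there is no genuine obstacle; the only conceptual content is already packaged inside Lemma~\ref{lemma:conditional_distribution_constant_composition_ensemble}. (The reason the corollary is worth isolating is its later role in the converse: it says that whenever the suffix type of the observed prefix stays close to $P_X$, the next codeword symbol behaves almost as if it were drawn i.i.d.\ from $P_X$, which is precisely what permits the constant-composition converse to follow the i.i.d.-ensemble argument of \cite{sanderovichCommunicationDecentralizedProcessing2008}.)
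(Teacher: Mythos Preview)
Your proposal is correct and matches the paper's treatment exactly: the paper also presents this corollary as an immediate consequence of Lemma~\ref{lemma:conditional_distribution_constant_composition_ensemble} and does not even supply a separate proof. Your one-line substitution argument is precisely the intended reasoning.
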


If a prefix \(x^i \in \supp(P_{X^i})\) is such that \( Q_X^{\ast} \overset{\delta}{\sim} P_X\), i.e., its suffix  belongs to \( \mathcal{T}_{n-i}^{\delta}(P_X)\), then Corollary \ref{cor:typical_conditional_distribution_constant_composition_ensemble} shows that the conditional distribution \(P_{X_{i+1} | X^{i}}( \cdot | x^{i})\) will behave similarly to \(P_X\), i.e., almost independent.
Therefore, it is of interest to know the probability of such prefix \(x^i\) under \(P_{X^n}\), which we investigate next.

\begin{lemma}[Typical Subsequence]
    \label{lemma:weak_converse_constant_ensemble_window_typical}
    Under \(P_{X^n}\), for every \(\delta > 0 \),  \(i \in [n]\) and \(k \in [n-i+1]\), we have
    \begin{equation}
        \P\{X^n \in \mathcal{T}_n(P_X): X^{k+i-1}_{k} \notin \mathcal{T}_i^{\delta}(P_X)\} \leq 2\abs{\mathcal{X}}e^{\abs{\mathcal{X}}\log(n+1) - i\delta^2 P_{\min}^2},
    \end{equation}
    where \(P_{\min} = \min_{a \in \mathcal{X}: P_X(a)>0}P_X(a)\).
\end{lemma}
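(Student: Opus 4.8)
The goal is to bound the probability that a length-$i$ sliding window $X^{k+i-1}_k$ of a uniformly random sequence from $\mathcal{T}_n(P_X)$ is not $\delta$-typical. The plan is to first reduce to a single window via a union bound over the $\abs{\mathcal{X}}$ symbols, then for each symbol $a$ control the deviation of the empirical count $i\hat P_{X^{k+i-1}_k}(a)$ from its mean $iP_X(a)$ using a concentration argument tailored to sampling without replacement.

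**Key steps.** First, unfold the typicality event: $X^{k+i-1}_k \notin \mathcal{T}_i^\delta(P_X)$ means there exists $a \in \mathcal{X}$ with $\abs{\hat P_{X^{k+i-1}_k}(a) - P_X(a)} > \delta P_X(a)$, so by the union bound it suffices to bound $\P\{\abs{N_a - iP_X(a)} > i\delta P_X(a)\}$ for each fixed $a$, where $N_a \triangleq \sum_{j=k}^{k+i-1}\idc{X_j = a}$, and then multiply by $\abs{\mathcal{X}}$; the factor $2$ in the claimed bound will come from splitting $\abs{\cdot}>\cdot$ into the upper and lower tails. Second, observe that under $P_{X^n}$ the coordinates $(X_k,\dots,X_{k+i-1})$ are, by symmetry, distributed exactly as $i$ draws without replacement from an urn containing $nP_X(b)$ balls of color $b$ for each $b$ — equivalently $N_a$ is hypergeometric with population $n$, $nP_X(a)$ successes, $i$ draws. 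Third, invoke a Chernoff/Hoeffding-type bound for the hypergeometric distribution; the cleanest route is to use the fact (Hoeffding) that the hypergeometric is more concentrated than the corresponding binomial $\mathrm{Bin}(i, P_X(a))$, so that for each tail $\P\{N_a - iP_X(a) > i\delta P_X(a)\} \le e^{-i D(P_X(a)(1+\delta)\,\|\,P_X(a))}$ and similarly for the lower tail, and then lower-bound the relative entropy by a quadratic, $D((1+\delta)p\,\|\,p) \ge \tfrac{\delta^2 p}{2}$-type estimate, or more crudely by $\delta^2 p^2$ after absorbing constants, to extract $i\delta^2 P_X(a)^2 \ge i\delta^2 P_{\min}^2$ in the exponent. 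Putting these together gives $\P\{X^{k+i-1}_k \notin \mathcal{T}_i^\delta(P_X)\} \le 2\abs{\mathcal{X}} e^{-i\delta^2 P_{\min}^2}$, and the $(n+1)^{\abs{\mathcal{X}}}$ factor is slack that can be inserted trivially — it likely appears in the paper because the authors prefer a type-counting argument (sum over prefix/suffix types in $\mathcal{S}_i(\mathcal{X})$, each of which has bounded cardinality) rather than invoking a hypergeometric concentration inequality directly.

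**Alternative via types.** If one wants to avoid citing hypergeometric concentration, the type-based route is: condition on the suffix type $Q_X^\ast$ of $X^{k+i-1}_{k}$ (or directly on the type of the window), write $\P\{\hat P_{X^{k+i-1}_k} = Q\} = \binom{n}{i}^{-1}\prod_a \binom{nP_X(a)}{iQ(a)}$ by counting, bound each such probability by $e^{-iD(Q\|P_X)}$ up to polynomial factors using standard type-size estimates (this is where $(n+1)^{\abs{\mathcal{X}}}$ enters, from $\abs{\mathcal{P}_i(\mathcal{X})} \le (i+1)^{\abs{\mathcal{X}}} \le (n+1)^{\abs{\mathcal{X}}}$), and then sum over all atypical types $Q$, using Pinsker-type lower bounds $D(Q\|P_X) \ge$ (something quadratic in the componentwise deviation) to get the $\delta^2 P_{\min}^2$ exponent. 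This is essentially a finite-blocklength, without-replacement analogue of Sanov's theorem.

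**Main obstacle.** The only real subtlety is the concentration step for sampling without replacement: one must either correctly invoke Hoeffding's reduction of the hypergeometric to the binomial (and then a Chernoff bound on the binomial), or carry out the direct type-counting estimate and handle the relative-entropy-to-quadratic conversion carefully near the boundary where $P_X(a)$ is small — this is exactly why $P_{\min}$ appears, and why robust (proportional) typicality rather than strong (additive) typicality is the natural choice here, since the $\delta P_X(a)$ slack scales with the count being estimated. Everything else (the two union bounds, the symmetry reduction, the slack polynomial factor) is routine.
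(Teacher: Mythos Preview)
Your primary route via hypergeometric concentration is correct and in fact more direct than the paper's argument: you observe that under $P_{X^n}$ the window $(X_k,\ldots,X_{k+i-1})$ is a sample without replacement, so each symbol count $N_a$ is hypergeometric, and Hoeffding's reduction to the binomial followed by the standard Chernoff--Hoeffding tail bound yields $2\abs{\mathcal{X}}e^{-i\delta^2 P_{\min}^2}$; the factor $(n+1)^{\abs{\mathcal{X}}}$ is then pure slack, exactly as you surmised.

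The paper instead takes a route close to your alternative, but with a specific identity worth noting. It writes
\[
\P\{\hat P_{X^i}=Q_X\}=\frac{\abs{\mathcal{T}_i(Q_X)}\,\abs{\mathcal{T}_{n-i}(Q_X^\ast)}}{\abs{\mathcal{T}_n(P_X)}}
=\frac{P_X^i[\mathcal{T}_i(Q_X)]\,P_X^{n-i}[\mathcal{T}_{n-i}(Q_X^\ast)]}{P_X^n[\mathcal{T}_n(P_X)]}
\le (n+1)^{\abs{\mathcal{X}}}\,P_X^i[\mathcal{T}_i(Q_X)],
\]
obtained by multiplying numerator and denominator by $e^{-nH(P_X)}=\prod_a P_X(a)^{nP_X(a)}$ and splitting the exponent via $nP_X(a)=iQ_X(a)+(n-i)Q_X^\ast(a)$. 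This converts the without-replacement probability into the IID one at the cost of the polynomial; summing over atypical $Q_X$ gives $(n+1)^{\abs{\mathcal{X}}}\bigl(1-P_X^i[\mathcal{T}_i^\delta(P_X)]\bigr)$, and the IID atypicality probability is then bounded by Hoeffding on $P_X^i$ exactly as in your last step. So the paper avoids citing the hypergeometric--binomial comparison by paying the polynomial factor, whereas your primary route pays nothing but needs that comparison; your alternative is essentially the paper's argument, just phrased through $e^{-iD(Q\|P_X)}$ rather than the explicit IID-conversion identity above.
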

\begin{proof}
    See Appendix \ref{appendix:proof_window_typical_constant_ensemble}.
\end{proof}
\begin{remark}
    If we fix a sliding window \([k:k+i-1]\), then Lemma \ref{lemma:weak_converse_constant_ensemble_window_typical} provides an upper bound on the probability of observing a non-typical subsequence \(x_{k}^{k+i-1} = (x_k, x_{k+1}, \ldots, x_{k}^{k+i-1})\).
\end{remark}
The next corollary lower bounds the probability of sequences \(x^n \in \mathcal{T}_n(P_X)\) whose prefix \(x^i\) and suffix \(x^{n}_{i+1}\) are both \(P_X\)-typical, which follows immediately from Lemma \ref{lemma:weak_converse_constant_ensemble_window_typical} and the union bound.
\begin{corollary}
    \label{cor:weak_converse_constant_ensemble_prefix_suffix_typical}
    Under \(P_{X^n}\), for every \(i\) with \(\sqrt{n} \leq i \leq n-\sqrt{n}\), we have
    \begin{align}
        \P\{X^n \in \mathcal{T}_n(P_X): X^{i} \in \mathcal{T}_i^{\delta_n}(P_X), X^{n}_{i+1} \in \mathcal{T}_{n-i}^{\delta_n}(P_X)\}\geq 1- 4\abs{\mathcal{X}}e^{\abs{\mathcal{X}}\log(n+1) - n^{\frac{1 }{4}}P_{\min}^2},
    \end{align}
    where \(\delta_n = n^{-\frac{1}{8}}\) and \(P_{\min} = \min_{a \in \mathcal{X}: P_X(a)>0}P_X(a)\).
\end{corollary}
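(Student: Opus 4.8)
The plan is to bound the probability of the complementary ``bad'' event by splitting it into a prefix part and a suffix part, applying Lemma \ref{lemma:weak_converse_constant_ensemble_window_typical} to each, and then taking a union bound. Concretely, the event in the corollary fails precisely when $X^i \notin \mathcal{T}_i^{\delta_n}(P_X)$ \emph{or} $X^n_{i+1} \notin \mathcal{T}_{n-i}^{\delta_n}(P_X)$, so it suffices to upper bound the probability of each of these two sub-events and add them.

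For the prefix part, I would invoke Lemma \ref{lemma:weak_converse_constant_ensemble_window_typical} with the sliding window $[1:i]$ (i.e., starting index $k=1$ and window length $i$) and deviation $\delta_n$, obtaining
\begin{equation*}
    \P\{X^n \in \mathcal{T}_n(P_X): X^i \notin \mathcal{T}_i^{\delta_n}(P_X)\} \leq 2\abs{\mathcal{X}}e^{\abs{\mathcal{X}}\log(n+1) - i\delta_n^2 P_{\min}^2}.
\end{equation*}
For the suffix part, I would apply the same lemma with the window $[i+1:n]$ (starting index $k=i+1$, window length $n-i$), which is admissible because $k=i+1$ lies in the required range $[n-(n-i)+1]=[i+1]$; here I would use the observation already noted in the excerpt that any suffix type of a sequence in $\mathcal{T}_n(P_X)$ is itself a legitimate type of a shorter block, so the typical-subsequence bound applies verbatim and gives the analogous estimate with $i\delta_n^2$ replaced by $(n-i)\delta_n^2$.

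Finally, I would substitute $\delta_n = n^{-1/8}$, so that $\delta_n^2 = n^{-1/4}$, and use the hypothesis $\sqrt{n}\leq i\leq n-\sqrt{n}$, which guarantees $i\geq\sqrt{n}$ and $n-i\geq\sqrt{n}$; hence $i\delta_n^2\geq n^{1/4}$ and $(n-i)\delta_n^2\geq n^{1/4}$. Both sub-event probabilities are therefore at most $2\abs{\mathcal{X}}e^{\abs{\mathcal{X}}\log(n+1) - n^{1/4}P_{\min}^2}$, the union bound gives the claimed bound on the bad event, and passing to the complement finishes the proof. The only point requiring a little care is reconciling the notation of Lemma \ref{lemma:weak_converse_constant_ensemble_window_typical} (whose window-length parameter is itself called $i$) with the two specific windows used here, and verifying the admissibility constraint on the starting index in each case; there is no substantive obstacle beyond this bookkeeping.
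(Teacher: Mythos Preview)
Your proposal is correct and follows essentially the same approach as the paper's own proof: apply Lemma~\ref{lemma:weak_converse_constant_ensemble_window_typical} separately to the prefix window $[1:i]$ and the suffix window $[i+1:n]$, substitute $\delta_n=n^{-1/8}$ together with $i\geq\sqrt{n}$ and $n-i\geq\sqrt{n}$ to reduce each exponent to $n^{1/4}P_{\min}^2$, take the union bound, and pass to the complement.
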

\begin{proof}
See Appendix \ref{appendix:proof_prefix_suffix_typical_constant_ensemble}.
\end{proof}
Corollary \ref{cor:weak_converse_constant_ensemble_prefix_suffix_typical} shows that under \(P_{X^n}\) and for \(\sqrt{n}  \leq i \leq n - \sqrt{n}\), we have a high probability to observe a sequence whose prefix \(x^i\) and suffix \(x^{n}_{i+1}\) are both \(P_X\)-typical.
It can be seen from the proof that \(\delta_n=n^{-\frac{1}{8}}\) and \(\sqrt{n}\) are selected arbitrarily, merely as an example to show the concentration of probability as \(n\) grows.
\begin{remark}
    \label{rem:weak_converse_almost_independent_both_direction}
    Corollary \ref{cor:typical_conditional_distribution_constant_composition_ensemble} is also true for the set \(\{x^n \in \mathcal{T}_n(P_X): x^{i} \in \mathcal{T}_i^{\delta_n}(P_X), x^{n}_{i+1} \in \mathcal{T}_{n-i}^{\delta_n}(P_X)\}\), i.e., it holds for both directions \(P_{X_{i+1} | X^{i}}\) and \(P_{X_{i} | X^{n}_{i+1}}\).
    Corollary \ref{cor:weak_converse_constant_ensemble_prefix_suffix_typical} reveals that this set also has high probability.
\end{remark}
\subsection{Main Proof}
\label{sec:weak_converse_Fano_inequality}
We first present an auxiliary result, which is key for establishing the converse proof.
\begin{lemma}
    \label{lemma:weak_converse_replacing}
    Consider three random variables \((X, Y, Z) \sim P_{XYZ}\) where \(P_{XYZ} = P_{X}P_{Y|X}P_{Z|YX}\).
    Assume there exists a subset \(\mathcal{E} \subset \mathcal{X}\) such that \(P_{Y|X}(\cdot | x ) \overset{\delta}{\sim} P_Y\) for every \(x \in \mathcal{E}\).
    Let \((X, \tilde{Y}, \tilde{Z}) \sim \tilde{P}_{XYZ}\) where \(\tilde{P}_{XYZ} = P_XP_YP_{Z|YX}\).
    Then, there exits a continuous function \(\epsilon:[0,1) \to \mathbb{R}\) with \(\epsilon(0) = 0\) such that
    \begin{equation}
        | H(Y| Z, X) - H(\tilde{Y} | \tilde{Z}, X) | < (\epsilon(\delta) + 1-P_{X}[\mathcal{E}]  ) \log|\mathcal{Y}|.\label{eq:weak_converse_replacing_statement}
    \end{equation}
\end{lemma}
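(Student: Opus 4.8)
The plan is to reduce the claim to a continuity-of-entropy estimate on the ``good'' symbols $x\in\mathcal{E}$, while bounding the contribution of the remaining symbols crudely by $\log|\mathcal{Y}|$ times their total mass $1-P_X[\mathcal{E}]$. Since $P_{XYZ}$ and $\tilde P_{XYZ}$ share the same $X$-marginal $P_X$ and the same kernel $P_{Z|YX}$, I would first write
\[
    H(Y|Z,X) - H(\tilde Y|\tilde Z,X) = \sum_{x\in\mathcal{X}} P_X(x)\big( H(Y|Z,X=x) - H(\tilde Y|\tilde Z,X=x) \big)
\]
and split the sum over $x\in\mathcal{E}$ and $x\notin\mathcal{E}$. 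For $x\notin\mathcal{E}$ each summand is at most $\log|\mathcal{Y}|$ in absolute value, because $0\le H(Y|Z,X=x)\le H(Y|X=x)\le\log|\mathcal{Y}|$ and likewise for the tilde law; hence the bad part contributes at most $(1-P_X[\mathcal{E}])\log|\mathcal{Y}|$.

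For a fixed $x\in\mathcal{E}$, the key observation is that, conditioned on $X=x$, the joint law of $(Y,Z)$ is $P_{Y|X}(\cdot|x)\,P_{Z|YX}(\cdot|\cdot,x)$ while that of $(\tilde Y,\tilde Z)$ is $P_Y\,P_{Z|YX}(\cdot|\cdot,x)$; the two differ only through the marginal of the first coordinate, the conditional kernel of the second coordinate being identical. Therefore their $\ell_1$ distance equals $\|P_{Y|X}(\cdot|x)-P_Y\|_1$, which is at most $\delta\sum_{a}P_Y(a)=\delta$ by the hypothesis $P_{Y|X}(\cdot|x)\overset{\delta}{\sim}P_Y$, and the $Z$-marginals are then within $\ell_1$ distance $\delta$ as well. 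Writing $H(Y|Z,X=x)=H(YZ|X=x)-H(Z|X=x)$ and invoking the standard continuity of entropy on finite alphabets \cite{csiszarInformationTheoryCoding2011} for both terms, I would conclude $|H(Y|Z,X=x)-H(\tilde Y|\tilde Z,X=x)|\le g(\delta)$, where $g$ depends only on $\delta$, $|\mathcal{Y}|$ and $|\mathcal{Z}|$, is continuous, and satisfies $g(0)=0$.

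Averaging this good-symbol bound against $P_X$ and combining with the bad-symbol bound via the triangle inequality yields $|H(Y|Z,X)-H(\tilde Y|\tilde Z,X)|\le g(\delta)+(1-P_X[\mathcal{E}])\log|\mathcal{Y}|$, so taking $\epsilon(\delta)=g(\delta)/\log|\mathcal{Y}|$ (assuming $|\mathcal{Y}|\ge 2$, the only case of interest) proves \eqref{eq:weak_converse_replacing_statement}. The step that requires the most care is the continuity estimate: it is genuinely \emph{conditional} entropy, not plain entropy, that must be controlled, and the cleanest route is the $H(YZ)-H(Z)$ decomposition, which unavoidably introduces a $|\mathcal{Z}|$-dependence that gets absorbed into $g$. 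An alternative I would keep in mind is to construct an explicit coupling of $(Y,Z)$ and $(\tilde Y,\tilde Z)$ under which they agree with probability at least $1-\delta/2$, and to bound the entropy gap by a Fano-type inequality with the disagreement indicator; this avoids the joint alphabet at the cost of more bookkeeping.
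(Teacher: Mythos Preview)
Your split into $x\in\mathcal{E}$ and $x\notin\mathcal{E}$ and the crude bound on the bad part are exactly right and match the paper. The gap is in your treatment of the good part. Your $H(YZ|X=x)-H(Z|X=x)$ decomposition forces a continuity-of-entropy estimate on distributions over $\mathcal{Z}$ and over $\mathcal{Y}\times\mathcal{Z}$, and you correctly note that this makes $g$ depend on $|\mathcal{Z}|$. You then absorb that dependence into $\epsilon$. But in the application of this lemma (Section~\ref{sec:weak_converse_Fano_inequality}) the role of $Z$ is played by $(L,Y^{i-1})$, whose alphabet grows exponentially in $n$; an $\epsilon$ that depends on $|\mathcal{Z}|$ would therefore blow up and the lemma would become vacuous exactly where it is needed. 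The paper explicitly flags this at the end of its proof: avoiding any $|\mathcal{Z}|$-dependence is the whole point.

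The fix, and what the paper does, is to exploit the \emph{multiplicative} (robust-typicality) form of the hypothesis rather than just the $\ell_1$ bound it implies. From $P_{Y|X}(\cdot|x)\overset{\delta}{\sim}P_Y$ one gets $P_{YZ|X}(\cdot|x)\overset{\delta}{\sim}\tilde P_{YZ|X}(\cdot|x)$ and hence $P_{Z|X}(\cdot|x)\overset{\delta}{\sim}\tilde P_{Z|X}(\cdot|x)$ and $P_{Y|ZX}(\cdot|z,x)\overset{2\delta/(1-\delta)}{\sim}\tilde P_{Y|ZX}(\cdot|z,x)$. Then write $H(Y|Z,X=x)=\sum_z P_{Z|X}(z|x)\,H(P_{Y|ZX}(\cdot|z,x))$, replace the outer weights by $(1+\delta)\tilde P_{Z|X}(z|x)$ (valid since the summand is nonnegative), and apply continuity of entropy only to the inner distributions, which live on $\mathcal{Y}$. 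This yields an error term depending on $\delta$ and $|\mathcal{Y}|$ only. Your coupling alternative is closer in spirit to this, but as sketched it still controls the joint and does not obviously escape $|\mathcal{Z}|$; the clean route is the ratio bound above.
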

\begin{proof}
Intuitively speaking, if \(\delta \approx 0\) (i.e., \(\epsilon(\delta) \approx 0\)) and \(P_X[\mathcal{E}] \approx 1\), then the two distributions \(P_{XYZ}\) and \(\tilde{P}_{XYZ}\) are the same, so \eqref{eq:weak_converse_replacing_statement} naturally holds.
A detailed proof is provided in Appendix \ref{appendix:weak_converse_replacing_lemma}.
\end{proof}
Now we can continue the converse proof, which relies on Remark \ref{rem:weak_converse_almost_independent_both_direction} and Lemma \ref{lemma:weak_converse_replacing}.
Remark \ref{rem:weak_converse_almost_independent_both_direction} shows that for the majority of prefixes \(x^{i-1}\) (in the sense of high probability), we have \(P_{X_{i} | X^{i-1}}(\cdot | x^{i-1}) \overset{\delta_n}{\sim} P_{X}\).
Hence, we construct a new joint distribution by replacing \(P_{X_{i} | X^{i-1}}\) with \(P_X\).
Lemma \ref{lemma:weak_converse_replacing} tells us that the two joint distributions are asymptotically the same (due to \(\delta_n = n^{-\frac{1}{8}} \to 0\) and Corollary \ref{cor:weak_converse_constant_ensemble_prefix_suffix_typical}).
Since \(X^{i}\) and \(X^{i-1}\) are independent under the constructed distribution, we can then apply the familiar converse technique on it for the auxiliary random variable and Markov chain.
Details are presented next.

Fix an arbitrary constant \(\tau \in (0,1)\).
Recall that we have previously arrived at
\begin{align}
    nR &\leq \sum_{i=1}^{n} I(X_i; L, Y^{i-1}, X^{i-1}) + n\epsilon_n, \\
    nB &\geq \sum_{i=1}^{n} I(L;Y_i| Y^{i-1}, X^{i-1}).
\end{align}
Observe that
\begin{align}
    R & \leq \frac{1}{n}\sum_{i=1}^{n} I(X_i; L, Y^{i-1}, X^{i-1}) + \epsilon_n \\
    & \leq \frac{1}{n}\sum_{i=\sqrt{n} + 1}^{n-\sqrt{n}} I(X_i; L, Y^{i-1}, X^{i-1}) + \frac{2\sqrt{n}}{n} \log \abs{\mathcal{X}} + \epsilon_n \label{eq:weak_converse_R_discarding_sqrt_n}\\
    & \leq \frac{1}{n}\sum_{i=\sqrt{n} + 1}^{n-\sqrt{n}} I(X_i; L, Y^{i-1}, X^{i-1}) + \tau + \epsilon_n \label{eq:weak_converse_R_less_than_tau}\\
    & = -\frac{1}{n}\sum_{i=\sqrt{n} + 1 }^{n-\sqrt{n}} H(X_i| L, Y^{i-1}, X^{i-1}) + \frac{1}{n}\sum_{i=\sqrt{n} +1 }^{n-\sqrt{n}} H(X_i) + \tau  + \epsilon_n,
\end{align}
where  \eqref{eq:weak_converse_R_discarding_sqrt_n} is due to \(I(X_i; L, Y^{i-1}, X^{i-1}) \leq \log \abs{\mathcal{X}}\); and in \eqref{eq:weak_converse_R_less_than_tau},  $\frac{2\sqrt{n}}{n} \log \abs{\mathcal{X}} \leq \tau$ for large \(n\).

For every \(i \in [n]\), consider the underlying distribution of the whole system
\begin{align}
    P_{X_i, Y_i, L, Y^{i-1}, X^{i-1}} &= P_{X^{i-1}} \times P_{X_i | X^{i-1}} \times P_{L, Y^{i-1}, Y_{i} | X_i, X^{i-1}},
\end{align}
where due to the DMC \(P_{Y|X}\) and the processing at the relay we have
\begin{equation}
    P_{L, Y^{i-1}, Y_{i} | X_i, X^{i-1}} = P_{Y^{i-1} | X^{i-1} } \times P_{Y_i | X_i} \times P_{L | Y^{i-1}, Y_i}.
\end{equation}
Now, define an auxiliary distribution
\begin{align}
    \tilde{P}_{X_i, Y_i, L, Y^{i-1}, X^{i-1}} \triangleq P_{X^{i-1}} \times P_{X_i} \times P_{L, Y^{i-1}, Y_{i} | X_i, X^{i-1}}.
\end{align}
As we can see, the only difference between the two distributions is the replacement of \(P_{X_i | X^{i-1}}\) with \(P_{X_i}\).
We will denote by \((\tilde{X}_i, \tilde{Y}_i, \tilde{L}_i, Y^{i-1}, X^{i-1})\) the random vector associated with \(\tilde{P}_{X_i, Y_i, L, Y^{i-1}, X^{i-1}}\), where we notice that \(X^{i-1}\) and \( Y^{i-1}\) remain unchanged after replacement.
After marginalizing over \(\tilde{Y}_i\), we obtain
\begin{align}
    P_{X_i, L, Y^{i-1}, X^{i-1}} & = P_{X^{i-1}} \times P_{X_i | X^{i-1}} \times P_{L, Y^{i-1} | X_i, X^{i-1}} \\
    \tilde{P}_{X_i, L, Y^{i-1}, X^{i-1}} & = P_{X^{i-1}} \times P_{X_i} \times P_{L, Y^{i-1} | X_i, X^{i-1}}.
\end{align}
We apply Lemma \ref{lemma:weak_converse_replacing} to the two distributions by choosing \(X\) to be \(X^{i-1}\), \(Y\) to be \(X_i\), and \(Z\) to be \((L, Y^{i-1})\).
The subset on the domain of \(X^{i-1}\), i.e., \(\supp(P_{X^{i-1}})\), is chosen to be
\begin{equation}
    \mathcal{E}_{i-1} \triangleq \{x^{i-1} \in \supp (P_{X^{i-1}}) : x^{i-1}  \in \mathcal{T}_{i-1}^{\delta_n}(P_X), x^{n}_{i} \in \mathcal{T}_{n-i+1}^{\delta_n}(P_X)\},
\end{equation}
where we consider all prefixes \(x^{i-1} \in \supp (P_{X^{i-1}}) \) such that both the prefix itself and its suffix are \(P_X\) typical (recall that under \(P_{X^n}\) every prefix has a unique suffix type).
Recall from Lemma \ref{lemma:marginal_ditribution_constant_composition_ensemble} that \(P_{X_i}\) has the same distribution as \(P_X\).
Hence, we have \(P_{X_i | X^{i-1}}(\cdot | x^{i-1}) \overset{\delta_n}{\sim} P_{X_i}\) for every \(x^{i-1} \in \mathcal{E}_{i-1}\) due to Corollary \ref{cor:typical_conditional_distribution_constant_composition_ensemble}.
On the other hand, because \(P_{X^{i-1}}\) is a marginal distribution of \(P_{X^n}\), it is clear that
\begin{align}
    P_{X^{i-1} }[ \mathcal{E}_{i-1} ] = P_{X^n} [x^n \in \mathcal{T}_n(P_X): x^{i-1} \in \mathcal{T}_{i-1}^{\delta_n}(P_X), x^{n}_{i} \in \mathcal{T}_{n-i+1}^{\delta_n}(P_X)],
\end{align}
i.e., \(P_{X^{i-1}}[\mathcal{E}_{i-1}] \to 1\) as \(n \to \infty\) due to Corollary \ref{cor:weak_converse_constant_ensemble_prefix_suffix_typical}.
Since \(P_{X^{i-1}}[\mathcal{E}_{i-1}] \to 1\) and \(\delta_n = n^{-\frac{1}{8}} \to 0\), from Lemma \ref{lemma:weak_converse_replacing} we see that for every \( \sqrt{n} + 1 \leq i \leq n - \sqrt{n}\),
\begin{equation}
    -H(X_i| L, Y^{i-1}, X^{i-1}) \leq -H(\tilde{X}_i| \tilde{L}_i, Y^{i-1}, X^{i-1})  + \tau,
\end{equation}
if \(n\) is sufficiently large.
Thus, for sufficiently large \(n\),
\begin{equation}
    -\frac{1}{n}\sum_{i=\sqrt{n} + 1 }^{n-\sqrt{n}} H(X_i| L, Y^{i-1}, X^{i-1}) \leq -\frac{1}{n}\sum_{i=\sqrt{n} +1 }^{n-\sqrt{n}} H(\tilde{X}_i| \tilde{L}_i, Y^{i-1}, X^{i-1}) + \tau.
\end{equation}
Therefore, we conclude that
\begin{align}
    R &\leq -\frac{1}{n}\sum_{i=\sqrt{n} +1 }^{n-\sqrt{n}} H(\tilde{X}_i| \tilde{L}_i, Y^{i-1}, X^{i-1}) + \frac{1}{n}\sum_{i=\sqrt{n} +1 }^{n-\sqrt{n}} H(X_i) +2\tau + \epsilon_n \\
    & = -\frac{1}{n}\sum_{i=\sqrt{n} +1 }^{n-\sqrt{n}} H(\tilde{X}_i| \tilde{L}_i, Y^{i-1}, X^{i-1}) + \frac{1}{n}\sum_{i=\sqrt{n} +1 }^{n-\sqrt{n}} H(\tilde{X}_i)  + 2\tau + \epsilon_n  \label{eq:weak_converse_R_same_marginal_distribution}\\
    &= \frac{1}{n}\sum_{i=\sqrt{n} +1 }^{n-\sqrt{n}} I(\tilde{X}_i; \tilde{L}_i, Y^{i-1}, X^{i-1}) + 2\tau + \epsilon_n \\
    &\leq \frac{1}{n}\sum_{i=1}^{n} I(\tilde{X}_i; \tilde{L}_i, Y^{i-1}, X^{i-1}) + 2\tau + \epsilon_n,
\end{align}
where in \eqref{eq:weak_converse_R_same_marginal_distribution}, we make use of Lemma \ref{lemma:marginal_ditribution_constant_composition_ensemble} again, i.e., noticing that \(X_i\) and \(\tilde{X}_i\) have the same distribution under \(P_{X_i, M_Y, Y^{i-1}, X^{i-1}}\) and \(\tilde{P}_{X_i, M_Y, Y^{i-1}, X^{i-1}}\) respectively.

We now turn our attention to the bound on \(B\). Notice that
\begin{align}
    B &\geq \frac{1}{n}\sum_{i=1}^{n} I(L;Y_i| Y^{i-1}, X^{i-1})  \\
    &\geq \frac{1}{n}\sum_{i=\sqrt{n} + 1 }^{n-\sqrt{n}} I(L;Y_i| Y^{i-1}, X^{i-1}) \\
    & = - \frac{1}{n}\sum_{i=\sqrt{n} + 1 }^{n-\sqrt{n}} H(Y_i| L, Y^{i-1}, X^{i-1}) + \frac{1}{n}\sum_{i=\sqrt{n} +1 }^{n-\sqrt{n}} H(Y_i | Y^{i-1}, X^{i-1}).
\end{align}
Recall the two distributions \(P_{X_i, Y_i, L, Y^{i-1}, X^{i-1}}\) and \(\tilde{P}_{X_i, Y_i, L, Y^{i-1}, X^{i-1}}\).
After marginalizing over \(X_i\), we obtain
\begin{align}
    P_{Y_i, L, Y^{i-1}, X^{i-1}} & = P_{X^{i-1}} \times P_{Y_i | X^{i-1}} \times P_{L, Y^{i-1} | Y_i, X^{i-1}} \\
    \tilde{P}_{Y_i, L, Y^{i-1}, X^{i-1}} & = P_{X^{i-1}} \times P_{Y_i} \times P_{L, Y^{i-1} | Y_i, X^{i-1}}, \label{eq:weak_converse_B_replacing}
\end{align}
where \eqref{eq:weak_converse_B_replacing} is because we notice \(Y_i\) is independent of \(X^{i-1}\) under \(\tilde{P}_{X_i, Y_i, L, Y^{i-1}, X^{i-1}}\).
Since \(P_{Y_i X_i | X^{i-1}} = P_{X_i | X^{i-1}}P_{Y_i | X_i}\), for every \(x^{i-1} \in \mathcal{E}_{i-1}\) we have \( P_{Y_i X_i | X^{i-1}} (\cdot | x^{i-1}) \overset{\delta_n}{ \sim } P_{X_i}P_{Y_i | X_i} \) and hence \(P_{Y_i | X^{i-1}}(\cdot | x^{i-1}) \overset{\delta_n}{\sim} P_{Y_i}\) through marginalizing over \(X_i\).
Thus, similarly we can assert that when \(n\) is sufficiently large,
\begin{equation}
    - \frac{1}{n}\sum_{i=\sqrt{n} + 1 }^{n-\sqrt{n}} H(Y_i| L, Y^{i-1}, X^{i-1}) \geq - \frac{1}{n}\sum_{i=\sqrt{n} + 1 }^{n-\sqrt{n}} H(\tilde{Y}_i| \tilde{L}_i, Y^{i-1}, X^{i-1}) - \tau.
\end{equation}
The same reasoning also applies to \(H(Y_i | Y^{i-1}, X^{i-1})\) with
\begin{equation}
    \frac{1}{n}\sum_{i=\sqrt{n} +1 }^{n-\sqrt{n}} H(Y_i | Y^{i-1}, X^{i-1}) \geq \frac{1}{n}\sum_{i=\sqrt{n} + 1 }^{n-\sqrt{n}} H(\tilde{Y}_i | Y^{i-1}, X^{i-1}) - \tau.
\end{equation}
Therefore, we conclude that
\begin{align}
    B & \geq - \frac{1}{n}\sum_{i=\sqrt{n} + 1 }^{n-\sqrt{n}} H(\tilde{Y}_i| \tilde{L}_i, Y^{i-1}, X^{i-1})  + \frac{1}{n}\sum_{i=\sqrt{n} + 1 }^{n-\sqrt{n}} H(\tilde{Y}_i | Y^{i-1}, X^{i-1}) - 2 \tau\\
    & = - \frac{1}{n}\sum_{i=\sqrt{n} + 1 }^{n-\sqrt{n}} H(\tilde{Y}_i| \tilde{L}_i, Y^{i-1}, X^{i-1}) + \frac{1}{n}\sum_{i=\sqrt{n} + 1}^{n-\sqrt{n}} H(\tilde{Y}_i) -2\tau \label{eq:weak_converse_B_independent_after_changing_measure} \\
    & = \frac{1}{n}\sum_{i=\sqrt{n} + 1 }^{n-\sqrt{n}} I(\tilde{Y}_i; \tilde{L}_i, Y^{i-1}, X^{i-1})   - 2\tau\\
    &\geq \frac{1}{n}\sum_{i=1}^{n} I(\tilde{Y}_i; \tilde{L}_i, Y^{i-1}, X^{i-1}) -\frac{2\sqrt{n}}{n}\log \abs{\mathcal{Y}}   - 2\tau, \label{eq:weak_converse_B_upper_bound_conditional_entropy} \\
    & \geq \frac{1}{n}\sum_{i=1}^{n} I(\tilde{Y}_i; \tilde{L}_i, Y^{i-1}, X^{i-1})   - 3\tau
\end{align}
where in \eqref{eq:weak_converse_B_independent_after_changing_measure}, we notice that \(\tilde{Y}_i\) is independent of \((Y^{i-1}, X^{i-1})\);
in \eqref{eq:weak_converse_B_upper_bound_conditional_entropy}, we make use of

\begin{equation}
    I(\tilde{Y}_i; \tilde{L}_i, Y^{i-1}, X^{i-1}) \leq H(\tilde{Y}_i) \leq \log \abs{\mathcal{Y}}.
\end{equation}

Overall, we conclude that for any \(\tau \in (0,1)\), when \(n\) is sufficiently large,
\begin{align}
    R & \leq \frac{1}{n}\sum_{i=1}^{n} I(\tilde{X}_i; \tilde{L}_i, Y^{i-1}, X^{i-1}) + 2\tau + \epsilon_n \\
    B & \geq \frac{1}{n}\sum_{i=1}^{n} I(\tilde{Y}_i; \tilde{L}_i, Y^{i-1}, X^{i-1}) - 3\tau.
\end{align}
Now let \(\tilde{U}_i \triangleq (\tilde{L}_i, Y^{i-1}, X^{i-1})\). Recall that
\begin{equation}
    \tilde{P}_{X_i, Y_i, L, Y^{i-1}, X^{i-1}} = P_{X^{i-1}} \times P_{X_i} \times P_{L, Y^{i-1}, Y_{i} | X_i, X^{i-1}},
\end{equation}
where
\begin{equation}
    P_{L, Y^{i-1}, Y_{i} | X_i, X^{i-1}} = P_{Y^{i-1} | X^{i-1} } \times P_{Y_i | X_i} \times P_{L | Y^{i-1}, Y_i}.
\end{equation}
Thus, we have the Markov chain \(\tilde{X}_i \to \tilde{Y}_i \to \tilde{U}_i\) for every \(i \in [n]\).
Let \(J\) be independently and uniformly distributed over \([n]\), i.e., the time sharing random variable.
Hence,
\begin{align}
    R & \leq \frac{1}{n}\sum_{i=1}^{n} I(\tilde{X}_i; \tilde{U}_i) + 2\tau + \epsilon_n \\
    & = \frac{1}{n}\sum_{i=1}^{n} I(\tilde{X}_J; \tilde{U}_J |  J =i) + 2\tau + \epsilon_n \\
    & = I(\tilde{X}_J; \tilde{U}_J |  J ) + 2\tau + \epsilon_n \\
    & = I(\tilde{X}_J; \tilde{U}_J ,  J ) + 2\tau + \epsilon_n \label{eq:weak_converse_X_J_independent} \\
    & = I(X; U ) + 2\tau + \epsilon_n, \label{eq:weak_converse_define_X_U}
\end{align}
where \eqref{eq:weak_converse_X_J_independent} is because \(\tilde{X}_i\) follows the distribution \(P_X\) for every \(i \in [n]\), i.e., \(\tilde{X}_J\) is independent of \(J\); in \eqref{eq:weak_converse_define_X_U} we write \(X = \tilde{X}_J \) and \(U \triangleq (\tilde{U}_J, J) \).
As for \(B\), we similarly have
\begin{align}
    B & \geq \frac{1}{n}\sum_{i=1}^{n} I(\tilde{Y}_i; \tilde{U}_i) - 3\tau \\
    & = \frac{1}{n}\sum_{i=1}^{n} I(\tilde{Y}_J; \tilde{U}_J | J=i) - 3\tau \\
    & = I(\tilde{Y}_J; \tilde{U}_J | J) - 3\tau \\
    & =I(\tilde{Y}_J; \tilde{U}_J , J) - 3\tau \\
    & = I(Y ; U) - 3\tau,
\end{align}
where \(\tilde{Y}_J\) follows the distribution \(P_Y = P_{X}\cdot P_{Y|X} \) since every \(\tilde{X}_i\) follows the same distribution \(P_X\), i.e., \(\tilde{Y}_J\) is independent of \(J\) and we write \(Y = \tilde{Y}_J\).
Note that the Markov chain \(X \to Y \to U\) holds, since
\begin{align}
    P_{X, Y, U} & = P_{\tilde{X}_J, \tilde{Y}_J, \tilde{U}_J, J} \\
    & = P_{J} P_{ \tilde{X}_J | J} P_{\tilde{Y}_J | \tilde{X}_J, J} P_{ \tilde{U}_J | \tilde{Y}_J, \tilde{X}_J , J } \\
    & = P_J P_{X}P_{Y | X } P_{\tilde{U}_J | \tilde{Y}_J , J } \label{eq:weak_converse_independence_relationship_X_Y_U} \\
    & = P_{X}P_{Y | X } P_{J | \tilde{Y}_J}  P_{\tilde{U}_J | \tilde{Y}_J , J }  \label{eq:weak_converse_independence_Y_J} \\
    & = P_{X}P_{Y | X} P_{\tilde{U}_J, J | \tilde{Y}_J } \\
    & = P_{X} P_{Y | X} P_{U|Y},
\end{align}
where \eqref{eq:weak_converse_independence_relationship_X_Y_U} is because for every \(J =i \in [n]\), we have \(P_{ \tilde{X}_J | J = i} = P_X\), \(P_{\tilde{Y}_J | \tilde{X}_J, J =i} = P_{Y|X}\), and the Markov chain \(\tilde{X}_i \to \tilde{Y}_i \to \tilde{U}_i\); \eqref{eq:weak_converse_independence_Y_J} is due to the independence between \(J\) and \(\tilde{Y}_J\).
Therefore, for any sequence of \((n, R, B)\)-codes such that \(\bar{\lambda} \to 0\), we must have
\begin{equation}
    R \leq \max_{P_X, P_{U|Y}} I(X;U) \qquad \text{s.t.} \quad I(Y;U) \leq B,
\end{equation}
where \(X \overset{P_{Y|X}}{\to} Y \overset{P_{U|Y}}{\to} U\) forms a Markov chain.
The proof of the cardinality bound for \(\mathcal{U}\) follows from a standard application of the support lemma \cite[Appendix C]{gamalNetworkInformationTheory2011}, and is provided in Appendix \ref{appendix:weak_converse_cardinality_bound_U}.
With this, the proof for Theorem \ref{thm:weak_converse_constant_composition_ensemble} is complete.
\begin{remark}
It can be seen that the requirement \(x^{i-1} \in \mathcal{T}_{i-1}^{\delta_n}(P_X)\) in the set \(\mathcal{E}_{i-1}\) does not play any role in the proof, i.e., the proof still holds if we define $  \mathcal{E}_{i-1} \triangleq \{x^{i-1} \in \supp(P_{X^{i}}): x^{n}_{i} \in \mathcal{T}_{n-i+1}^{\delta_n}(P_X)\}.$
The reason for not using such definition is to provide a slightly more general proof, i.e., there is no causality constraint and the same argument still applies if we instead start from
\begin{align}
    nR &\leq \sum_{i=1}^{n} I(X_i; L, Y^{i+1}, X^{i+1}) + n\epsilon_n \\
    nB &\geq \sum_{i=1}^{n} I(L;Y_i| Y^{i+1}, X^{i+1}),
\end{align}
as discussed in Remark \ref{rem:weak_converse_almost_independent_both_direction}.
\end{remark}

\section{Sphere Packing Bound}
\label{sec:sphere_packing_bound}
In this section, we prove Theorem \ref{thm:sphere_packing_bound}.
We fix a sequence of \((n, R, B)\)-codes, or equivalently a sequence of mappings \((f_n, \varphi_n, \phi_n)\) as defined in Section \ref{section:problem_setting},  where codewords have composition \(P_X\).
Next, we select an auxiliary (or test) channel \(Q_{Y|X}\) and a corresponding IB channel \((Q_{Y|X}, B)\).
We will specify \(Q_{Y|X}\) later on.
The same sequence of \((n, R , B)\)-codes can be applied to both channels \((P_{Y|X},B)\) and \((Q_{Y|X},B)\).
We use the subscript \(P\) or \(Q\) to differentiate all (random) variables and information measures induced under the two channels by the same codes.
For example, given a codebook \(\bm{C} = \mathcal{C}_n\), we denote by \(\lambda_{Q, m}(n, R, B, \mathcal{C}_n)\) the decoding error probability of message \(m\) under the IB channel \((Q_{Y|X}, B)\).
The ensemble-average decoding error probability will then be \(\bar{\lambda}_Q(n, R, B)\) or \(\bar{\lambda}_P(n, R, B)\).

For a codebook \(\bm{C} = \mathcal{C}_n\), define the decoding error region of message \(m\) as
\begin{equation}
    \mathcal{Y}^n(m)^c \triangleq \{ y^n \in \mathcal{Y}^n : \phi_n( \varphi_n (y^n),\mathcal{C}_n) \neq m \},
\end{equation}
i.e., all channel outputs at the relay that are not decoded to message \(m\) at the receiver.
Hence, we have
\begin{equation}
    \lambda_{Q, m}(n, R, B, \mathcal{C}_n) = Q_{Y|X}^n [ \mathcal{Y}^n(m)^c | x^n(m) ]
\end{equation}
as well as
\begin{equation}
    \lambda_{P, m}(n, R, B, \mathcal{C}_n) = P_{Y|X}^n [ \mathcal{Y}^n(m)^c | x^n(m) ].
\end{equation}
The task here is to find a lower bound for  \(\bar{\lambda}_P(n, R, B)\).
We instead find a lower bound for every \(\lambda_{P, m}(n, R, B, \mathcal{C}_n)\), which is accomplished through the test channel.

\subsection{Sphere Packing Bound}
Define the divergence typical set for codeword \(x^n(m)\) from codebook \(\mathcal{C}_n\) as
\begin{equation}
    \mathcal{D}_n^{\epsilon}(m) = \left\{y^n \in \mathcal{Y}^n : \abs{ \frac{1}{n} \log \frac{Q_{Y|X}^n( y^n | x^n(m) )}{ P_{Y|X}^n(y^n | x^n(m)) }  - D(Q_{Y|X} \| P_{Y|X} | P_X)  } \leq \epsilon \right\}
\end{equation}
Since the codeword composition is \(P_X\), under the test channel \(Q^n_{Y|X}\) we have
\begin{equation}
    Q_{Y|X}^n [ \mathcal{D}_n^{\epsilon}(m) | x^n(m)  ] \geq 1 - \alpha_n,
\end{equation}
where \(\alpha_n \) is a linear function of \(\frac{1}{n \epsilon^2 }\) due to the law of large numbers.
For any pair of sets \(\mathcal{A}\) and \(\mathcal{B}\), it holds that $ P[\mathcal{A} \cap \mathcal{B}] \geq  P[\mathcal{A}] + P[ \mathcal{B}] -1$.
Consequently,
\begin{align}
    Q_{Y|X}^n [ \mathcal{Y}^n(m)^c \cap \mathcal{D}_n^{\epsilon}(m) | x^n(m) ]
    & \geq Q_{Y|X}^n [ \mathcal{Y}^n(m)^c  | x^n(m) ] + Q_{Y|X}^n [ \mathcal{D}_n^{\epsilon}(m) | x^n(m) ]  - 1 \\
    & \geq Q_{Y|X}^n [ \mathcal{Y}^n(m)^c  | x^n(m) ] + (1-\alpha_n) - 1 \\
    & = \lambda_{Q, m}(n, R, B, \mathcal{C}_n) - \alpha_n.
\end{align}
Notice that by definition on the divergence typical set, we have
\begin{equation}
    P_{Y|X}^n( y^n | x^n(m) ) \geq Q_{Y|X}^n(y^n | x^n(m))e^{-n(D(Q_{Y|X} \| P_{Y|X} | P_X)+\epsilon)}, \qquad \forall y^n \in \mathcal{D}_n^{\epsilon}(m).
\end{equation}
Therefore, for every codebook in the ensemble \(\bm{C} = \mathcal{C}_n\), we have
\begin{align}
\lambda_{P,m}(n, R, B, \mathcal{C}_n) \nonumber & = P_{Y|X}^n [ \mathcal{Y}^n(m)^c | x^n(m) ] \\
    & \geq P_{Y|X}^n [ \mathcal{Y}^n(m)^c \cap \mathcal{D}_n^{\epsilon}(m) | x^n(m) ] \\
    & \geq Q_{Y|X}^n [ \mathcal{Y}^n(m)^c \cap \mathcal{D}_n^{\epsilon}(m) | x^n(m) ] \times e^{-n(D(Q_{Y|X} \| P_{Y|X} | P_X)+\epsilon)} \label{eq:sphere_packing_bound_applying_divergence_bound}\\
    & \geq ( \lambda_{Q, m}(n, R, B, \mathcal{C}_n) - \alpha_n )e^{-n(D(Q_{Y|X} \| P_{Y|X} | P_X)+\epsilon)}.
\end{align}
Hence, after averaging over the message set and ensemble, we arrive at
\begin{align}
    \bar{\lambda}_P(n, R, B) & \geq ( \bar{\lambda}_Q(n, R, B) - \alpha_n ) e^{-n(D(Q_{Y|X} \| P_{Y|X} | P_X)+\epsilon)}. \label{eq:sphere_packing_bound_test_channel}
\end{align}
The task now is reduced to finding a lower bound for the test channel's decoding error probability \(\bar{\lambda}_Q(n, R, B)\) under the \((n, R, B)-\)code, which will be done through Fano's inequality.
\begin{remark}
    It can be seen that \(\mathcal{D}_n^{\epsilon}(m)\) is roughly the same as \(\mathcal{T}_n^{\epsilon}(Q_{Y|X} | x^n(m))\), the set of all conditional typical sequences.
    Thus, the set \(\mathcal{Y}^n(m)^c \cap \mathcal{D}_n^{\epsilon}(m) \) can be interpreted as sequences \(y^n\) from the shell \(\mathcal{T}_n^{\epsilon}(Q_{Y|X} | x^n(m))\) that lead to a decoding error event at the relay.
    The ratio of such \(y^n\) in the shell is
    \begin{align}
        \frac{ | \mathcal{Y}^n(m)^c \cap \mathcal{D}_n^{\epsilon}(m) | }{ | \mathcal{D}_n^{\epsilon}(m) |   } & \approx \frac{  Q_{Y|X}^{n} [  \mathcal{Y}^n(m)^c \cap \mathcal{D}_n^{\epsilon}(m) | x^n(m)  ] }{  Q_{Y|X}^{n} [ \mathcal{D}_n^{\epsilon}(m)  | x^n(m) ] } \label{eq:sphere_packing_bound_interpretation_sequence_probability} \\
        & \approx  Q_{Y|X}^{n} [  \mathcal{Y}^n(m)^c \cap \mathcal{D}_n^{\epsilon}(m) | x^n(m)  ], \label{eq:sphere_packing_bound_interpretation_ratio}
    \end{align}
    where \eqref{eq:sphere_packing_bound_interpretation_sequence_probability} is because every sequence in the shell has roughly the same probability; \eqref{eq:sphere_packing_bound_interpretation_ratio} is due to \(Q_{Y|X}^{n} [ \mathcal{D}_n^{\epsilon}(m)  | x^n(m) ] \approx 1\).
    Hence, the lower bound in \eqref{eq:sphere_packing_bound_applying_divergence_bound} can be interpreted as the probability of the shell \(\mathcal{T}_n^{\epsilon}(Q_{Y|X} | x^n(m))\) under the channel \(P_{Y|X}^n\) (i.e., \(\exp\{-n(D(Q_{Y|X} \| P_{Y|X} | P_X)+\epsilon)\}\)) multiplied with the ratio of error sequences in the shell.
    The test channel \(Q_{Y|X}\) we selected determines which shell \(\mathcal{T}_n^{\epsilon}(Q_{Y|X} | x^n(m))\) is picked to constitute the lower bound.
\end{remark}
\begin{remark}
\label{rem:traditional_sphere_packing_bound}
    We can readily see that \eqref{eq:sphere_packing_bound_test_channel} leads to a sphere packing bound.
    In particular, for any IB channel \((Q_{Y|X}, B)\) whose capacity is less than \(R\), i.e., \(C(B) \leq R\), the weak converse for \((Q_{Y|X}, B)\) derived in the previous section suggests that for  sufficiently large \(n\), we will have $\bar{\lambda}_Q(n, R, B) \geq \tau$, where \(\tau \in (0,1)\) is a small constant.
    Since \eqref{eq:sphere_packing_bound_test_channel} holds for any test channel \((Q_{Y|X}, B)\), we can select the best test channel under the constraint \(C(B) \leq R\), resulting in
    \begin{align}
        \bar{\lambda}_P(n, R, B) & \geq \max_{ \substack{ (Q_{Y|X}, B): \\ C(B) \leq R } }( \tau - \alpha_n ) e^{-n(D(Q_{Y|X} \| P_{Y|X} | P_X)+\epsilon)} \\
        & \ndot{=} \max_{ \substack{ (Q_{Y|X}, B): \\ C(B) \leq R } }e^{-nD(Q_{Y|X} \| P_{Y|X} | P_X)}. \label{eq:sphere_packing_bound_traditional_bound}
    \end{align}
    \eqref{eq:sphere_packing_bound_traditional_bound} is established by following the Haroutunian's conventional approach of establishing sphere packing bounds \cite{haroutunianBoundsExponentProbability1968}.
    As we will see in the following sections, by considering Fano's lower bound, this conventional approach can be further refined in some cases, and we will derive an improved sphere packing bound for the IB channel. The refined approach is inspired by the work of Kelly and Wagner \cite{kellyReliabilitySourceCoding2012}.
\end{remark}
\subsection{Fano's Lower Bound}
We now find a lower bound for the test channel's ensemble-average error probability \(\bar{\lambda}_Q(n, R, B)\).
As discussed in Section \ref{sec:weak_converse_for_capacity}, conditioned on any codebook \(\bm{C} = \mathcal{C}_n\), we have the Markov chain
\begin{equation}
    M \to x^n(M) \to Y^n \to L \to \hat{M}.
\end{equation}
This Markov chain holds under any IB channel, e.g., both \((P_{Y|X},B)\) and \((Q_{Y|X},B)\).
Suppose the underlying channel for the Markov chain is the test channel \((Q_{Y|X}, B)\).
From Fano's inequality, conditioned on any codebook \(\bm{C} = \mathcal{C}_n\), we have
\begin{equation}
    H_Q(M|L, \bm{C} = \mathcal{C}_n ) \leq H_Q(M|\hat{M}, \bm{C} = \mathcal{C}_n ) \leq 1 + \bar{\lambda}_Q(n, R, B, \mathcal{C}_n)nR.
\end{equation}
After averaging over the ensemble \(\bm{C}\), we obtain
\begin{equation}
    H_Q(M|L, \bm{C} )  \leq 1 + \bar{\lambda}_Q(n, R, B)nR.
\end{equation}
Since \(M\) is uniformly distributed over \([e^{nR}]\), we see that \(H(M) = nR\) and hence
\begin{equation}
    I_Q(M;L,\bm{C}) = H(M) - H_Q(M | L, \bm{C}) \geq nR -1 -\bar{\lambda}_Q(n, R, B)nR,
\end{equation}
that is,
\begin{equation}
    \bar{\lambda}_Q(n, R, B) \geq 1 - \frac{I_Q(M;L,\bm{C}) + 1}{nR}.
\end{equation}
This is known as Fano's lower bound.
From the converse proof in Section \ref{sec:weak_converse_Fano_inequality}, we have
\begin{align}
    \frac{1}{n}I_Q(M;L,\bm{C})  & \leq \frac{1}{n}\sum_{i=1}^{n} I_Q(\tilde{X}_i; \tilde{L}_i, Y^{i-1}, X^{i-1}) + 2\tau \\
    & = I_Q(X;U) + 2\tau
\end{align}
and
\begin{align}
    B  & \geq \frac{1}{n}\sum_{i=1}^{n} I_Q(\tilde{Y}_i; \tilde{L}_i, Y^{i-1}, X^{i-1}) -\frac{2\sqrt{n}}{n}\log \abs{\mathcal{Y}}  - 3\tau\\
    & = I_Q(Y;U) -3\tau,
\end{align}
where we have the Markov chain \(X \overset{Q_{Y|X}}{\to} Y \overset{Q_{U|Y}}{\to} U\) with \(X = \tilde{X}_J\), \(Y = \tilde{Y}_J\), and \(U = ( \tilde{L}_J, Y^{J-1}, X^{J-1}, J  )\).
It is evident that \(Q_{U|Y}\) depends on \(Q_{Y|X}\), i.e., it varies for different DMCs \(Q_{Y|X}\).
We assume that the selected auxiliary channel \((Q_{Y|X}, B)\) satisfies
\begin{equation}
    I_Q(X;U)  \leq R - \nu - 2\tau, \label{eq:sphere_packing_bound_requirement_test_channel}
\end{equation}
for a small constant \(\nu > 0\).
Therefore, if we select the test channel \(Q_{Y|X}\) according to the requirement in \eqref{eq:sphere_packing_bound_requirement_test_channel}, we see that
\begin{align}
    \bar{\lambda}_Q(n, R, B) & \geq 1 - \frac{I_Q(M;L,F_n) + 1}{nR} \\
    & \geq 1- \frac{  n \big(I_Q(X;U) + 2\tau \big) + 1 }{ nR } \\
    & \geq 1- \frac{ nR - n\nu + 1  }{  nR  } \\
    & = \frac{ n\nu -1 }{nR}. \label{eq:sphere_packing_bound_lower_bound_for_test_channel}
\end{align}
We now can substitute \eqref{eq:sphere_packing_bound_lower_bound_for_test_channel} into \eqref{eq:sphere_packing_bound_test_channel} to obtain a lower bound for \(\bar{\lambda}_P(n, R, B)\).

\subsection{Optimizing over Test Channels}
Since we can freely select the test channel \((Q_{Y|X}, B)\), we can select the one that produces the tightest lower bound for \(\bar{\lambda}_P(n, R, B)\).
Recall the requirement that the selected channel \((Q_{Y|X}, B)\) must satisfy
\begin{equation}
    I(P_X, Q_{Y|X} \cdot Q_{U|Y})  \leq R - \nu - 2\tau, \label{eq:sphere_packing_bound_test_channel_requirement_1}
\end{equation}
where \(Q_{U|Y}\) is a certain channel depending on \(Q_{Y|X}\).
Moreover, for this specific \(Q_{U|Y}\), we must have
\begin{equation}
    I(Q_Y, Q_{U|Y}) \leq B + 3\tau, \label{eq:sphere_packing_bound_test_channel_requirement_2}
\end{equation}
where \(Q_Y = P_X \cdot Q_{Y|X}\).
Therefore, we can deduce that
\begin{align}
    & \bar{\lambda}_P(n, R, B) \nonumber \\
    &\geq \max_{(Q_{Y|X}, B)} \ \  ( \frac{ n\nu -1 }{nR}  - \alpha_n )e^{-n(D(Q_{Y|X} \| P_{Y|X} | P_X)+\epsilon)} \label{eq:sphere_packing_bound_best_test_channel} \\
    & = \max_{ Q_Y  }  \max_{ \substack{Q_{Y|X}: \\  I(Q_Y, Q_{U|Y}) \leq B +3\tau, \\  I(P_X, Q_{Y|X} \cdot Q_{U|Y})  \leq R - \nu - 2\tau} } ( \frac{  n\nu  -1 }{nR}  - \alpha_n )e^{-n(D(Q_{Y|X} \| P_{Y|X} | P_X)+\epsilon)} \label{eq:sphere_packing_bound_max_over_auxiliary_channel} \\
    & \geq  \max_{ Q_Y  } \min_{P_{U|Y}}  \max_{ \substack{Q_{Y|X}: \\  I(Q_Y, P_{U|Y}) \leq B +3\tau, \\  I(P_X, Q_{Y|X} \cdot P_{U|Y})  \leq R - \nu - 2\tau} } ( \frac{  n\nu  -1 }{nR}  - \alpha_n )e^{-n(D(Q_{Y|X} \| P_{Y|X} | P_X)+\epsilon)} \label{eq:sphere_packing_bound_min_over_P_U_Y} \\
    & = \max_{ Q_Y  } \min_{ \substack{  P_{U|Y}:  \\   I(Q_Y, P_{U|Y}) \leq B +3\tau  } }  \max_{ \substack{Q_{Y|X}: \\  I(P_X, Q_{Y|X} \cdot P_{U|Y})  \leq R - \nu - 2\tau} } ( \frac{  n\nu  -1 }{nR}  - \alpha_n )e^{-n(D(Q_{Y|X} \| P_{Y|X} | P_X)+\epsilon)}, \label{eq:sphere_packing_bound_independent_P_U_Y_and_Q_Y_X}
\end{align}
where in \eqref{eq:sphere_packing_bound_best_test_channel} the maximization means that we select the best test channel under the two requirements in \eqref{eq:sphere_packing_bound_test_channel_requirement_1} and \eqref{eq:sphere_packing_bound_test_channel_requirement_2};
in \eqref{eq:sphere_packing_bound_max_over_auxiliary_channel} we select the best test channel by first fixing a type \(Q_Y\) and then looking into all \(Q_{Y|X}\) such that \(P_{X} \cdot Q_{Y|X} = Q_Y\);
in \eqref{eq:sphere_packing_bound_min_over_P_U_Y} we recall that \(Q_{U|Y}\) depends on \(Q_{Y|X}\), so we can lower bound it by minimizing over \(P_{U|Y}\), which is now independent of \(Q_{Y|X}\);
and in \eqref{eq:sphere_packing_bound_independent_P_U_Y_and_Q_Y_X} we notice that the constraint \( I(Q_Y, P_{U|Y}) \leq B +3\tau\) is independent of \(Q_{Y|X}\).

In the achievability proof, \(P_{U|Y}\) represents the compress-forward scheme between the relay and receiver.
Hence, \eqref{eq:sphere_packing_bound_min_over_P_U_Y} can be interpreted as that we select the optimal compress-forward scheme such that the lower bound in \eqref{eq:sphere_packing_bound_min_over_P_U_Y}  is as small as possible, i.e., the error exponent is as large as possible.
Now recall that \(\alpha_n\) is a linear function of \(\frac{1}{n\epsilon^2}\).
Hence, it is guaranteed that for sufficiently large \(n\), we have
\begin{equation}
    \frac{ n\nu -1 }{nR}  - \alpha_n  = \frac{\nu}{R} - \frac{1}{nR} - \alpha_n > 0.
\end{equation}
Since \eqref{eq:sphere_packing_bound_independent_P_U_Y_and_Q_Y_X} holds for any \(\nu, \tau, \epsilon > 0\) as \(n \to \infty\), we conclude that
\begin{equation}
    \limsup_{n \to \infty} -\frac{1}{n} \log \bar{\lambda}(n, R, B) \leq E_{\textnormal{sp}}(R, B, P_X),
\end{equation}
where
\begin{equation}
    E_{\textnormal{sp}}(R, B, P_X) = \min_{Q_{Y}} \max_{ \substack{ P_{U|Y}: \\ I(Q_Y, P_{U|Y}) \leq B } }  \min_{ \substack{Q_{Y|X}: \\ P_X \cdot Q_{Y|X} = Q_{Y}, \\ I(P_X, Q_{Y|X} \cdot P_{U|Y} )  \leq R } } \! \! D(Q_{Y|X} \| P_{Y|X} | P_{X}).
\end{equation}
The proof of cardinality bound makes use of the idea in \cite[Theorem 2]{kellyReliabilitySourceCoding2012} through combining the support lemma with KKT conditions, and is provided in Appendix \ref{appendix:weak_converse_cardinality_sphere_packing_bound}.
With this, the theorem is established.


\section{Connections to the WAK problem}
\label{sec:source_coding_coded_side_information_exponent_lower_bound}
In this section, we establish a connection between coding for the IB channel and coding for the WAK problem, which we then utilize to prove Theorem \ref{thm:source_coding_coded_side_information_exponent_lower_bound}. To this end, we first present a few preliminary results on covering through permutations, which will be useful in our proof later on.
\subsection{Permutations and Type Class Covering}
We first revisit Ahlswede's Covering Lemma from \cite[Appendix I]{ahlswedeGoodCodesCan1982} (cf. \cite[Section 6]{ahlswedeColoringHypergraphsNew1980}).
To this end, we need to introduce some definitions and notation related to permutations.

Consider a permutation rule \(\pi\) on the set \([n]\), i.e., a one-to-one mapping \(\pi: [n] \to [n]\).
For a sequence \(\bm{x} = (x_1, x_2, \ldots, x_n)\), we denote by \(\pi [\bm{x}]\) the sequence obtained through permuting the entries of \(\bm{x}\) under \(\pi\).
We denote by \(\pi_1 \circ \pi_2\) the composition (or product) of two permutations, i.e.,
\begin{equation}
    \pi_1 \circ \pi_2 [ \bm{x} ] = \pi_1 \big[  \pi_2 [ \bm{x} ] \big].
\end{equation}
Note that in general \(\pi_1 \circ \pi_2 [ \bm{x} ] \neq \pi_2 \circ \pi_1 [ \bm{x} ]\).
For a set \(\mathcal{A} \subseteq \mathcal{X}^n\), we write
\begin{equation}
    \pi [ \mathcal{A} ] \triangleq \{ \pi [ \bm{x}] :  \bm{x} \in \mathcal{A} \}.
\end{equation}

\begin{lemma}[Ahlswede's Covering Lemma]
    \label{lemma:Ahlswede_covering_lemma}
    Fix a type \(Q_X \in \mathcal{P}_n(\mathcal{X})\) and a set of sequences \(\mathcal{A} \subseteq \mathcal{T}_n(Q_X)\). There exists a sequence of permutations \(\pi_1, \pi_2, \ldots, \pi_k \) such that
    \begin{equation}
        \bigcup_{i=1}^{k} \pi_i [ \mathcal{A} ] = \mathcal{T}_n(Q_X),
    \end{equation}
    if \(k > \abs{\mathcal{A}}^{-1} \abs{\mathcal{T}_n(Q_X)} \log \abs{ \mathcal{T}_n(Q_X) }\).
\end{lemma}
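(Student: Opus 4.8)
The plan is to prove Ahlswede's Covering Lemma via the probabilistic method: draw permutations $\pi_1, \ldots, \pi_k$ independently and uniformly at random from the symmetric group $S_n$, and show that with positive probability the union $\bigcup_{i=1}^k \pi_i[\mathcal{A}]$ covers all of $\mathcal{T}_n(Q_X)$. The key observation that makes this work is a \emph{symmetry/uniformity} fact: for a fixed sequence $\bm{x} \in \mathcal{T}_n(Q_X)$ and a \emph{fixed} $\bm{a} \in \mathcal{A}$, if $\pi$ is uniform on $S_n$ then $\pi[\bm{a}]$ is uniform on the type class $\mathcal{T}_n(Q_X)$ (since all sequences of a given type are permutations of one another, and the stabilizer of $\bm{a}$ has the same size for every $\bm{a}$ of that type). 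Hence $\Pr[\pi[\bm{a}] = \bm{x}] = 1/|\mathcal{T}_n(Q_X)|$, and so $\Pr[\bm{x} \notin \pi[\mathcal{A}]] = \Pr[\bm{x} \notin \{\pi[\bm{a}] : \bm{a}\in\mathcal{A}\}]$.

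First I would make the last probability precise. We want the event that \emph{none} of the $|\mathcal{A}|$ elements $\pi[\bm{a}]$, $\bm{a} \in \mathcal{A}$, equals $\bm{x}$; equivalently $\bm{x} \notin \pi[\mathcal{A}] \iff \pi^{-1}[\bm{x}] \notin \mathcal{A}$. Since $\pi^{-1}$ is also uniform on $S_n$, $\pi^{-1}[\bm{x}]$ is uniform on $\mathcal{T}_n(Q_X)$, so
\begin{equation}
    \Pr[\bm{x} \notin \pi[\mathcal{A}]] = 1 - \frac{|\mathcal{A}|}{|\mathcal{T}_n(Q_X)|}.
\end{equation}
By independence of $\pi_1, \ldots, \pi_k$,
\begin{equation}
    \Pr\Big[\bm{x} \notin \bigcup_{i=1}^k \pi_i[\mathcal{A}]\Big] = \left(1 - \frac{|\mathcal{A}|}{|\mathcal{T}_n(Q_X)|}\right)^{k} \leq \exp\!\left(-\frac{k\,|\mathcal{A}|}{|\mathcal{T}_n(Q_X)|}\right),
\end{equation}
using $1 - t \leq e^{-t}$. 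Then I would apply the union bound over all $\bm{x} \in \mathcal{T}_n(Q_X)$:
\begin{equation}
    \Pr\Big[\bigcup_{i=1}^k \pi_i[\mathcal{A}] \neq \mathcal{T}_n(Q_X)\Big] \leq |\mathcal{T}_n(Q_X)| \exp\!\left(-\frac{k\,|\mathcal{A}|}{|\mathcal{T}_n(Q_X)|}\right).
\end{equation}
This is strictly less than $1$ precisely when $k > |\mathcal{A}|^{-1}|\mathcal{T}_n(Q_X)| \log|\mathcal{T}_n(Q_X)|$, which is the hypothesis. Hence for such $k$ there exists at least one realization $(\pi_1, \ldots, \pi_k)$ achieving $\bigcup_{i=1}^k \pi_i[\mathcal{A}] = \mathcal{T}_n(Q_X)$, and since no $\pi_i[\mathcal{A}]$ can escape $\mathcal{T}_n(Q_X)$ (permutations preserve type), the union equals $\mathcal{T}_n(Q_X)$ exactly.

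The main obstacle — really the only subtle point — is rigorously establishing the uniformity claim that $\pi^{-1}[\bm{x}]$ is uniformly distributed over $\mathcal{T}_n(Q_X)$ when $\pi$ is uniform over $S_n$. This follows because $S_n$ acts transitively on $\mathcal{T}_n(Q_X)$ and, for any $\bm{x}, \bm{x}' \in \mathcal{T}_n(Q_X)$, the number of permutations mapping $\bm{x}$ to $\bm{x}'$ equals $|\mathrm{Stab}(\bm{x})| = \prod_{a \in \mathcal{X}} (nQ_X(a))!$, which is independent of $\bm{x}$ and $\bm{x}'$; dividing by $n!$ gives the uniform distribution. Everything else is the routine probabilistic-method machinery (uniform-random object, bound on failure probability of a single point, union bound, inequality $1-t \le e^{-t}$). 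I would also remark that the argument does not require $\bm{a}$ and $\bm{x}$ to lie in the \emph{same} type class a priori — the hypothesis $\mathcal{A} \subseteq \mathcal{T}_n(Q_X)$ ensures $\pi[\mathcal{A}] \subseteq \mathcal{T}_n(Q_X)$ always, so covering $\mathcal{T}_n(Q_X)$ is the natural target, and the counting above is what pins down the constant $\log|\mathcal{T}_n(Q_X)|$.
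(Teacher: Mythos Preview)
Your proof is correct and follows essentially the same probabilistic-method approach as the paper: draw $k$ permutations i.i.d.\ uniformly from $S_n$, show $\Pr[\bm{x}\in\bm{\pi}[\mathcal{A}]]=|\mathcal{A}|/|\mathcal{T}_n(Q_X)|$, apply independence and $(1-t)^k\le e^{-kt}$, and finish with a union bound. The only cosmetic difference is in justifying that key probability---you use the inverse-permutation trick plus orbit--stabilizer, whereas the paper proves a separate ``degree lemma'' via composition and double counting---but both arguments establish the same fact and the rest is identical.
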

\begin{proof}
Ahlswede's original proof is established for a more general result in the context of graph covering.
In Appendix \ref{appendix:Ahlswede_covering_lemma_proof}, we present a specialized version of his proof, distilled from \cite{ahlswedeColoringHypergraphsNew1979} and \cite{ahlswedeColoringHypergraphsNew1980}, and also fill in some missing details he omitted.
Our specialized version of the proof also serves as an important first step towards an extension of this lemma discussed next.
\end{proof}

Ahlswede's covering lemma states that for every set \(\mathcal{A} \subseteq \mathcal{T}_n(Q_X)\), we can find a sequence of $k$ permutations such that the union of the permuted \(\mathcal{A}\)'s covers \(\mathcal{T}_n(Q_X)\), where $ k \ndot{=} \abs{\mathcal{A}}^{-1} \abs{\mathcal{T}_n(Q_X)}$.
However, this sequence of permutations may depend on the particular set \(\mathcal{A}\).
Now suppose that we have multiple distinct sets \(\mathcal{A}_1, \mathcal{A}_2, \ldots, \mathcal{A}_J\) from the same type class $\mathcal{T}_n(Q_X)$. We are interested in finding a sequence of $k$ permutations under which covering simultaneously holds for almost all sets, i.e.,
\begin{equation}
    \bigcup_{i=1}^{k} \pi_i [ \mathcal{A}_j ] = \mathcal{T}_n(Q_X)
\end{equation}
should hold for a large fraction of $j \in [J]$. The key question is, \emph{how small can $k$ be?}
In the following, we  provide an answer to this by extending Ahlswede's Covering Lemma.
\begin{lemma}[Simultaneous Covering]
\label{lemma:Ahlswede_covering_lemma_expurgation}
Fix a type \(Q_X \in \mathcal{P}_n(\mathcal{X})\) and consider an arbitrary collection of sets
\begin{equation}
    \mathcal{F} = \{ \mathcal{A}_1, \mathcal{A}_2, \ldots, \mathcal{A}_J\},
\end{equation}
where \(\mathcal{A}_j \subseteq \mathcal{T}_n(Q_X)\) for every \(\mathcal{A}_j \in \mathcal{F}\).
Let
\begin{equation}
    \mathcal{A}_{\min} = \argmin_{\mathcal{A}_j \in \mathcal{F}} \abs{\mathcal{A}_j}.
\end{equation}
Then, there exists a sequence of permutations \(\pi_1, \pi_2, \ldots, \pi_k \) such that
\begin{equation}
    \bigcup_{i=1}^{k} \pi_i [ \mathcal{A}_j ] = \mathcal{T}_n(Q_X)
\end{equation}
holds for at least half of \(\mathcal{A}_j \in \mathcal{F}\), if \(k > |\mathcal{A}_{\min}|^{-1}\abs{\mathcal{T}_n(Q_X)} \log 2\abs{ \mathcal{T}_n(Q_X) }\).
\end{lemma}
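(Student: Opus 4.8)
The plan is to reuse the probabilistic argument behind Ahlswede's covering lemma (Lemma~\ref{lemma:Ahlswede_covering_lemma}) and upgrade it from a single set to a family of sets, by passing from a per-set failure probability to a bound on the \emph{fraction} of failing sets via Markov's inequality. First I would draw permutations $\pi_1,\dots,\pi_k$ independently and uniformly at random from the symmetric group on $[n]$. Fix a set $\mathcal{A}_j\in\mathcal{F}$ and a sequence $\bm{x}\in\mathcal{T}_n(Q_X)$. The symmetric group acts transitively on $\mathcal{T}_n(Q_X)$, and every point stabilizer has the same size $\prod_{a\in\mathcal{X}}(nQ_X(a))!$, so $\pi_i^{-1}[\bm{x}]$ is uniformly distributed over $\mathcal{T}_n(Q_X)$ and therefore
\begin{equation}
    \P\{\bm{x}\notin\pi_i[\mathcal{A}_j]\}=\P\{\pi_i^{-1}[\bm{x}]\notin\mathcal{A}_j\}=1-\frac{|\mathcal{A}_j|}{|\mathcal{T}_n(Q_X)|}.
\end{equation}
This is exactly the estimate already established inside the proof of Lemma~\ref{lemma:Ahlswede_covering_lemma} in Appendix~\ref{appendix:Ahlswede_covering_lemma_proof}, which I would cite rather than redo.

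By independence of the $\pi_i$'s and a union bound over the $|\mathcal{T}_n(Q_X)|$ sequences $\bm{x}$, the probability that $\bigcup_{i=1}^{k}\pi_i[\mathcal{A}_j]$ misses some point of $\mathcal{T}_n(Q_X)$ is at most $|\mathcal{T}_n(Q_X)|\,(1-|\mathcal{A}_j|/|\mathcal{T}_n(Q_X)|)^{k}\le |\mathcal{T}_n(Q_X)|\,e^{-k|\mathcal{A}_j|/|\mathcal{T}_n(Q_X)|}$, and since $|\mathcal{A}_j|\ge|\mathcal{A}_{\min}|$ this is at most $q:=|\mathcal{T}_n(Q_X)|\,e^{-k|\mathcal{A}_{\min}|/|\mathcal{T}_n(Q_X)|}$, a bound uniform in $j$. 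Letting $N$ denote the number of indices $j\in[J]$ for which $\bigcup_{i=1}^{k}\pi_i[\mathcal{A}_j]\neq\mathcal{T}_n(Q_X)$, linearity of expectation gives $\E[N]\le Jq$, and Markov's inequality yields $\P\{N\ge J/2\}\le 2\E[N]/J\le 2q$.

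It then remains to check that the hypothesis on $k$ makes $2q<1$: the condition $k>|\mathcal{A}_{\min}|^{-1}|\mathcal{T}_n(Q_X)|\log\!\big(2|\mathcal{T}_n(Q_X)|\big)$ is precisely $k|\mathcal{A}_{\min}|/|\mathcal{T}_n(Q_X)|>\log\!\big(2|\mathcal{T}_n(Q_X)|\big)$, i.e.\ $2|\mathcal{T}_n(Q_X)|\,e^{-k|\mathcal{A}_{\min}|/|\mathcal{T}_n(Q_X)|}<1$. Hence $\P\{N\ge J/2\}<1$, so there is a realization of $\pi_1,\dots,\pi_k$ with $N<J/2$; for that realization the covering identity $\bigcup_{i=1}^{k}\pi_i[\mathcal{A}_j]=\mathcal{T}_n(Q_X)$ holds for strictly more than half, hence for at least half, of the sets $\mathcal{A}_j\in\mathcal{F}$, which is the claim.

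The argument is largely routine once Ahlswede's single-set estimate is in hand; the only genuinely new ingredient compared with Lemma~\ref{lemma:Ahlswede_covering_lemma} is the aggregation over the family $\mathcal{F}$, which is why the threshold on $k$ only picks up the extra $\log 2$ factor. The points to be careful about are (i) bounding every per-set failure probability by the \emph{same} quantity $q$ built from $|\mathcal{A}_{\min}|$, so that $\E[N]\le Jq$ holds uniformly, and (ii) targeting $2q<1$ rather than $q<1$, in order to absorb the factor $2$ produced by applying Markov's inequality at level $J/2$. I do not anticipate a substantive obstacle here, since the heavy technical work is the proof of Ahlswede's covering lemma itself, which is assumed.
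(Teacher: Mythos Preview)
Your proposal is correct and takes essentially the same approach as the paper: draw $k$ permutations uniformly at random, bound each set's failure probability by the single-set Ahlswede estimate $|\mathcal{T}_n(Q_X)|\,e^{-k|\mathcal{A}_{\min}|/|\mathcal{T}_n(Q_X)|}$, and then aggregate over the family via a Markov/averaging argument to extract a good realization. The only cosmetic difference is that the paper phrases the aggregation as a double expectation over $\bar{\bm{\pi}}$ and a uniform $\bm{A}\in\mathcal{F}$, exchanges the order, fixes a permutation sequence with average uncovered count below $1/2$, and then applies Markov over the family, whereas you apply Markov directly to the random failure count $N$; the two are equivalent and yield the same threshold on $k$.
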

\begin{proof} In the proof, we build upon Lemma \ref{lemma:Ahlswede_covering_lemma} using an expurgation argument. See Appendix \ref{appendix:Ahlswede_covering_lemma_expurgation_proof}.
\end{proof}
\begin{remark}
With a slight modification in the proof, the fraction \(1/2\) of sets in Lemma \ref{lemma:Ahlswede_covering_lemma_expurgation} can be changed to any \(\delta \in (0,1)\), as long as  \(k > |\mathcal{A}_{\min}|^{-1}\abs{\mathcal{T}_n(Q_X)} \log  (1-\delta)^{-1} \abs{ \mathcal{T}_n(Q_X) }\).
This is the same for all the following results, where the corresponding fractions can be manipulated in a similar fashion.
\end{remark}
Given a constant composition codebook \(\mathcal{C}_n\) with codewords from the type class $\mathcal{T}_n(Q_X)$, let \(|\mathcal{C}_n|\) denote the number of its unique codewords (there may be repetitions of the same codeword within a codebook).
Even though  Lemma \ref{lemma:Ahlswede_covering_lemma_expurgation} is established for a collection of sets \(\{\mathcal{A}_j\}\), it is not difficult to modify it to hold for a collection of constant composition codebooks \(\{\mathcal{C}_n\}\), leading directly to the following corollary.
\begin{corollary}
    \label{cor:Ahlswede_covering_lemma_codebook_expurgation}
    Fix a type \(Q_X \in \mathcal{P}_n(\mathcal{X})\) and consider an arbitrary collection of codebooks \(\mathcal{F} = \{ \mathcal{C}_n\}\), where every codebook in \(\mathcal{F}\) has constant composition codewords from \( \mathcal{T}_n(Q_X) \).
    Let
    \begin{equation}
        \mathcal{C}_{\min} = \argmin_{\mathcal{C}_n \in \mathcal{F}} \abs{\mathcal{C}_n}.
    \end{equation}
    Then, there exists a sequence of permutations \(\pi_1, \pi_2, \ldots, \pi_k \) such that
    \begin{equation}
        \bigcup_{i=1}^{k} \pi_i [ \mathcal{C}_n ] = \mathcal{T}_n(Q_X)
    \end{equation}
    holds for at least half of \(\mathcal{C}_n \in \mathcal{F}\), if \(k > |\mathcal{C}_{\min}|^{-1}\abs{\mathcal{T}_n(Q_X)} \log 2\abs{ \mathcal{T}_n(Q_X) }\).
\end{corollary}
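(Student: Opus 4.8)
The plan is to reduce the statement to the Simultaneous Covering lemma (Lemma \ref{lemma:Ahlswede_covering_lemma_expurgation}) by passing from each codebook to the set of sequences it contains. Given a codebook $\mathcal{C}_n$ whose codewords all lie in $\mathcal{T}_n(Q_X)$, let $\mathcal{A}(\mathcal{C}_n) \subseteq \mathcal{T}_n(Q_X)$ denote the set of its distinct codewords, so that $|\mathcal{A}(\mathcal{C}_n)| = |\mathcal{C}_n|$ by the convention fixed just before the corollary and $\min_{\mathcal{C}_n \in \mathcal{F}} |\mathcal{A}(\mathcal{C}_n)| = |\mathcal{C}_{\min}|$. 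Permuting the entries of each codeword commutes with discarding repeated codewords, so, interpreting $\pi_i[\mathcal{C}_n]$ as the set of sequences obtained from $\mathcal{C}_n$, we have $\pi_i[\mathcal{C}_n] = \pi_i[\mathcal{A}(\mathcal{C}_n)]$, hence $\bigcup_{i=1}^{k} \pi_i[\mathcal{C}_n] = \bigcup_{i=1}^{k} \pi_i[\mathcal{A}(\mathcal{C}_n)]$; thus the event that $\pi_1,\dots,\pi_k$ cover $\mathcal{T}_n(Q_X)$ from $\mathcal{C}_n$ coincides with the event that they cover it from $\mathcal{A}(\mathcal{C}_n)$.

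With this identification in place, I would invoke Lemma \ref{lemma:Ahlswede_covering_lemma_expurgation} for the collection obtained by replacing every $\mathcal{C}_n \in \mathcal{F}$ with $\mathcal{A}(\mathcal{C}_n)$. Since the threshold on $k$ in Lemma \ref{lemma:Ahlswede_covering_lemma_expurgation} involves only $|\mathcal{A}_{\min}| = |\mathcal{C}_{\min}|$ and $|\mathcal{T}_n(Q_X)|$, and the covering conclusion is the one just shown to be equivalent, the bound $k > |\mathcal{C}_{\min}|^{-1}|\mathcal{T}_n(Q_X)| \log 2|\mathcal{T}_n(Q_X)|$ yields a sequence of permutations covering $\mathcal{T}_n(Q_X)$ from at least half of the $\mathcal{A}(\mathcal{C}_n)$, hence from at least half of the codebooks in $\mathcal{F}$. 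If one prefers not to rely on the black box, the same conclusion follows by rerunning the expurgation argument directly at the level of codebooks: draw $k$ independent uniformly random permutations of $[n]$; for fixed $\mathcal{C}_n$ and fixed $x^n \in \mathcal{T}_n(Q_X)$, transitivity of the permutation action on $\mathcal{T}_n(Q_X)$ gives $P[x^n \notin \pi_i[\mathcal{A}(\mathcal{C}_n)]] = 1 - |\mathcal{C}_n|/|\mathcal{T}_n(Q_X)|$, so the expected number of points of $\mathcal{T}_n(Q_X)$ left uncovered by $\mathcal{C}_n$ is at most $|\mathcal{T}_n(Q_X)|\, e^{-k|\mathcal{C}_n|/|\mathcal{T}_n(Q_X)|} \le |\mathcal{T}_n(Q_X)|\, e^{-k|\mathcal{C}_{\min}|/|\mathcal{T}_n(Q_X)|} < 1/2$ under the stated bound on $k$; by Markov's inequality the probability that $\mathcal{C}_n$ is not fully covered is below $1/2$, and averaging this over $\mathcal{C}_n \in \mathcal{F}$ produces a realization covering at least half of the codebooks.

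The only point that needs a little care — and the main obstacle I anticipate — is that the map $\mathcal{C}_n \mapsto \mathcal{A}(\mathcal{C}_n)$ may fail to be injective, since two distinct codebooks can have the same set of distinct codewords; the collection $\{\mathcal{A}(\mathcal{C}_n) : \mathcal{C}_n \in \mathcal{F}\}$ must therefore be handled as a multiset indexed by $\mathcal{F}$ rather than collapsed, and the ``at least half'' must be counted over codebooks with that multiplicity. This is precisely how the expurgation step in Lemma \ref{lemma:Ahlswede_covering_lemma_expurgation} counts, so the transfer is clean; everything else is immediate from $\pi_i[\mathcal{C}_n] = \pi_i[\mathcal{A}(\mathcal{C}_n)]$ and $|\mathcal{A}(\mathcal{C}_n)| = |\mathcal{C}_n|$.
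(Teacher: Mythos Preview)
Your proposal is correct and takes essentially the same approach as the paper, which simply remarks that Lemma~\ref{lemma:Ahlswede_covering_lemma_expurgation} is ``not difficult to modify'' to codebooks without spelling out details. Your reduction via $\mathcal{A}(\mathcal{C}_n)$ together with the multiset caveat is exactly the modification the paper has in mind, and your alternative of rerunning the expurgation argument directly is equally valid.
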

For reasons that will be clear later on, we seek to apply Corollary \ref{cor:Ahlswede_covering_lemma_codebook_expurgation} to the ensemble of constant composition codebooks of rate \(\tilde{R}\) and codeword composition \(Q_X\), i.e., the collection $\mathcal{T}_n(Q_X)^{e^{n\tilde{R}}} $.
The aim there is to demonstrate the existence of \(\pi_1, \pi_2, \ldots, \pi_{k}\), where \(k \ndot{=} e^{n(H(Q_X) - \tilde{R})}\), under which the covering of $\mathcal{T}_n(Q_X)$ is simultaneously achieved by at least half $\mathcal{C}_n \in \mathcal{T}_n(Q_X)^{e^{n\tilde{R}}}$.
However, we encounter a problem if we attempt to directly apply Corollary \ref{cor:Ahlswede_covering_lemma_codebook_expurgation}.
In particular, there are codebooks in the ensemble consisting of only a single unique codeword (i.e., all codewords are the same in the codebook), so we will have \(|\mathcal{C}_{\min}| = 1\). This results in \(k \ndot{=} e^{nH(Q_X)} \) which is trivially achieved and too large for our purpose.

To circumvent this issue, we first restrict our attention to a collection of codebooks \(\mathcal{C}_n\) from $\mathcal{T}_n(Q_X)^{e^{n\tilde{R}}} $ that satisfy \(|\mathcal{C}_n| > \frac{1}{2}e^{n\tilde{R}}\), i.e., \(|\mathcal{C}_{\min}| > \frac{1}{2}e^{n\tilde{R}}\) for this collection.
It turns out that for large $n$, this collection contains almost all codebooks in $\mathcal{T}_n(Q_X)^{e^{n\tilde{R}}} $, as seen through the proof of the following theorem.
\begin{theorem}
    \label{thm:Ahlswede_covering_lemma_ensemble}
    Fix a rate $\tilde{R} > 0$ and type \(Q_X \in \mathcal{P}_n(\mathcal{X})\) with \(H(Q_X) > \tilde{R}\), and consider the constant composition ensemble with rate \(\tilde{R}\) and codeword composition \(Q_X\).
    For sufficiently large \(n\), there exists a sequence of permutations \(\pi_1, \pi_2, \ldots, \pi_k \) such that
    \begin{equation}
        \bigcup_{i=1}^{k} \pi_i [ \mathcal{C}_n ] = \mathcal{T}_n(Q_X)
    \end{equation}
    holds for at least half of \(\mathcal{C}_n \in  \mathcal{T}_n(Q_X)^{e^{n\tilde{R}}}\), where \(k \ndot{=}e^{n(H(Q_X) - \tilde{R})}\).
\end{theorem}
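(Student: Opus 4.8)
The plan is to reduce the statement to Corollary \ref{cor:Ahlswede_covering_lemma_codebook_expurgation} by restricting attention to a sub-collection of the ensemble on which the minimum number of distinct codewords is bounded below by $\frac{1}{2}e^{n\tilde R}$. First I would define $\mathcal{G}_n \subseteq \mathcal{T}_n(Q_X)^{e^{n\tilde R}}$ to be the set of codebooks $\mathcal{C}_n$ with $|\mathcal{C}_n| > \frac{1}{2}e^{n\tilde R}$ distinct codewords. The first key step is to show that $|\mathcal{G}_n| \geq \frac{3}{4}|\mathcal{T}_n(Q_X)^{e^{n\tilde R}}|$ (or any fraction bounded away from $1/2$) for large $n$. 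This is a collision/birthday-type estimate: under the uniform draw of $e^{n\tilde R}$ codewords from $\mathcal{T}_n(Q_X)$, the expected number of colliding pairs is at most $\binom{e^{n\tilde R}}{2}/|\mathcal{T}_n(Q_X)| \ndot{=} e^{n(2\tilde R - H(Q_X))}$, and since $H(Q_X) > \tilde R$ we may not immediately get this below $1$ — but the number of \emph{repeated} codewords is what matters, so I would instead bound the probability that more than $\frac{1}{2}e^{n\tilde R}$ codewords are "lost" to repetition. A cleaner route: the number of distinct codewords is at least $e^{n\tilde R} - (\text{number of collisions})$, and by Markov's inequality the probability that the collision count exceeds $\frac{1}{2}e^{n\tilde R}$ is at most $2e^{-n\tilde R}\mathbb{E}[\text{collisions}] \ndot{=} 2 e^{n(\tilde R - H(Q_X))} \to 0$. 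Hence $P[\mathcal{G}_n] \to 1$.

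The second step is to apply Corollary \ref{cor:Ahlswede_covering_lemma_codebook_expurgation} to the collection $\mathcal{F} = \mathcal{G}_n$. Here $|\mathcal{C}_{\min}| > \frac{1}{2}e^{n\tilde R}$, so the corollary guarantees a single sequence of permutations $\pi_1,\dots,\pi_k$ achieving $\bigcup_{i=1}^k \pi_i[\mathcal{C}_n] = \mathcal{T}_n(Q_X)$ for at least half the codebooks in $\mathcal{G}_n$, provided
\begin{equation}
    k > |\mathcal{C}_{\min}|^{-1}|\mathcal{T}_n(Q_X)|\log 2|\mathcal{T}_n(Q_X)| < 2 e^{-n\tilde R} |\mathcal{T}_n(Q_X)| \log 2|\mathcal{T}_n(Q_X)|.
\end{equation}
Using $|\mathcal{T}_n(Q_X)| \ndot{=} e^{nH(Q_X)}$ and $\log 2|\mathcal{T}_n(Q_X)| \ndot{=} nH(Q_X)$ (polynomial in $n$), the right-hand side is $\ndot{=} e^{n(H(Q_X) - \tilde R)}$, so choosing $k = \lceil e^{n(H(Q_X)-\tilde R)} \cdot \mathrm{poly}(n)\rceil$ works and still satisfies $k \ndot{=} e^{n(H(Q_X)-\tilde R)}$.

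The third step stitches the two fractions together. Covering holds for at least $\frac{1}{2}|\mathcal{G}_n|$ codebooks in $\mathcal{G}_n$, and $|\mathcal{G}_n| \geq \frac{3}{4}|\mathcal{T}_n(Q_X)^{e^{n\tilde R}}|$ for large $n$, so covering holds for at least $\frac{3}{8}$ of all codebooks in the full ensemble — which is short of the claimed $1/2$. To recover exactly $1/2$ I would instead invoke the remark after Lemma \ref{lemma:Ahlswede_covering_lemma_expurgation}: pick the fraction $\delta$ in the simultaneous covering lemma close enough to $1$ (say $\delta = 0.9$) so that $\delta \cdot P[\mathcal{G}_n] \geq \frac{1}{2}$ once $P[\mathcal{G}_n]$ is close to $1$; this only inflates $k$ by a $\log(1-\delta)^{-1}$ factor, which is an $n$-independent constant and does not affect the exponent.

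The main obstacle I anticipate is the collision estimate in the first step, specifically making sure the bound on the number of distinct codewords is genuinely exponentially tight — one must be careful that "collisions" are counted so that $\#\{\text{distinct}\} \geq e^{n\tilde R} - \#\{\text{colliding pairs}\}$ actually holds (it does, since each repeated codeword beyond its first occurrence contributes at least one pair), and that the Markov bound gives decay $e^{-n(H(Q_X)-\tilde R)}$ rather than something weaker. Everything else is bookkeeping with $\ndot{=}$ and the polynomial-factor absorption that is routine in the method of types.
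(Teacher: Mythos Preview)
Your proposal is correct and follows the same three-step architecture as the paper: restrict to the sub-collection $\mathcal{G}_n$ (the paper calls it $\mathcal{F}$) of codebooks with more than $\tfrac{1}{2}e^{n\tilde R}$ distinct codewords, apply the simultaneous-covering corollary there, and then show $\mathcal{G}_n$ is asymptotically all of the ensemble. The two proofs differ only in minor details. For the fraction arithmetic, the paper applies the corollary directly with $\delta = 2/3$ (via the remark after Lemma~\ref{lemma:Ahlswede_covering_lemma_expurgation}), so that $2/3$ of $\mathcal{F}$ combined with $|\mathcal{F}|/|\mathcal{T}_n(Q_X)|^{e^{n\tilde R}} \to 1$ yields the claimed $1/2$; your detour through $\delta = 0.9$ works equally well. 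For the collision step, the paper's Lemma~\ref{prop:ratio_unique_codewords_ensemble} uses a direct counting argument over all $\binom{|\mathcal{T}_n(Q_X)|}{\frac{1}{2}e^{n\tilde R}}$ possible support sets to obtain \emph{double-exponential} decay of $\P\{|\bm{C}| \leq \tfrac{1}{2}e^{n\tilde R}\}$, whereas your Markov bound on colliding pairs gives only single-exponential decay $e^{-n(H(Q_X)-\tilde R)}$ (up to polynomial factors). Your argument is more elementary and entirely sufficient for the theorem as stated; the paper's sharper rate is not needed here.
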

\begin{proof}
Consider the collection of codebooks in which more than $1/2$ of codewords are unique, i.e.,
\begin{equation}
    \mathcal{F} = \Big\{ \mathcal{C}_n \in  \mathcal{T}_n(Q_X)^{e^{n\tilde{R}}} :  | \mathcal{C}_n |   > \frac{1}{2} e^{n\tilde{R}}  \Big\}.
\end{equation}
Applying Corollary \ref{cor:Ahlswede_covering_lemma_codebook_expurgation} to \(\mathcal{F}\), we see that there is a sequence of permutations \(\pi_1, \pi_2, \ldots, \pi_k \) such that
\begin{equation}
    \bigcup_{i=1}^{k} \pi_i [ \mathcal{C}_n ] = \mathcal{T}_n(Q_X)
\end{equation}
holds for at least a fraction $\delta = 2/3$ of \(\mathcal{C}_n \in \mathcal{F}\), where
\begin{equation}
    k =  2 e^{-n\tilde{R}}  \abs{\mathcal{T}_n(Q_X)} \log 3\abs{ \mathcal{T}_n(Q_X) } > |\mathcal{C}_{\min}|^{-1}\abs{\mathcal{T}_n(Q_X)} \log 3\abs{ \mathcal{T}_n(Q_X) },
\end{equation}
which holds since \(|\mathcal{C}_{\min}| > \frac{1}{2} e^{n\tilde{R}} \).
To complete the proof of the theorem, we show that as $n$ grows large, almost all constant composition codebooks in $\mathcal{T}_n(Q_X)^{e^{n\tilde{R}}}$ are also in $\mathcal{F}$.
To this end, recall that the random constant composition codebook $\bm{C} $ is uniformly distributed on $\mathcal{T}_n(Q_X)^{e^{n\tilde{R}}}$.
\begin{lemma}
    \label{prop:ratio_unique_codewords_ensemble}
    The probability $\P\{ | \bm{C} |   \leq \frac{1}{2} e^{n\tilde{R}} \}$ decays to \(0\) double exponentially. Hence, the ratio of codebooks in $\mathcal{T}_n(Q_X)^{e^{n\tilde{R}}}$ with less than \(\frac{1}{2} e^{n\tilde{R}}\) unique codewords decays to $0$ double exponentially.
\end{lemma}
\begin{proof}
See Appendix \ref{appendix:proof_ratio_unique_codewords_ensemble}.
\end{proof}
Since at least $2/3$ of codebooks in $\mathcal{F}$ satisfy the simultaneous covering property under \(\pi_1, \pi_2, \ldots, \pi_k \), and by Lemma \ref{prop:ratio_unique_codewords_ensemble} we have $|\mathcal{F}|/ |\mathcal{T}_n(Q_X)|^{e^{n\tilde{R}}} \to 1$ as $n \to \infty$, then for large enough  $n$, we see that
\begin{equation}
    \bigcup_{i=1}^{k} \pi_i [ \mathcal{C}_n ] = \mathcal{T}_n(Q_X)
\end{equation}
holds for at least half of \(\mathcal{C}_n \in \mathcal{T}_n(Q_X)^{e^{n\tilde{R}}}\), where \(k \ndot{=}e^{n(H(Q_X) - \tilde{R})}\). This completes the proof.
\end{proof}
Theorem \ref{thm:Ahlswede_covering_lemma_ensemble} will play an essential role in constructing good codes for the WAK problem from good codes for the IB channel through permutations, as we see next.
\subsection{Encoder at the Transmitter}
The transmitter describes \(X^n\) to the receiver through an encoder mapping \(f_n^{\prime} : \mathcal{X}^n \to [e^{nR}]\), chosen as follows.
For each type \(Q_X \in \mathcal{P}_n(\mathcal{X})\) satisfying \(H(Q_X) \geq R\), we consider the constant composition codebook ensemble  with rate \(\tilde{R} = H(Q_X) - R\) and codeword composition \(Q_X\).
Since \(H(Q_X) > \tilde{R}\) if \(R > 0\), we follow Theorem \ref{thm:Ahlswede_covering_lemma_ensemble} and find a sequence of permutations \(\pi_1, \ldots, \pi_{k}\) such that
\begin{equation}
    \bigcup_{i=1}^{k} \pi_i [ \mathcal{C}_n ] = \mathcal{T}_n(Q_X) \label{eq:source_coding_coded_side_information_assumption_on_permutation}
\end{equation}
holds for at least half of \(\mathcal{C}_n \in \mathcal{T}_n(Q_X)^{e^{n \tilde{R}}}\), where \(k \ndot{=} e^{nR}\).
Next, we select a codebook from \(\mathcal{T}_n(Q_X)^{e^{n \tilde{R}}} \) such that \eqref{eq:source_coding_coded_side_information_assumption_on_permutation} holds and denote it by \(\mathcal{C}_n(Q_X) \).
Let \( \{ \mathcal{C}_n(Q_X) \}\) denote the set of selected codebooks for different types.
Both the transmitter and receiver are assumed to have access to the sequence of permutations \(\pi_1, \ldots, \pi_{k}\) associated with each \(Q_X\), as well as the codebooks \( \{ \mathcal{C}_n(Q_X) \}\).

Given an observation \(\bm{X} = \bm{x}\), the transmitter first examines the type of \(\bm{x}\), and sends an index to describe \(\hat{P}_{\bm{x}}\) to the receiver.
Since there are at most \((1+n)^{\abs{\mathcal{X}}}\) possible types, including the type index does not break the rate limit \(R\) asymptotically.
Next, if \(H(\hat{P}_{\bm{x}}) < R\), then the transmitter sends an index from \([e^{nR}]\) to describe \(\bm{x}\).
Combined with the type index of \(\hat{P}_{\bm{x}}\), we see that the receiver can recover such \(\bm{x}\) losslessly even without the helper, as observed by Oohama and Han \cite{oohamaUniversalCodingSlepianWolf1994} in the Slepian-Wolf problem.
Therefore, we will ignore these \(\bm{x}\) from now on, since they do not contribute to the decoding error probability.
On the other hand,  if \(H(\hat{P}_{\bm{x}}) \geq R\), then the transmitter looks up the permutations \(\pi_1, \ldots, \pi_{k}\) associated with \(\hat{P}_{\bm{x}}\) and also the selected codebook \(\mathcal{C}_n( \hat{P}_{\bm{x}} )\).
It identifies a permutation index \(i \in [k]\) such that \(\bm{x} \in \pi_i [ \mathcal{C}_n (\hat{P}_{\bm{x}}) ]\).
Since \(\mathcal{C}_n( \hat{P}_{\bm{x}} )\) and the permutations are selected to satisfy \eqref{eq:source_coding_coded_side_information_assumption_on_permutation}, it is guaranteed that such index \(i\) must exist.
The transmitter selects one arbitrarily if there are multiple such \(i\).
It then sends the permutation index \(i\) to the receiver.
Since \(k \ndot{=} e^{nR}\), the rate limit \(R\) is satisfied asymptotically.
\subsection{Encoder at the Helper}
We now turn our attention to the helper, which describes the side information \(Y^n\) to the receiver through an independent encoder \(\varphi_{n}': \mathcal{Y}^n \to [e^{nB}]\).
As stated earlier, given \(\bm{X} = \bm{x}\), the transmitter sends the type index for \(\hat{P}_{\bm{x}}\) and the permutation index \(i\) to the receiver.
With knowledge of \(\hat{P}_{\bm{x}}\) and permutation index \(i\), the receiver finds \( \pi_i [ \mathcal{C}_{n}(\hat{P}_{\bm{x}})]\) that contains $\bm{x}$, since it also has access to \(\{ \mathcal{C}_n(Q_X) \}\) and the permutations associated with each codebook,  while the helper is oblivious to it.

Going forward, we may view \(\pi_i[  \mathcal{C}_n(\hat{P}_{\bm{x}}) ]\) as a codebook from the IB channel setting, where the sequence \(\bm{x} \in \pi_i[  \mathcal{C}_n(\hat{P}_{\bm{x}}) ] \) generated by the source can be regarded as a codeword in this  codebook.
The distribution of the random side information sequence \(\bm{Y}\) conditioned on \(\bm{X} = \bm{x}\) is
\begin{equation}
    P_{\bm{Y} | \bm{X}}(\bm{y} | \bm{x}) = \prod_{i=1}^{n}P_{Y|X}( y_i | x_i ),
\end{equation}
i.e., the channel from the transmitter to the helper is the DMC \(P_{Y|X}\).
Given the rate-limited description \(l \in [e^{nB}]\) from the helper, the receiver decides which source sequence in \(\pi_i[  \mathcal{C}_n(\hat{P}_{\bm{x}}) ] \) is observed at the transmitter, i.e.,  which codeword from the codebook \(\pi_i[  \mathcal{C}_n(\hat{P}_{\bm{x}}) ]\) is passing through \(P_{Y|X}\) to the helper.
We can hence regard the  transmitter-helper-receiver path as an instance of the IB channel \((P_{Y|X}, B)\), where the helper takes the oblivious relay's role.
We choose the helper's encoder \(\varphi_n'\) to be the same as the oblivious relay's compress-forward mapping  \(\varphi_n\) in Section \ref{sec:lower_bound_proof}, and therefore the construction of the bottleneck codebooks \(\{\mathcal{B}_n(Q_Y)\}\) at the helper is the same as the one stated in Section \ref{sec:lower_bound_bottleneck_codebook_construction}.

\subsection{Error Analysis}
We now show that  the coding scheme constructed by permuting good codes for the IB channel attains the best known achievable error exponent of the WAK problem, previously established in \cite{kellyReliabilitySourceCoding2012}.
Note that the decoding strategy at the receiver is same as the one in Section \ref{sec:lower_bound_encoding_decoding}, with the only difference being that the codebook in use, i.e., \(\pi_i[  \mathcal{C}_n(\hat{P}_{\bm{x}}) ] \), is communicated to the receiver through the forwarded type $\hat{P}_{\bm{x}}$ and permutation index \(i\).

Recall that under the encoder at the transmitter, if \(H(\hat{P}_{\bm{x}}) < R\), then the receiver can recover \(\bm{x}\) losslessly.
Thus, under the coding scheme we described, we have
\begin{align}
    & \P \{ \hat{\bm{X}} \neq \bm{X}\} \nonumber \\
    & = \sum_{ \substack{ \bm{x} \in \mathcal{X}^n : \\ H(\hat{P}_{\bm{x}}) \geq R}  } \P \{ \bm{X} = \bm{x}\} \times \P \{ \hat{\bm{X}} \neq \bm{x} | \bm{X} = \bm{x}\} \\
    & = \sum_{  \substack{Q_X \in \mathcal{P}_n(\mathcal{X}) : \\ H(Q_X) \geq R}  } \  \sum_{ \bm{x} \in \mathcal{T}_n(Q_X)} \P \{ \bm{X} = \bm{x}\} \times \P \{ \hat{\bm{X}} \neq \bm{x} | \bm{X} = \bm{x}\} \\
    & = \sum_{  \substack{Q_X \in \mathcal{P}_n(\mathcal{X}) : \\ H(Q_X) \geq R}  } e^{-n( D(Q_X\|P_X) + H(Q_X) )} \times \sum_{ \bm{x} \in \mathcal{T}_n(Q_X)} \P \{ \hat{\bm{X}} \neq \bm{x} | \bm{X} = \bm{x}\} \label{eq:source_coding_coded_side_information_type_sequence_probability} \\
    & \leq \sum_{  \substack{Q_X \in \mathcal{P}_n(\mathcal{X}) : \\ H(Q_X) \geq R}  } e^{-n( D(Q_X\|P_X) + H(Q_X) )} \times  \sum_{i=1}^{k}  \sum_{ \bm{x} \in \pi_i [\mathcal{C}_n(Q_X)] } \P \{ \hat{\bm{X}} \neq \bm{x} | \bm{X} = \bm{x} \} \label{eq:source_coding_coded_side_information_union_permutation} \\
    & =  \sum_{  \substack{Q_X \in \mathcal{P}_n(\mathcal{X}) : \\ H(Q_X) \geq R}  } e^{-n( D(Q_X\|P_X) + H(Q_X) )} \times  \sum_{i=1}^{k}  e^{n(H(Q_X) - R)} \times \bar{\lambda}(n, H(Q_X) - R, B, \pi_i[\mathcal{C}_n(Q_X)]) \label{eq:source_coding_coded_side_information_IB_codebook}
\end{align}
where \eqref{eq:source_coding_coded_side_information_type_sequence_probability} is because   $\P \{ \bm{X} = \bm{x}\} =  e^{-n( D(Q_X\|P_X) + H(Q_X) )}$ for every \(\bm{x} \in \mathcal{T}_n(Q_X)\);
\eqref{eq:source_coding_coded_side_information_union_permutation} is due to
\begin{equation}
    \bigcup_{i=1}^{k} \pi_i [ \mathcal{C}_n(Q_X) ] = \mathcal{T}_n(Q_X);
\end{equation}
which holds by codebook construction;
and \eqref{eq:source_coding_coded_side_information_IB_codebook} holds since we are using the IB channel's coding scheme, and hence for codebook $\pi_i [\mathcal{C}_n(Q_X)]$, the average error is given by
\begin{align}
\frac{1}{e^{n(H(Q_X) - R)}} \sum_{ \bm{x} \in \pi_i [\mathcal{C}_n(Q_X)] } \P \{ \hat{\bm{X}} \neq \bm{x} | \bm{X} = \bm{x} \} = \bar{\lambda}(n, H(Q_X) - R, B, \pi_i[\mathcal{C}_n(Q_X)]) \nonumber.
\end{align}
Now define the mean decoding error probability over the sequence of permuted codebooks as
\begin{align}
    \bar{\lambda}^{(\pi)}(n, H(Q_X) - R, B, \mathcal{C}_n(Q_X))  \triangleq \frac{1}{k} \sum_{ i=1 }^{k} \bar{\lambda}(n, H(Q_X) - R, B, \pi_i[\mathcal{C}_n(Q_X)]),
\end{align}
and recall that \(k \ndot{=} e^{nR}\). Plugging these back into \eqref{eq:source_coding_coded_side_information_IB_codebook}, we obtain
\begin{align}
    & \P \{ \hat{\bm{X}} \neq \bm{X}\} \nonumber \\
    & \leq \sum_{  \substack{Q_X \in \mathcal{P}_n(\mathcal{X}) : \\ H(Q_X) \geq R}  } e^{-n( D(Q_X\|P_X) + H(Q_X) )} \times  e^{n(H(Q_X) - R)} \times k \times  \bar{\lambda}^{(\pi)}(n, H(Q_X) - R, B, \mathcal{C}_n(Q_X)) \label{eq:source_coding_coded_side_information_codebook_permutation_vector} \\
    & \ndot{ = } \max_{ \substack{Q_X \in \mathcal{P}_n(\mathcal{X}) : \\ H(Q_X) \geq R} } e^{-n D(Q_X\|P_X) } \times \bar{\lambda}^{(\pi)}(n, H(Q_X) - R, B, \mathcal{C}_n(Q_X)). \label{eq:source_coding_coded_side_information_asymptotic_k}
\end{align}

We now wish to find an upper bound for \(\bar{\lambda}^{(\pi)}(n, H(Q_X) - R, B, \mathcal{C}_n(Q_X))\), whose value clearly depends on \(\mathcal{C}_n(Q_X)\) we selected.
To find a good codebook \(\mathcal{C}_n(Q_X)\), we use a random coding argument and take the ensemble average over $\bm{C}$, uniformly distributed on $\mathcal{T}_n(Q_X)^{e^{n(H(Q_X) - R)}}$.
Observe that
\begin{align}
\E [ \bar{\lambda}^{(\pi)}(n, H(Q_X) - R, B, \bm{C}) ] & = \E \bigg[  \frac{1}{k} \sum_{ i=1 }^{k} \bar{\lambda}(n, H(Q_X) - R, B, \pi_i[\bm{C}])  \bigg]  \label{eq:source_coding_coded_side_information_ensemble_lambda_pi_definition}\\
    & = \frac{1}{k} \sum_{ i=1 }^{k} \E[ \bar{\lambda}(n, H(Q_X) - R, B, \pi_i[\bm{C}]) ] \\
    & = \frac{1}{k} \sum_{ i=1 }^{k} \E[ \bar{\lambda}(n, H(Q_X) - R, B, \bm{C}) ] \label{eq:source_coding_coded_side_information_ensemble_permutation_invariant} \\
    & = \E[ \bar{\lambda}(n, H(Q_X) - R, B, \bm{C}) ] \\
    & = \bar{\lambda}(n, H(Q_X) - R, B),
\end{align}
where \eqref{eq:source_coding_coded_side_information_ensemble_lambda_pi_definition} is due to the definition of \( \bar{\lambda}^{(\pi)}(n, H(Q_X)-R, B, \mathcal{C}_n)\); and in \eqref{eq:source_coding_coded_side_information_ensemble_permutation_invariant} we observe that the constant composition random codebook \(\bm{C}\) is invariant under permutations, i.e., for any permutation \(\pi\), \(\pi[\bm{C}]\) has the same distribution as \(\bm{C}\).
Moreover, in Section \ref{sec:lower_bound_proof}, we have shown that for the constant composition ensemble  with codeword composition \(Q_X\), the compress-forward strategy under the MMI decoder produces an ensemble-average error probability satisfying
\begin{align}
    \bar{\lambda}(n, H(Q_X) - R, B) & \ndot{\leq} \max_{Q_Y \in \mathcal{P}_n(\mathcal{Y})} \max_{ \substack{Q_{X|YU} }  } \exp \big\{ -n \big(  D(Q_{Y|X}  \| P_{Y|X} | Q_X) + I_Q(X;U|Y) + \nonumber \\
    & \hspace{5.5cm} \big | R - H_Q(U|X) - |I_Q(Y;U) - B |^{+}  \big| ^{+}   \big) \big\}, \label{eq:source_coding_coded_side_information_type_error_exponent}
\end{align}
if we do not include the optimization over \(P_{U|Y}\).
In the conventional random coding argument, one would  argue that \eqref{eq:source_coding_coded_side_information_type_error_exponent} implies that there exists at least one codebook \(\mathcal{C}_n(Q_X)\) such that
\begin{align}
    & \bar{\lambda}^{(\pi)}(n, H(Q_X) - R, B, \mathcal{C}_n(Q_X)) \nonumber \\
    & \ndot{\leq} \max_{Q_Y \in \mathcal{P}_n(\mathcal{Y})} \max_{ \substack{Q_{X|YU}  }  }  \exp \big\{ -n \big(  D(Q_{Y|X}  \| P_{Y|X} | Q_X) + I_Q(X;U|Y) + \nonumber \\
    & \hspace{5.5cm} \big | R - H_Q(U|X) - |I_Q(Y;U) - B |^{+}  \big| ^{+}   \big) \big\}. \label{eq:source_coding_coded_side_information_codebook_error_exponent}
\end{align}
However, recall the assumption we made when constructing the encoder that \(\mathcal{C}_n(Q_X)\) must also satisfy
\begin{equation}
    \bigcup_{i=1}^{k} \pi_i [ \mathcal{C}_n ] = \mathcal{T}_n(Q_X). \label{eq:source_coding_coded_side_information_codebook_requirement_covering}
\end{equation}
Thus, we need to find a codebook \(\mathcal{C}_n(Q_X)\) such that both \eqref{eq:source_coding_coded_side_information_codebook_error_exponent} and \eqref{eq:source_coding_coded_side_information_codebook_requirement_covering} hold at the same time.
This is accomplished through the expurgation technique together with Theorem \ref{thm:Ahlswede_covering_lemma_ensemble}.

Recall that we selected the sequence of permutations \(\pi_1, \pi_2, \ldots, \pi_k\) according to Theorem \ref{thm:Ahlswede_covering_lemma_ensemble}, i.e., for this specific sequence of permutations, the covering property in \eqref{eq:source_coding_coded_side_information_codebook_requirement_covering} holds for at least half of codebooks \(\mathcal{C}_n(Q_X) \in \mathcal{T}_n(Q_X)^{e^{n(H(Q_X) - R)}}\) in the constant composition ensemble.
On the other hand, through the expurgation technique, we can show that by getting rid of the worst one third of codebooks in the ensemble, the remaining two thirds of codebooks satisfy \eqref{eq:source_coding_coded_side_information_codebook_error_exponent}.
Since \(\frac{1}{2} + \frac{2}{3} > 1\), there must be an overlap between the two sets of codebooks, i.e.,  there must exist a \(\mathcal{C}_n(Q_X)\) such that \eqref{eq:source_coding_coded_side_information_codebook_error_exponent} and \eqref{eq:source_coding_coded_side_information_codebook_requirement_covering} hold at the same time.
By selecting such \(\mathcal{C}_n(Q_X)\), we can substitute \eqref{eq:source_coding_coded_side_information_codebook_error_exponent} into \eqref{eq:source_coding_coded_side_information_asymptotic_k}, which leads to
\begin{align}
    & \P \{ \hat{\bm{X}} \neq \bm{X}\} \nonumber \\
    & \ndot{ \leq } \max_{ \substack{Q_X \in \mathcal{P}_n(\mathcal{X}) : \\ H(Q_X) \geq R} } \max_{Q_Y \in \mathcal{P}_n(\mathcal{Y})} \max_{ \substack{Q_{X|YU} }  }  \exp \big\{ -n \big(  D(Q_{XY}  \| P_{XY}) + I_Q(X;U|Y) + \nonumber \\
    & \hspace{6.5cm} \big | R - H_Q(U|X) - |I_Q(Y;U) - B |^{+}  \big| ^{+}   \big) \big\} \\
    & = \max_{Q_Y \in \mathcal{P}_n(\mathcal{Y})}  \max_{ \substack{Q_{X|YU} : \\  H(Q_X) \geq R}  }  \exp \big\{ -n \big(  D(Q_{XY}  \| P_{XY}) + I_Q(X;U|Y) + \nonumber \\
    & \hspace{6.5cm} \big | R - H_Q(U|X) - |I_Q(Y;U) - B |^{+}  \big| ^{+}   \big) \big\} \\
    & = \max_{Q_Y \in \mathcal{P}_n(\mathcal{Y})} \min_{P_{U|Y}}  \max_{ \substack{Q_{X|YU} : \\  H(Q_X) \geq R}  }  \exp \big\{ -n \big(  D(Q_{XY}  \| P_{XY}) + I_Q(X;U|Y) + \nonumber \\
    & \hspace{6.5cm} \big | R - H_Q(U|X) - |I_Q(Y;U) - B |^{+}  \big| ^{+}   \big) \big\} \label{eq:source_coding_side_information_optimization_over_P_U_Y},
\end{align}
where in \eqref{eq:source_coding_side_information_optimization_over_P_U_Y} we select the best \(P_{U|Y}\) for every \(Q_Y \in \mathcal{P}_n(\mathcal{Y})\) when constructing the scheme.
This completes the proof for Theorem \ref{thm:source_coding_coded_side_information_exponent_lower_bound}.


\subsection{Mismatched Decoding}
\label{sec:source_coding_coded_side_information_mismatch}
We now consider the WAK problem under a mismatched decoding rule. For every index \(l\) forwarded from the helper, the receiver is required to reconstruct a certain sequence \(\bm{u}\). Given an index \(i\) from the transmitter, it adopts the following decoding rule
\begin{equation}
    \hat{\bm{x}} = \argmax_{ \bm{x} \in f^{\prime}_n(i)^{-1}, \bm{u}  } g(\hat{P}_{\bm{x}}, \hat{P}_{\bm{u} | \bm{x}}),
\end{equation}
where \(f^{\prime}_n(i)^{-1} =\{  \bm{x} : f^{\prime}_n( \bm{x} ) = i\} \) and
\begin{equation}
    g(\hat{P}_{\bm{x}}, \hat{P}_{\bm{u} | \bm{x}} ) = \sum_{x,u} \hat{P}_{\bm{x} \bm{u}}(x,u)\log q(x, u),
\end{equation}
in which \(q(x, u)\) is some decoding metric.
Since the decoder in Section \ref{sec:lower_bound_proof} includes the mismatched decoder, we have this immediate result.
\begin{theorem}
    \label{thm:source_coding_coded_side_information_mismatch_exponent}
    For the DMS pair \((X^n, Y^n)\), under a mismatched decoding rule, we have the following achievable error exponent
    \begin{align}
        & \liminf_{n \to \infty} -\frac{1}{n} \log \lambda^{\prime}(n ,R, B) \nonumber \\
        & \geq \min_{Q_Y} \max_{Q_{U|Y}} \min_{ \substack{Q_{X|YU} : \\ H(Q_X) \geq R}  }  D(Q_{XY} \| P_{XY}) + I_Q(X;U|Y) + \nonumber \\
        & \hspace{6.5cm} \big | R +  E_0(Q_X, Q_{U|X}) - H_Q(X) - |I_Q(Y;U) - B |^{+}  \big| ^{+},
    \end{align}
    where \(E_0(Q_X, Q_{U|X})\) is given by  \eqref{eq:lower_bound_error_exponent_generalized_decoder_E_0}.
\end{theorem}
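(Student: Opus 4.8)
The plan is to follow the proof of Theorem~\ref{thm:source_coding_coded_side_information_exponent_lower_bound} almost verbatim, changing only the decoding rule at the receiver from MMI to the mismatched $\alpha$-decoder with metric $q$. This works because the entire construction in Sections~\ref{sec:lower_bound_proof} and~\ref{sec:source_coding_coded_side_information_exponent_lower_bound}---the permutation-based transmitter encoder, the covering guarantee of Theorem~\ref{thm:Ahlswede_covering_lemma_ensemble}, the helper's compress-forward mapping, and the expurgation argument that reconciles covering with a good error bound---was developed for the generalized $\alpha$-decoder, of which the mismatched rule is a special case.

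First I would keep the code unchanged: for each type $Q_X\in\mathcal{P}_n(\mathcal{X})$ with $H(Q_X)\geq R$, set $\tilde{R}=H(Q_X)-R$, invoke Theorem~\ref{thm:Ahlswede_covering_lemma_ensemble} to obtain $k \ndot{=} e^{nR}$ permutations $\pi_1,\dots,\pi_k$ together with a rate-$\tilde{R}$ constant-composition codebook $\mathcal{C}_n(Q_X)$ satisfying $\bigcup_{i}\pi_i[\mathcal{C}_n(Q_X)]=\mathcal{T}_n(Q_X)$; the transmitter sends the type index of $\hat{P}_{\bm{x}}$ and a permutation index $i$ with $\bm{x}\in\pi_i[\mathcal{C}_n(\hat{P}_{\bm{x}})]$ (sources with $H(\hat{P}_{\bm{x}})<R$ are recovered losslessly and contribute no error); and the helper uses the oblivious relay's compress-forward mapping $\varphi_n$ from Section~\ref{sec:lower_bound_bottleneck_codebook_construction}. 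The only modification is that the receiver, having recovered $\hat{P}_{\bm{x}}$ and $i$ exactly and hence the permuted codebook $\pi_i[\mathcal{C}_n(\hat{P}_{\bm{x}})]$, now decodes using $g(\hat{P}_{\bm{x}},\hat{P}_{\bm{u}|\bm{x}})=\sum_{x,u}\hat{P}_{\bm{x}\bm{u}}(x,u)\log q(x,u)$ instead of empirical mutual information.

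Next I would run the error decomposition exactly as in Section~\ref{sec:source_coding_coded_side_information_exponent_lower_bound}: split $\P\{\hat{\bm{X}}\neq\bm{X}\}$ over types with $H(Q_X)\geq R$, bound the sum over $\mathcal{T}_n(Q_X)$ by the sum over the $k$ permuted codebooks via the covering property, identify this with $e^{n\tilde{R}}$ times the IB-channel ensemble-average error $\bar\lambda(n,\tilde{R},B,\pi_i[\mathcal{C}_n(Q_X)])$ under the mismatched decoder, average over $i$, and use permutation-invariance of the constant-composition ensemble to get $\E[\bar\lambda^{(\pi)}(n,\tilde{R},B,\bm{C})]=\bar\lambda(n,\tilde{R},B)$. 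Here I invoke the generalized-decoder bound~\eqref{eq:lower_bound_error_exponent_generalized_decoder_exponent} (cf.\ Theorem~\ref{thm:lower_bound_mismatch}) without the inner $P_{U|Y}$-optimization; this is precisely the MMI bound~\eqref{eq:source_coding_coded_side_information_type_error_exponent} with $I(Q_X,Q_{U|X})$ replaced by $E_0(Q_X,Q_{U|X})$, so the substitution $R\mapsto\tilde{R}=H(Q_X)-R$ turns the term $\big|E_0(Q_X,Q_{U|X})-\tilde{R}-|I_Q(Y;U)-B|^{+}\big|^{+}$ into $\big|R+E_0(Q_X,Q_{U|X})-H_Q(X)-|I_Q(Y;U)-B|^{+}\big|^{+}$. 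The expurgation step is unchanged: Theorem~\ref{thm:Ahlswede_covering_lemma_ensemble} gives permutations under which at least half of the ensemble's codebooks cover $\mathcal{T}_n(Q_X)$, random coding plus expurgation gives at least two thirds meeting the error bound, and since $\tfrac12+\tfrac23>1$ some $\mathcal{C}_n(Q_X)$ satisfies both; finally I combine the $e^{-nD(Q_X\|P_X)}$ factor with $D(Q_{Y|X}\|P_{Y|X}|Q_X)$ into $D(Q_{XY}\|P_{XY})$ and select the best $P_{U|Y}$ per type $Q_Y$ to recover the $\min_{P_{U|Y}}$ in the statement.

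The only thing to check---and this is the mild ``obstacle''---is that nothing in the permutation transfer or the expurgation relies on MMI-specific structure. This reduces to two observations, both already available: the generalized-decoder exponent~\eqref{eq:lower_bound_error_exponent_generalized_decoder_exponent} holds for an arbitrary continuous $g$ (the type-counting steps in Section~\ref{sec:lower_bound_proof} never use which $g$ sits in the threshold test), and decoding within $f_n'(i)^{-1}$ is equivalent to decoding within $\pi_i[\mathcal{C}_n(\hat{P}_{\bm{x}})]$ under the mismatched rule for the same reason it was under MMI, namely that the receiver recovers $\hat{P}_{\bm{x}}$ and $i$ without error so that the feasible set of the $\arg\max$ is exactly the permuted codebook. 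Granting these, the bound follows immediately, with $E_0(Q_X,Q_{U|X})$ as defined in~\eqref{eq:lower_bound_error_exponent_generalized_decoder_E_0}.
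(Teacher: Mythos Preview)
Your proposal is correct and follows the same approach as the paper: the paper simply notes that since the construction in Section~\ref{sec:lower_bound_proof} and the transfer in Section~\ref{sec:source_coding_coded_side_information_exponent_lower_bound} were carried out for a generalized $\alpha$-decoder, Theorem~\ref{thm:source_coding_coded_side_information_mismatch_exponent} follows immediately by specializing to the mismatched metric $g(P,Q)=\sum_{x,u}P_{XU}(x,u)\log q(x,u)$. You have merely unpacked this ``immediate'' step in detail, correctly identifying that the permutation covering, the expurgation, and the error bound~\eqref{eq:lower_bound_error_exponent_generalized_decoder_exponent} are all decoder-agnostic, with the only effect of the mismatched rule being the appearance of $E_0(Q_X,Q_{U|X})$ in place of $I_Q(X;U)$.
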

Following the proof of Corollary \ref{cor:achievable_rate}, it can be verified that this exponent leads to the following achievable rates of the mismatched WAK problem.
\begin{corollary}
    \label{cor:source_coding_coded_side_information_mismatch_rate}
    For the DMS pair \((X^n, Y^n)\), under a mismatched decoding rule, all rates up to \(R_{\textnormal{LM}}(B)\) are achievable, where
    \begin{equation}
        R_{\textnormal{LM}}(B) = H(P_X) - \max_{ P_{U|Y}} E_0(P_X, P_{U|X}) \qquad \text{s.t.} \qquad  I(P_{Y}, P_{U|Y}) \leq B,
    \end{equation}
    in which \(X \overset{P_{Y|X}}{\to} Y \overset{P_{U|Y}}{\to} U\) forms a Markov chain and \(E_0(P_X, P_{U|X})\) is given by \eqref{eq:lower_bound_error_exponent_generalized_decoder_E_0}.
\end{corollary}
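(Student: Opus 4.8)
The plan is to transcribe the achievable-rate argument of Section~\ref{sec:proof_achievable_rate_corollary} into the mismatched WAK setting, with the error exponent of Theorem~\ref{thm:source_coding_coded_side_information_mismatch_exponent} playing the role of $E_{\textnormal{r}}(R,B,P_X)$. Write $R_{\textnormal{LM}}(B) = H(P_X) - E_0(P_X, P^{\dagger}_{U|X})$, where $P^{\dagger}_{U|Y}$ attains $\max_{P_{U|Y}} E_0(P_X, P_{U|X})$ subject to $I(P_Y, P_{U|Y}) \le B$ along $X\to Y\to U$, and $P^{\dagger}_{U|X} = P_{Y|X}\cdot P^{\dagger}_{U|Y}$. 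Since achievability of a rate $R$ in the WAK problem means $\lambda'(n,R,B)\to 0$, which in the source-coding convention concerns rates above the threshold, it suffices to show that the exponent of Theorem~\ref{thm:source_coding_coded_side_information_mismatch_exponent},
\[
\min_{Q_Y}\max_{P_{U|Y}}\min_{\substack{Q_{X|YU}:\,H(Q_X)\ge R}} D(Q_{XY}\|P_{XY}) + I_Q(X;U|Y) + \Big| R + E_0(Q_X,Q_{U|X}) - H_Q(X) - |I_Q(Y;U)-B|^{+}\Big|^{+},
\]
is strictly positive whenever $R > R_{\textnormal{LM}}(B)$. Because the objective is continuous in $(Q_Y,P_{U|Y},Q_{X|YU})$ over compact simplices, pointwise positivity upgrades to a uniformly positive exponent, which yields the corollary.

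First I would lower bound the inner $\max_{P_{U|Y}}$ by fixing the single choice $P_{U|Y}=P^{\dagger}_{U|Y}$ for every $Q_Y$, exactly as $P^{\ast}_{U|Y}$ is fixed in Section~\ref{sec:proof_achievable_rate_corollary}; binning makes this admissible for every $Q_Y$, regardless of whether $I(Q_Y,P^{\dagger}_{U|Y})\le B$. It then remains to verify that for every $Q_Y$ the resulting minimum over $Q_{X|YU}$ with $Q_{YU}=Q_Y\times P^{\dagger}_{U|Y}$ and $H(Q_X)\ge R$ is positive. As the objective is a sum of nonnegative terms, it can vanish only if $D(Q_{XY}\|P_{XY})=0$, $I_Q(X;U|Y)=0$, and $R + E_0(Q_X,Q_{U|X}) - H_Q(X) - |I_Q(Y;U)-B|^{+}\le 0$ simultaneously. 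The first equality forces $Q_{XY}=P_{XY}$, hence $Q_Y=P_Y$ (so for $Q_Y\ne P_Y$ the divergence term, and therefore the objective, is already positive), and the second then forces the Markov chain $X\to Y\to U$ under $Q$, i.e.\ $Q_{U|X}=P_{Y|X}\cdot P^{\dagger}_{U|Y}=P^{\dagger}_{U|X}$. Substituting and using $I(P_Y,P^{\dagger}_{U|Y})\le B$ so that the inner $|\cdot|^{+}$ vanishes, the third inequality collapses to $R \le H(P_X) - E_0(P_X,P^{\dagger}_{U|X}) = R_{\textnormal{LM}}(B)$, contradicting $R > R_{\textnormal{LM}}(B)$; hence no feasible $Q_{X|YU}$ makes the objective zero. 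Alternatively one can run the identity $|a|^{+}=\max_{\rho\in[0,1]}\rho a$ and the subsequent optimization verbatim as in \eqref{eq:achievable_rates_sufficiency}--\eqref{eq:achievable_rates_Q_Y_X_equal_P_Y_X}, which again pins the minimizer to $Q_{Y|X}=P_{Y|X}$, $I_Q(X;U|Y)=0$, $Q_Y=P_Y$ and returns the threshold $R_{\textnormal{LM}}(B)$.

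I expect the main difficulty to be only bookkeeping rather than anything conceptual: the mismatched WAK exponent differs from its matched counterpart in Theorem~\ref{thm:source_coding_coded_side_information_exponent_lower_bound} solely by the substitution $I_Q(X;U)\leftrightarrow E_0(Q_X,Q_{U|X})$ in the residual $|\cdot|^{+}$ term, with $E_0$ as in \eqref{eq:lower_bound_error_exponent_generalized_decoder_E_0}, so the analysis mirrors the one behind Corollary~\ref{cor:lower_bound_rate_mismatch} almost line for line. The points that require care are the degenerate regimes where $R$ exceeds $H(P_X)$ — there the constraint $H(Q_X)\ge R$ already rules out $Q_X=P_X$ so the divergence term does the job, and if $R>\log|\mathcal{X}|$ the feasible set is empty and the exponent is $+\infty$ — together with checking that the compactness step legitimately converts the pointwise positivity established above into a positive error exponent.
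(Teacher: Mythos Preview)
Your proposal is correct and takes the same route the paper indicates: the paper's own proof is just the one line ``Following the proof of Corollary~\ref{cor:achievable_rate}, it can be verified that this exponent leads to the following achievable rates,'' and you carry out precisely that verification, fixing $P_{U|Y}=P^{\dagger}_{U|Y}$ and reducing to the critical point $Q_{XY}=P_{XY}$, $I_Q(X;U|Y)=0$ exactly as in Section~\ref{sec:proof_achievable_rate_corollary}. Your write-up is in fact more detailed than what the paper provides; the only cosmetic remark is that the $\rho$-trick version you mention as an alternative (mirroring \eqref{eq:achievable_rates_sufficiency}--\eqref{eq:achievable_rates_Q_Y_X_equal_P_Y_X}) sidesteps the continuity/compactness bookkeeping more cleanly than the direct zero-case argument, since $E_0$ is only known to be upper semicontinuous in general.
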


\section{Concluding Remarks}
\label{sec:conclusion}
In this work, we studied the error exponent of the IB channel under constant composition codes.
We established an achievable error exponent, showed that employing constant composition codes does not improve the rates achieved with IID codes, and then provided an upper bound for all achievable error exponents under constant composition codes.
We further explored the connections between the IB channel and the WAK problem.
In particular, we demonstrated that the helper in the WAK problem can be viewed as an oblivious relay, and codes developed for the IB channel can be transformed into codes for the WAK problem, owing to the simultaneous covering lemma.
Achievable error exponents and rates for the IB channel and the WAK problem under mismatched decoding rules were also derived.
We now conclude with a discussion of potential future work.

\emph{1)} In our preliminary work \cite{wuAchievableErrorExponent2024}, we established an achievable error exponent for the IB channel under constant composition codes.
The achievable error exponent in \cite[Theorem 1]{wuAchievableErrorExponent2024} was established through random generation of compress-forward codebooks and showing that the compress-forward strategy can be modeled as a DMC.
The achievable error exponent in this work, i.e., Theorem \ref{thm:lower_bound}, was established by the type covering lemma as well as a more refined analysis through the method of types.
We believe that the achievable error exponent provided in Theorem \ref{thm:lower_bound} is generally superior to the one in \cite[Theorem 1]{wuAchievableErrorExponent2024}. However, the two achievable exponents are not directly comparable, due to the different philosophies of the proofs behind them.
It is of interest to provide a proof to support this claim.

\emph{2)} It is desirable to improve the sphere packing bound provided in this work, i.e., Theorem \ref{thm:sphere_packing_bound}.
The achievable error exponent and sphere packing bound established in this work are not easily comparable.
For example, it is certain that the two bounds will meet at the capacity \(C(B)\), but it is unclear whether there exists a critical rate, strictly below \(C(B)\), above which the two bounds coincide.
One reason for the lack of comparability is that the term \(I(X;U|Y)\) is missing from \(E_{\mathrm{sp}}\) in \eqref{eq:main_results_definition_E_sp}.
Recall that \(I(X;U|Y)\) appears in the achievable error exponent to measure the performance of the compress-forward strategy.
However, the sphere packing bound established in this work is built upon the weak converse, meaning that we are restricted to \(X \to Y \to U\), i.e., \(I(X;U|Y) = 0\).
It is plausible that our sphere packing bound can be strengthened to incorporate \(I(X;U|Y)\) as follows
\begin{equation}
    E_{\textnormal{sp}}(R, B, P_X) = \min_{Q_{Y}} \max_{ \substack{ P_{U|Y}: \\ I(Q_Y, P_{U|Y}) \leq B } }  \min_{ \substack{Q_{X|UY}: \\ Q_X = P_{X}, \\ I_{Q}(X;U)  \leq R } } \! \! D(Q_{Y|X} \| P_{Y|X} | P_{X}) + I_Q(X;U|Y), \label{eq:conclusion_improved_sphere_packing_bound}
\end{equation}
bearing closer resemblance to the achievable error exponent in \eqref{eq:achievability_definition_E_r_R_B_P_X}.
A similar improvement for the WAK problem appeared in \cite{kangUpperBoundError2018}, providing some affirmation for our claim that \eqref{eq:conclusion_improved_sphere_packing_bound} is valid.

\emph{3)} It is of interest to derive a strong converse and an exponential strong converse for the IB channel.
Recently, a tight exponential strong converse for the WAK problem was established in \cite{takeuchiTightExponentialStrong2025} by leveraging the change of measure method developed in \cite{tyagiStrongConverseUsing2020}.
Due to the deep connection between the two problems, it is conceivable that a tight exponential strong converse for the IB channel can also be derived.

\emph{4)} It may be useful to establish the typical error exponent for the IB channel under constant composition codes.
As discussed earlier, the error exponent considered in this work resembles the random coding error exponent.
Another important performance metric for random codebook ensembles is the typical error exponent \cite{merhavErrorExponentsTypical2018}, where the focus is on the expectation of error exponents within an ensemble.
For the IB channel, the random codebook ensemble is not an input strategy but rather represents the employed transmission codebook that the relay is oblivious to.
Thus, the typical error exponent of the random ensemble can be interpreted as the error exponent of a typical transmission codebook, i.e., the error exponent of a typical user, which may be of potential practical interest.
\appendices
\section{Proofs of Constant Composition Distribution Properties}
\subsection{Proof of Lemma \ref{lemma:marginal_ditribution_constant_composition_ensemble}}
\label{appendix:proof_marginal_distribution_lemma}
The lemma follows from a counting argument.
For every \(b \in \mathcal{X}\) and \(i \in [n]\),
\begin{align}
    P_{X_i}(b) &= \sum_{x^n} \idc{x_i = b} \times P_{X^n}(x^n) \\
    &= \frac{\abs{\{x^n \in \mathcal{T}_n(P_X): x_i = b\}}}{\abs{\mathcal{T}_n(P_X)} } \\
    &= \frac{ \frac{(n-1)!}{(nP_X(b)-1)! \times \prod_{a \in \mathcal{X}, a \neq b}(nP_X(a)!)} }{ \frac{n!}{\prod_{a \in \mathcal{X}}(nP_X(a)!)} } \\
    &= \frac{nP_X(b)}{n} \\
    &= P_X(b),
\end{align}
which completes the proof.

\subsection{Proof of Lemma \ref{lemma:conditional_distribution_constant_composition_ensemble}}
\label{appendix:proof_conditional_distribution_lemma}
The support of \(P_{X^i}\) follows immediately from \eqref{eq:weak_converse_constant_ensemble_Q_hat}.
As for the conditional distribution, observe that the total number of  \(x^n \in \mathcal{T}_n(P_X)\) with prefix being \(x^i\) is \(\abs{\mathcal{T}_{n-i} (Q_X^{\ast}) }\), i.e., the cardinality of all possible suffixes under the prefix \(x^i\).
Since each such sequence \(x^n\) is equally probable under \(P_{X^n}\),  we have
\begin{equation}
    P_{X^{i}}(x^i) = \frac{ \abs{\mathcal{T}_{n-i}(Q_X^{\ast})}   }{ \abs{\mathcal{T}_n(P_X)} }.
\end{equation}
Similarly, for every \(a \in \mathcal{X}\), the probability of all possible sequences \(x^n \in \mathcal{T}_n(P_X)\) with prefix being \(x^i\) as well as \(x_{i+1} = a\) is given by
\begin{equation}
    P_{X^{i}, X_{i+1} }(x^{i}, a) = \frac{  \abs{ \{x^n_{i+1} \in \mathcal{T}_{n-i}(Q_X^{\ast}): x_{i+1} = a \} } }{  \abs{\mathcal{T}_n(P_X)}   }.
\end{equation}
Therefore, we conclude that
\begin{align}
    P_{X_{i+1}|X^{i}}(a | x^{i}) &= \frac{P_{X^{i}, X_{i+1} }(x^{i+1}, a)}{ P_{X^{i}}(x^i)} \\
    & = \frac{\abs{ \{x^n_{i+1} \in \mathcal{T}_{n-i}(Q_X^{\ast}): x_{i+1} = a \} } }{\abs{\mathcal{T}_{n-i}(Q_X^{\ast})}} \\
    &= Q_X^{\ast}(a), \label{eq:appendix_proof_conditional_distribution_lemma_marginal_distribution}
\end{align}
where \eqref{eq:appendix_proof_conditional_distribution_lemma_marginal_distribution} follows from applying Lemma \ref{lemma:marginal_ditribution_constant_composition_ensemble} to the suffix distribution.

\subsection{Proof of Lemma \ref{lemma:weak_converse_constant_ensemble_window_typical}}
\label{appendix:proof_window_typical_constant_ensemble}
We first prove the lemma for \(k=1\), i.e.,
\begin{equation}
    \P\{X^n \in \mathcal{T}_n(P_X): X^{i} \notin \mathcal{T}_i^{\delta}(P_X)\} \leq 2\abs{\mathcal{X}}e^{\abs{\mathcal{X}}\log(n+1) - i\delta^2 P_{\min}^2}.
\end{equation}
Due to \eqref{eq:weak_converse_constant_ensemble_Q_hat}, for each prefix type \(Q_X \in \mathcal{S}_i(\mathcal{X})\), there exists a unique suffix type \(Q_X^{\ast} \in \mathcal{S}_{n-i}(\mathcal{X})\) with
\begin{equation}
    iQ_X(a) + (n-i)Q_X^{\ast}(a) = nP_X(a), \qquad \forall a \in \mathcal{X}. \label{eq:weak_converse_prefix_type_suffix_type_relationship}
\end{equation}
Given any prefix \(x^i\) satisfying \(\hat{P}_{x^i} = Q_X \in \mathcal{S}_i(\mathcal{X})\), the total number of  \(x^n \in \mathcal{T}_n(P_X)\) with prefix being \(x^i\) is \(\abs{\mathcal{T}_{n-i} (Q_X^{\ast}) }\).
Since the cardinality of such prefixes \(x^i\) satisfying \(\hat{P}_{x^i} = Q_X\) is \(|\mathcal{T}_i(Q_X)|\), we see that
\begin{equation}
    |\{x^n \in \mathcal{T}_n(P_X): \hat{P}_{x^i} = Q_X\}  | = \abs{\mathcal{T}_i(Q_X)} \times |\mathcal{T}_{n-i}(Q_X^{\ast})|.
\end{equation}
Thus, under \(P_{X^n}\), the probability of sequences \(x^n\) with prefix type being  \(Q_X \in \mathcal{S}_i(\mathcal{X})\) is
\begin{equation}
    \P\{X^n \in \mathcal{T}_n(P_X): \hat{P}_{X^i} = Q_X\} = \frac{ \abs{\mathcal{T}_i(Q_X)} \times |\mathcal{T}_{n-i}(Q_X^{\ast})| }{ \abs{\mathcal{T}_n(P_X)} }. \label{eq:weak_converse_probability_prefix_type_combinatorial}
\end{equation}
Now recall that the probability of any sequence \(x^n\) satisfying \(\hat{P}_{x^n} = P_X\) under the IID distribution \(P_X^n\) is
\begin{equation}
    \prod_{a \in \mathcal{X}}P_X(a)^{nP_X(a)} = e^{-nH(P_X)}.
\end{equation}
First, it is evident that
\begin{align}
    &\P\{X^n \in \mathcal{T}_n(P_X):  \hat{P}_{X^i} = Q_X\}  = \frac{ \abs{\mathcal{T}_i(Q_X)} \times |\mathcal{T}_{n-i}(Q_X^{\ast})| \times e^{-nH(P_X)}}{ \abs{\mathcal{T}_n(P_X)} \times e^{-nH(P_X)}}.
\end{align}
Due to \eqref{eq:weak_converse_prefix_type_suffix_type_relationship}, we have
\begin{align}
    & \abs{\mathcal{T}_i(Q_X)} \times |\mathcal{T}_{n-i}(Q_X^{\ast})| \times e^{-nH(P_X)} \nonumber \\
    & = \abs{\mathcal{T}_i(Q_X)} \times |\mathcal{T}_{n-i}(Q_X^{\ast})| \times \prod_{a \in \mathcal{X}}P_X(a)^{nP_X(a)} \\
    & = \abs{\mathcal{T}_i(Q_X)} \times \prod_{a \in \mathcal{X}}P_X(a)^{iQ_X(a)} \times |\mathcal{T}_{n-i}(Q_X^{\ast})| \times \prod_{a \in \mathcal{X}}P_X(a)^{(n-i)Q_X^{\ast}(a)} \\
    & = P_X^i[\mathcal{T}_i(Q_X)] \times P_X^{n-i}[\mathcal{T}_{n-i}(Q_X^{\ast})].
\end{align}
Hence, it follows that
\begin{align}
    \P\{X^n \in \mathcal{T}_n(P_X):  \hat{P}_{X^i} = Q_X\}
    & = \frac{ P_X^i[\mathcal{T}_i(Q_X)] \times P_X^{n-i}[\mathcal{T}_{n-i}(Q_X^{\ast})] }{ P_X^n[\mathcal{T}_n(P_X)] } \\
    & \leq \frac{ P_X^i[\mathcal{T}_i(Q_X)]}{ P_X^n[\mathcal{T}_n(P_X)] } \\
    & \leq (n+1)^{\abs{\mathcal{X}}}P_X^i[\mathcal{T}_i(Q_X)],
\end{align}
where we notice \(P_X^{n-i}[\mathcal{T}_{n-i}(Q_X^{\ast})]  \leq 1\) and \(P_X^n[\mathcal{T}_n(P_X)] \geq (n+1)^{-\abs{\mathcal{X}}}\) (see, e.g., \cite[Lemma 2.3]{csiszarInformationTheoryCoding2011}).
Thus, we can proceed with
\begin{align}
    &\P\{ X^n \in \mathcal{T}_n(P_X): X^i \notin \mathcal{T}_i^{\delta}(P_X)\} \nonumber \\
    &= \sum_{Q_X \in \mathcal{S}_i(\mathcal{X}): \exists x^i \notin \mathcal{T}_i^{\delta}(P_X), \hat{P}_{x^i} = Q_X}  \P\{X^n \in \mathcal{T}_n(P_X):  \hat{P}_{X^i} = Q_X\}  \\
    & \leq  \sum_{Q_X \in \mathcal{S}_i(\mathcal{X}): \exists x^i \notin \mathcal{T}_i^{\delta}(P_X), \hat{P}_{x^i} = Q_X} (n+1)^{\abs{\mathcal{X}}}P_X^i[\mathcal{T}_i(Q_X)]\\
    &\leq (n+1)^{\abs{\mathcal{X}}}(1 - P_X^i[\mathcal{T}_i^{\delta}(P_X)] )  \label{eq:weak_converse_constant_ensemble_prefix_type_to_all_type} ,
\end{align}
where \eqref{eq:weak_converse_constant_ensemble_prefix_type_to_all_type} is due to \(\mathcal{S}_i(\mathcal{X}) \subseteq \mathcal{P}_i(\mathcal{X})\).
Consider the following upper bound
\begin{align}
    &1 - P_X^i[\mathcal{T}_i^{\delta}(P_X)] \nonumber \\
    & = \sum_{x^i}P_X^i(x^i) \times \idc{ \exists a \in \mathcal{X}, |P_{x^i}(a)-P_X(a)| > \delta P_X(a)} \\
    & = \sum_{x^i}P_X^i(x^i) \times \idc{\exists a \in \supp(P_X), |P_{x^i}(a)-P_X(a)| > \delta P_X(a)}  \label{eq:weak_converse_constant_ensemble_enlarge_set_a} \\
    & \leq  \sum_{a \in \mathcal{X} : P_X(a)>0 } 2e^{-i\delta^2P_X^2(a)} \label{eq:weak_converse_constant_ensemble_Hoeffding_inequality} \\
    &\leq  2\abs{\mathcal{X}}e^{-i\delta^2P_{\min}^2},
\end{align}
where in \eqref{eq:weak_converse_constant_ensemble_enlarge_set_a}, if \(P_{X}(a) = 0\) then \(P_{X^i}(x^{i}) = 0\) for all \(x^{i}\) with \(\hat{P}_{x^i}(a) > 0\) , i.e., we only need to consider \(x^i\) whose entries are from \(\supp(P_X)\); in \eqref{eq:weak_converse_constant_ensemble_Hoeffding_inequality} we make use of the union bound and \(\P\{\abs{N-kq}>k\delta\} \leq 2 e^{-2\delta^2 k}\), where the latter follows from \cite[Problem 3.18(b)]{csiszarInformationTheoryCoding2011}.
Thus,  we can conclude that
\begin{equation}
    \P\{X^n \in \mathcal{T}_n(P_X): X^{i} \notin \mathcal{T}_i^{\delta}(P_X)\} \leq 2\abs{\mathcal{X}}e^{\abs{\mathcal{X}}\log(n+1) - i\delta^2 P_{\min}^2}.
\end{equation}
The proof is completed after noticing that the same reasoning applies to any \(k > 1\).
\subsection{Proof of Corollary \ref{cor:weak_converse_constant_ensemble_prefix_suffix_typical}}
\label{appendix:proof_prefix_suffix_typical_constant_ensemble}
Following from Lemma \ref{lemma:weak_converse_constant_ensemble_window_typical}, we have
\begin{align}
    \P\{X^n \in \mathcal{T}_n(P_X): X^{i} \notin \mathcal{T}_i^{\delta_n}(P_X)\} &\leq 2\abs{\mathcal{X}}e^{\abs{\mathcal{X}}\log(n+1) - i\delta_n^2P_{\min}^2 }.
\end{align}
By choosing \(\delta_n = n^{-\frac{1}{8}}\) and noticing \( i \geq \sqrt{n}\), we see that
\begin{align}
    \P\{X^n \in \mathcal{T}_n(P_X): X^i \notin \mathcal{T}_i^{\delta_n}(P_X)\} \leq 2\abs{\mathcal{X}}e^{\abs{\mathcal{X}}\log(n+1) - n^{\frac{1 }{4}}P_{\min}^2}.
\end{align}
In the same manner, we obtain
\begin{align}
    \P\{X^n \in \mathcal{T}_n(P_X): X^{n}_{i+1} \notin \mathcal{T}_{n-i}^{\delta_n}(P_X)\} \leq 2\abs{\mathcal{X}}e^{\abs{\mathcal{X}}\log(n+1) - n^{\frac{1 }{4}}P_{\min}^2} ,
\end{align}
where we notice \(n-i \geq \sqrt{n}\).
From the union bound, the probability for sequences \(x^n\) with either prefix \(x^i\) or suffix \(x^{n}_{i+1}\) being non-typical can be upper bounded through
\begin{align}
    & \P\{X^n \in \mathcal{T}_n(P_X): X^{i} \notin \mathcal{T}_i^{\delta_n}(P_X) \ \text{or} \ X^{n}_{i+1} \notin \mathcal{T}_{n-i}^{\delta_n}(P_X)\} \nonumber \\
    & \leq \P\{X^n \in \mathcal{T}_n(P_X): X^i \notin \mathcal{T}_i^{\delta_n}(P_X)\} + \P\{X^n \in \mathcal{T}_n(P_X): X^n_{i+1} \notin \mathcal{T}_{i+1}^{\delta_n}(P_X)\} \\
    & \leq 4\abs{\mathcal{X}}e^{\abs{\mathcal{X}}\log(n+1) - n^{\frac{1 }{4}}P_{\min}^2}.
\end{align}
Consequently, we have
\begin{align}
    &\P\{X^n \in \mathcal{T}_n(P_X): X^{i} \in \mathcal{T}_i^{\delta_n}(P_X), X^{n}_{i+1} \in \mathcal{T}_{n-i}^{\delta_n}(P_X)\} \nonumber \\
    &= 1- \P\{X^n \in \mathcal{T}_n(P_X): X^{i} \notin \mathcal{T}_i^{\delta_n}(P_X) \ \text{or} \ X^{n}_{i+1} \notin \mathcal{T}_{n-i}^{\delta_n}(P_X)\} \\
    &\geq 1- 4\abs{\mathcal{X}}e^{\abs{\mathcal{X}}\log(n+1) - n^{\frac{1 }{4}}P_{\min}^2},
\end{align}
which completes the proof.

\section{Proof of Lemma \ref{lemma:weak_converse_replacing}}
\label{appendix:weak_converse_replacing_lemma}

First, notice that
\begin{align}
H(Y|Z, X) & = \sum_{x \in \mathcal{X}} P_{X}(x) H(Y | Z, X = x)   \\
    & = \sum_{x \in \mathcal{E}} P_{X}(x)H(Y | Z, X = x) + \sum_{x \in \mathcal{X} - \mathcal{E}} P_{X}(x)H(Y | Z, X = x) \\
    & \leq \sum_{x \in \mathcal{E}} P_{X}(x)H(Y | Z, X = x)  + (1-P_X[\mathcal{E}])\log |\mathcal{Y}|. \label{eq:weak_converse_replacing_lemma}
\end{align}
Next, for every \(x \in \mathcal{E}\), it is easy to verify that \(P_{ZY | X}( \cdot | x) \overset{\delta}{\sim} \tilde{P}_{Z Y | X}(\cdot | x)\). Thus, after marginalizing, we have \(P_{Z|X}(\cdot| x)  \overset{\delta}{\sim} \tilde{P}_{Z| X}(\cdot | x) \), which means that for every \(x \in \mathcal{E}\)
\begin{align}
    H(Y | Z, X = x) & = \sum_{z \in \mathcal{Z} } P_{Z|X}(z|x) H(Y | Z = z, X =x) \\
    & \leq  \sum_{z \in \mathcal{Z} } (1+\delta) \tilde{P}_{Z}(z | x) H(Y | Z = z, X =x).\label{eq:weak_converse_replacing_lemma_x}
\end{align}
On the other hand, for every \(x \in \mathcal{E}\), \(y \in \mathcal{Y}\), and \(z \in \mathcal{Z}\), we have
\begin{align}
    P_{Y|Z,X}(y | z, x) & = \frac{P_{ZY|X}(z,y | x)}{P_{Z|X}(z|x)}  \\
    & \leq \frac{(1+\delta)\tilde{P}_{ZY|X}(z, y | x)}{(1-\delta)\tilde{P}_{Z|X}(z|x)} \\
    & = (1 + \frac{2\delta}{1-\delta})\tilde{P}_{Y|Z,X}(y|z,x).
\end{align}
Similarly, we also have
\begin{equation}
    P_{Y|Z,X}(y | z, x)  \geq (1 - \frac{2\delta}{1 + \delta})\tilde{P}_{Y|Z,X}(y|z,x).
\end{equation}
Conditioned on \(\delta \in (0,1)\), we have \( \frac{2\delta}{1-\delta} > \frac{2\delta}{1 + \delta} \).
We conclude that for every \(x \in \mathcal{E}\) and \(z \in \mathcal{Z}\) it holds that
\begin{equation}
    P_{Y|Z,X}(\cdot | z, x) \overset{ \frac{2\delta}{1-\delta}  }{ \sim } \tilde{P}_{Y|Z,X}( \cdot |z,x).
\end{equation}
Thus, through \cite[Lemma 2.7]{csiszarInformationTheoryCoding2011}, we can proceed from \eqref{eq:weak_converse_replacing_lemma_x} with for every \(x \in \mathcal{E}\)
\begin{align}
    H(Y | Z, X = x) & \leq  \sum_{z \in \mathcal{Z} } (1+\delta) \tilde{P}_{Z|X}(z|x) H(Y | Z = z, X =x) \\
    & \leq \sum_{z \in \mathcal{Z} } (1+\delta) \tilde{P}_{Z|X}(z|x) H(\tilde{Y} | Z = z, X =x) - \frac{2\delta(1+\delta)}{1-\delta} \log \frac{2\delta}{(1-\delta)\abs{\mathcal{Y}}} \\
    & \leq H(\tilde{Y} | \tilde{Z}, X =x)   + \delta \log |\mathcal{Y}| - \frac{2\delta(1+\delta)}{1-\delta} \log \frac{2\delta}{(1-\delta)\abs{\mathcal{Y}}} . \label{eq:weak_converse_replacing_lemma_x2}
\end{align}
Substituting \eqref{eq:weak_converse_replacing_lemma_x2} into \eqref{eq:weak_converse_replacing_lemma_x}, we see that
\begin{align}
    H(Y|Z, X) & \leq \sum_{x \in \mathcal{E}} P_{X}(x) H(Y | Z, X = x)  + (1-P_X[\mathcal{E}])\log |\mathcal{Y}| \\
    & \leq \sum_{x \in \mathcal{E}} P_{X}(x) H(\tilde{Y} | \tilde{Z}, X =x)  + \delta \log |\mathcal{Y}| - \frac{2\delta(1+\delta)}{1-\delta} \log \frac{2\delta}{(1-\delta)\abs{\mathcal{Y}}} + (1-P_X[\mathcal{E}])\log |\mathcal{Y}|\\
    & \leq H(\tilde{Y} | \tilde{Z}, X )  + \delta \log |\mathcal{Y}| - \frac{2\delta(1+\delta)}{1-\delta} \log \frac{2\delta}{(1-\delta)\abs{\mathcal{Y}}}  + (1-P_X[\mathcal{E}])\log |\mathcal{Y}|. \label{eq:weak_converse_replacing_lemma_final}
\end{align}
A similar lower bound between  \(H(Y|Z, X)\) and \( H(\tilde{Y} | \tilde{Z}, X )\) can also be obtained in the same fashion.

It is worthwhile noting that the cardinality of \(\mathcal{Z}\) can be very large when we employ this lemma.
Through the use of robust typicality, we avoid considering the cardinality of \(\mathcal{Z}\) when changing the underlying pmf for the conditional entropy, which is seen from \eqref{eq:weak_converse_replacing_lemma_x}. This may cause issues if strong typicality is used.
\section{Proofs of Cardinality Bounds}
\subsection{Capacity}
\label{appendix:weak_converse_cardinality_bound_U}
For every \((P_{X}, P_{U|Y})\), we obtain the following two distributions through the Markov chain: \(P_{Y} = P_{X} \cdot P_{Y|X}\) and \(P_{U} = P_{Y} \cdot P_{U|Y}\).
Then, we can rewrite the Markov chain as \(U \overset{P_{Y|U}}{\to}  Y \overset{P_{X|Y}}{\to} X \), where \(P_{Y|U}\) is the reverse channel induced by \(P_{Y}\) and \(P_{U|Y}\), while \(P_{X|Y}\) is the reverse channel induced by \(P_X\) and \(P_{Y|X}\).
Consider the following \(\abs{\mathcal{Y}} + 1\) continuous functions on \(\mathcal{P}(\mathcal{Y})\):
\begin{align}
    f_{y}(P_Y) &= P_{Y}(y) \qquad \text{for \(\abs{\mathcal{Y}} - 1\) elements \(y\) from \(\mathcal{Y}\) }, \\
    f_{Y}(P_Y) & = H(P_{Y}), \\
    f_X(P_Y) & = H(P_{Y} \cdot P_{X|Y}).
\end{align}
Note that we only need to consider \(\abs{\mathcal{Y}} - 1\) elements since \(\sum_{y \in \mathcal{Y}}P_Y(y) =1\).
Hence, under the Markov chain \(U \overset{P_{Y|U}}{\to}  Y \overset{P_{X|Y}}{\to} X \), we have
\begin{align}
    \sum_{u} P_{U}(u) f_{y}(P_{Y|U}(\cdot | u)) & = P_{Y}(y), \\
    \sum_{u} P_{U}(u) f_{Y}(P_{Y|U}(\cdot | u)) & = H(Y|U), \\
    \sum_{u} P_{U}(u) f_{X}(P_{Y|U}(\cdot | u)) & = H(X|U).
\end{align}

According to the support lemma \cite[Appendix C]{gamalNetworkInformationTheory2011}, there exist a random variable \(U^{\prime} \sim P_{U^{\prime}}\) with \(\abs{\mathcal{U}^{\prime}} \leq \abs{\mathcal{Y}} + 1\) and a collection of pmfs \(P_{Y|U^{\prime}}( \cdot | u^{\prime} ) \in \mathcal{P}(\mathcal{Y})\), indexed by \(u^{\prime} \in \mathcal{U}^{\prime}\), such that
\begin{align}
    \sum_{u^{\prime}} P_{U^{\prime}}(u^{\prime}) f_{y}(P_{Y|U^{\prime}}(\cdot | u^{\prime})) & = P_{Y}(y), \label{eq:appendix_weak_converse_cardinality_bound_Y_unchanged}\\
    \sum_{u^{\prime}} P_{U^{\prime}}(u^{\prime}) f_{Y}(P_{Y|U^{\prime}}(\cdot | u^{\prime})) & = H(Y|U), \\
    \sum_{u^{\prime}} P_{U^{\prime}}(u^{\prime}) f_{X}(P_{Y|U^{\prime}}(\cdot | u^{\prime})) & = H(X|U).
\end{align}
It follows from \eqref{eq:appendix_weak_converse_cardinality_bound_Y_unchanged} that under the new Markov chain \(U^{\prime} \overset{P_{Y|U^{\prime}}}{\to}  Y \overset{P_{X|Y}}{\to} X \) the distributions of \(Y\) and \(X\) remain unchanged.
Consider the reverse channel \(P_{U^{\prime} | Y}\) induced by \(P_{U^{\prime}} \) and \(P_{Y|U^{\prime}}\).
Thus, for every \((P_{X}, P_{U|Y})\), we can find a new pair \((P_X, P_{U^{\prime}|Y})\) with \(\abs{\mathcal{U}^{\prime}} \leq \abs{\mathcal{Y}} + 1\) such that
\begin{align}
    I(X;U) & = H(X) - H(X|U) \\
    & = H(X) - H(X | U^{\prime}) \\
    & = I(X; U^{\prime}),
\end{align}
and in the same fashion $ I(Y;U) = I(Y;U^{\prime})$, which completes the proof.
%
\subsection{Sphere Packing Bound}
\label{appendix:weak_converse_cardinality_sphere_packing_bound}

Consider an arbitrary alphabet \(\mathcal{U}\).
Assume \(P_X\) is given and fixed.
For every \(Q_Y\), define
\begin{equation}
    E_{\textnormal{sp}}(R, B, Q_Y) = \max_{ \substack{ P_{U|Y}: \\ I(Q_Y, P_{U|Y}) \leq B } }  \min_{ \substack{Q_{Y|X}: \\ P_X \cdot Q_{Y|X} = Q_{Y}, \\ I(P_X, Q_{Y|X} \cdot P_{U|Y} )  \leq R } } \! \! D(Q_{Y|X} \| P_{Y|X} | P_{X}). \label{eq:appendix_cardinality_bound_E_sp}
\end{equation}
Consider an alphabet \(\mathcal{U}^{\prime}\) with \( \abs{\mathcal{U}^{\prime}} \leq \abs{\mathcal{X}} \abs{\mathcal{Y}} + \abs{\mathcal{Y}} + 1 \), define
\begin{equation}
    E^{\prime}_{\textnormal{sp}}(R, B, Q_Y) =  \max_{ \substack{ P_{U^{\prime}|Y}: \\ I(Q_Y, P_{U^{\prime}|Y}) \leq B } }  \min_{ \substack{Q_{Y|X}: \\ P_X \cdot Q_{Y|X} = Q_{Y}, \\ I(P_X, Q_{Y|X} \cdot P_{U^{\prime}|Y} )  \leq R } } \! \! D(Q_{Y|X} \| P_{Y|X} | P_{X}).
\end{equation}
The task is to show
\begin{equation}
    \min_{Q_Y} E_{\textnormal{sp}}(R, B, Q_Y) = \min_{Q_Y} E^{\prime}_{\textnormal{sp}}(R, B, Q_Y).
\end{equation}
We will instead show that for every \(Q_Y\), we have
\begin{equation}
    E_{\textnormal{sp}}(R, B, Q_Y) = E^{\prime}_{\textnormal{sp}}(R, B, Q_Y).
\end{equation}
There are no limits on \(\abs{\mathcal{U}}\), unlike \(|\mathcal{U}^{\prime}|\), so it is clear that
\begin{equation}
    E_{\textnormal{sp}}(R, B, Q_Y) \geq E^{\prime}_{\textnormal{sp}}(R, B, Q_Y).
\end{equation}
Hence, we only need to establish
\begin{equation}
    E_{\textnormal{sp}}(R, B, Q_Y) \leq E^{\prime}_{\textnormal{sp}}(R, B, Q_Y). \label{eq:appendix_cardinality_sphere_packing_bound_task}
\end{equation}
For any \((R, B, Q_Y)\), assume \((P_{U|Y}^{\ast}, Q_{Y|X}^{\ast})\) is a solution to the RHS of \eqref{eq:appendix_cardinality_bound_E_sp}, i.e.,
\begin{align}
    P_X \cdot Q_{Y|X}^{\ast} & = Q_{Y} \\
    I(Q_Y, P_{U|Y}^{\ast}) & \leq B \label{eq:appendix_cardinality_convex_optimization_1}\\
    I(P_X, Q_{Y|X}^{\ast} \cdot P_{U|Y}^{\ast}) & \leq R, \label{eq:appendix_cardinality_convex_optimization_2}
\end{align}
and more importantly \((P_{U|Y}^{\ast}, Q_{Y|X}^{\ast})\) must satisfy
\begin{equation}
    Q_{Y|X}^{\ast} = \argmin_{ \substack{Q_{Y|X}: \\ P_X \cdot Q_{Y|X} = Q_{Y}, \\ I(P_X, Q_{Y|X} \cdot P_{U|Y}^{\ast} )  \leq R } } \! \! D(Q_{Y|X} \| P_{Y|X} | P_{X}). \label{eq:appendix_cardinality_convex_optimization_3}
\end{equation}
Consider the Markov chain \(X \overset{Q_{Y|X}^{\ast}}{\to} Y \overset{P_{U|Y}^{\ast}}{\to} U^{\ast}\), where we denote the distribution of \(U^{\ast}\) by \(P_{U}^{\ast}\).
Since the RHS of \eqref{eq:appendix_cardinality_convex_optimization_3} is a strictly convex optimization problem, we can solve it using the Lagrangian dual function under the KKT conditions, denoted by \(\mathcal{L}(P_{U}^{\ast})\).
Therefore, under the KKT conditions, \(Q_{Y|X}^{\ast}\) must be the solution to
\begin{equation}
    \pdv{\mathcal{L}(P_{U}^{\ast})}{Q_{Y|X}(y | x)} = 0, \qquad \forall (x, y) \in \mathcal{X} \times \mathcal{Y}. \label{eq:appendix_cardinality_convex_optimization_4}
\end{equation}
Notice that \(I(P_X, Q_{Y|X}\cdot P_{U|Y}^{\ast}) = H(P_X) - H(X|U^{\ast})\), which is a linear function of \(P_{U}^{\ast}\).
From this, we see that both \(\mathcal{L}(P_{U}^{\ast})\) and the LHS of \eqref{eq:appendix_cardinality_convex_optimization_4} are linear functions of \(P_{U}^{\ast}\).
As a result, \eqref{eq:appendix_cardinality_convex_optimization_4} contain \(|\mathcal{X}||\mathcal{Y}|\) linear functions of \(P_{U}^{\ast}\).
By Appendix \ref{appendix:weak_converse_cardinality_bound_U}, \eqref{eq:appendix_cardinality_convex_optimization_1} and \eqref{eq:appendix_cardinality_convex_optimization_2} result in \(|\mathcal{Y}| + 1\) linear functions.
Hence, we need to consider \(|\mathcal{X}||\mathcal{Y}| + |\mathcal{Y}| + 1\) functions in total.
From the support lemma, there exists a \(P_{U^{\prime}|Y}^{\ast}\) with \( \abs{\mathcal{U}^{\prime}} \leq \abs{\mathcal{X}} \abs{\mathcal{Y}} + \abs{\mathcal{Y}} + 1 \) satisfying
\begin{align}
    I(Q_Y, P_{U^{\prime}|Y}^{\ast}) & \leq B \\
    I(P_X, Q_{Y|X}^{\ast} \cdot P_{U^{\prime}|Y}^{\ast}) & \leq R,
\end{align}
and more importantly due to \eqref{eq:appendix_cardinality_convex_optimization_4} we have
\begin{equation}
    Q_{Y|X}^{\ast} = \argmin_{ \substack{Q_{Y|X}: \\ P_X \cdot Q_{Y|X} = Q_{Y}, \\ I(P_X, Q_{Y|X} \cdot P_{U^{\prime}|Y}^{\ast} )  \leq R } } \! \! D(Q_{Y|X} \| P_{Y|X} | P_{X}).
\end{equation}
Due to the maximization over \(P_{U^{\prime}|Y}\) in \(E^{\prime}_{\textnormal{sp}}(R, B, Q_Y)\), we then can see that
\begin{equation}
    E^{\prime}_{\textnormal{sp}}(R, B, Q_Y) \geq E_{\textnormal{sp}}(R, B, Q_Y),
\end{equation}
which completes the proof.


\section{Proofs of Covering Lemmas}
\subsection{Proof of Lemma \ref{lemma:Ahlswede_covering_lemma}}
\label{appendix:Ahlswede_covering_lemma_proof}
Given any length-$n$ sequence \(\bm{x}\), there are \(n!\) possible permutations, which however do not all necessarily lead to distinct outcomes.
Stirling's approximation states that
\begin{equation}
    n! \approx e^{n \log n - n}
\end{equation}
as \(n \to \infty\), i.e., there are plenty of permutations to consider.
Denote the sequence of all possible permutations by \(\pi_1, \pi_2, \ldots, \pi_{n!}\).
Then, for every \(\bm{x} \in \mathcal{T}_n(Q_X)\), we must have
\begin{equation}
    \bigcup_{i =1}^{n!} \pi_{i} [ \bm{x} ] = \mathcal{T}_n(Q_X),
\end{equation}
since for any \(\bm{x}^{\prime} \in \mathcal{T}_n(Q_X)\) and \(\bm{x}^{\prime} \neq \bm{x}\), there is a permutation \(\pi\) such that \(\pi [\bm{x}] = \bm{x}^{\prime}\).
Therefore, for every non-empty set \(\mathcal{A} \subseteq \mathcal{T}_n(Q_X) \), we have
\begin{equation}
    \bigcup_{i =1}^{n!} \pi_{i} [ \mathcal{A} ] = \mathcal{T}_n(Q_X),
\end{equation}
since \(\mathcal{A}\) contains at least one \(\bm{x} \in \mathcal{T}_n(Q_X)\).
Next, for a fixed \(\mathcal{A} \subseteq \mathcal{T}_n(Q_X)\), define
\begin{equation}
    \mathrm{deg}( \bm{x} ) \triangleq \sum_{i =1}^{n!} \idc{ \bm{x} \in \pi_{i} [ \mathcal{A}] } \qquad \forall \bm{x} \in \mathcal{T}_n(Q_X),
\end{equation}
i.e., we list the sequence of permuted sets \(\pi_1[ \mathcal{A}], \pi_2 [ \mathcal{A}], \ldots, \pi_{n!} [ \mathcal{A} ]\) and look at the number of them that contain the sequence  \(\bm{x}\).
We have the following result.
\begin{lemma}
    \label{lemma:Ahlswede_covering_lemma_degree_of_edge}
    For any \(\mathcal{A} \subseteq \mathcal{T}_n(Q_X)\), we have
    \begin{equation}
        \mathrm{deg} ( \bm{x} ) =  \frac{ \abs{  \mathcal{A} } \times n! }{ \abs{ \mathcal{T}_n (Q_X) }},  \qquad \forall \bm{x} \in \mathcal{T}_n(Q_X).
    \end{equation}
\end{lemma}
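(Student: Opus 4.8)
The plan is a double-counting / orbit–stabilizer argument. First I would rewrite the degree as a count over permutation rules: since $\bm{x}\in\pi_i[\mathcal{A}]$ holds if and only if $\pi_i^{-1}[\bm{x}]\in\mathcal{A}$, and since $\pi\mapsto\pi^{-1}$ is a bijection on the set of all $n!$ permutation rules, we get
\begin{equation}
    \mathrm{deg}(\bm{x}) = \big|\{\pi : \pi[\bm{x}]\in\mathcal{A}\}\big| = \sum_{\bm{a}\in\mathcal{A}} \big|\{\pi : \pi[\bm{x}]=\bm{a}\}\big|.
\end{equation}
So it suffices to show that for a fixed $\bm{x}\in\mathcal{T}_n(Q_X)$, the number of permutation rules mapping $\bm{x}$ to a given $\bm{a}$ is the same for every $\bm{a}\in\mathcal{T}_n(Q_X)$, namely $n!/|\mathcal{T}_n(Q_X)|$.

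Next I would establish this last count. The stabilizer $\mathcal{S}_{\bm{x}}\triangleq\{\pi:\pi[\bm{x}]=\bm{x}\}$ is a subgroup of the symmetric group on $[n]$, and its size is $\prod_{a\in\mathcal{X}}(nQ_X(a))!$, since a rule fixes $\bm{x}$ exactly when it permutes the index positions within each symbol class among themselves. Moreover, whenever $\pi_0[\bm{x}]=\bm{a}$ for some fixed $\pi_0$, the full set $\{\pi:\pi[\bm{x}]=\bm{a}\}$ is the coset $\pi_0\circ\mathcal{S}_{\bm{x}}$ and hence has size $|\mathcal{S}_{\bm{x}}|$, independent of $\bm{a}$. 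Since every $\bm{a}\in\mathcal{T}_n(Q_X)$ is reachable (the orbit of $\bm{x}$ is exactly $\mathcal{T}_n(Q_X)$, using that $\bm{x}$ and $\bm{a}$ have the same type), partitioning the $n!$ rules by the value $\pi[\bm{x}]$ gives $n! = |\mathcal{T}_n(Q_X)|\cdot|\mathcal{S}_{\bm{x}}|$, i.e.\ $|\{\pi:\pi[\bm{x}]=\bm{a}\}| = |\mathcal{S}_{\bm{x}}| = n!/|\mathcal{T}_n(Q_X)|$.

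Combining the two displays yields
\begin{equation}
    \mathrm{deg}(\bm{x}) = \sum_{\bm{a}\in\mathcal{A}}\frac{n!}{|\mathcal{T}_n(Q_X)|} = \frac{|\mathcal{A}|\cdot n!}{|\mathcal{T}_n(Q_X)|},
\end{equation}
which is the claim, and it is indeed independent of $\bm{x}$. I do not anticipate a genuine obstacle here; the only point requiring a little care is to keep the bookkeeping over \emph{permutation rules} (of which there are exactly $n!$) distinct from the bookkeeping over the resulting \emph{sequences} (of which there are only $|\mathcal{T}_n(Q_X)|$), so that the coset/orbit–stabilizer counting is applied to the group of rules rather than to outputs. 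An alternative, slightly more pedestrian route avoids group language entirely: count pairs $(\pi,\bm{a})$ with $\bm{a}\in\mathcal{A}$ and $\pi[\bm{x}]=\bm{a}$ by noting that choosing such a $\pi$ amounts to choosing, for each symbol $a\in\mathcal{X}$, a bijection from the positions of $a$ in $\bm{x}$ to the positions of $a$ in $\bm{a}$, giving $\prod_a (nQ_X(a))!$ rules per $\bm{a}$; this reproduces the same constant $n!/|\mathcal{T}_n(Q_X)|$.
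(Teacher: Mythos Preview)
Your proof is correct, but it takes a different route from the paper's. The paper argues in two steps: first it shows that $\mathrm{deg}(\bm{x})$ is constant across $\mathcal{T}_n(Q_X)$ by a pure symmetry argument (if $\bm{x}$ lies in $\pi_1[\mathcal{A}],\ldots,\pi_d[\mathcal{A}]$ and $\pi[\bm{x}]=\bm{x}'$, then $\bm{x}'$ lies in $\pi\circ\pi_1[\mathcal{A}],\ldots,\pi\circ\pi_d[\mathcal{A}]$, so $\mathrm{deg}(\bm{x}')\geq\mathrm{deg}(\bm{x})$, and by symmetry equality holds); second it evaluates the total $\sum_{\bm{x}}\mathrm{deg}(\bm{x})=\sum_i|\pi_i[\mathcal{A}]|=|\mathcal{A}|\cdot n!$ and divides by $|\mathcal{T}_n(Q_X)|$. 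Your approach instead computes $\mathrm{deg}(\bm{x})$ directly for a fixed $\bm{x}$ via orbit--stabilizer, never needing the constancy step as a separate lemma. The paper's version is slightly more elementary in that it never identifies the stabilizer or its size, while yours gives more structural information (the explicit per-target count $n!/|\mathcal{T}_n(Q_X)|$) and is arguably cleaner once one is comfortable with the coset language. Both are short and self-contained; neither is strictly preferable.
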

\begin{proof}
    Consider a \(\bm{x} \in \mathcal{T}_n(Q_X)\) and let \(\mathrm{deg} ( \bm{x} ) = d\).
    Assume without loss of generality that \(\bm{x}\) is contained in the sets \(\pi_1[ \mathcal{A}], \pi_2 [ \mathcal{A}], \ldots, \pi_{d} [ \mathcal{A} ]\).
    For any \(\bm{x}^{\prime} \in \mathcal{T}_n(Q_X)\) and \(\bm{x}^{\prime} \neq \bm{x}\), there is a permutation \(\pi\) such that \(\pi [ \bm{x}] = \bm{x}^{\prime}\).
    Thus, \(\bm{x}^{\prime}\) must be contained in the sets
    \begin{equation}
        \pi \circ \pi_1[ \mathcal{A}], \pi \circ \pi_2 [ \mathcal{A}], \ldots, \pi \circ \pi_{d} [ \mathcal{A} ].
    \end{equation}
    Hence, we have
    \begin{equation}
        \mathrm{deg}( \bm{x}^{\prime} ) \geq d = \mathrm{deg} ( \bm{x} ).
    \end{equation}
    In the same fashion, we can show that
    \begin{equation}
        \mathrm{deg}( \bm{x} ) \geq \mathrm{deg} ( \bm{x}^{\prime} ).
    \end{equation}
    Therefore, we have \(\mathrm{deg}( \bm{x} ) = \mathrm{deg} ( \bm{x}^{\prime} )\) for every \(\bm{x}, \bm{x}^{\prime} \in \mathcal{T}_n(Q_X)\).
    Now we observe that
    \begin{align}
        \sum_{  \bm{x} \in \mathcal{T}_n(Q_X) } \mathrm{deg} (\bm{x}) & = \sum_{  \bm{x} \in \mathcal{T}_n(Q_X) } \sum_{i =1}^{n!} \idc{ \bm{x} \in \pi_{i} [ \mathcal{A}] } \\
        & =  \sum_{i =1}^{n!}  \sum_{  \bm{x} \in \mathcal{T}_n(Q_X) } \idc{ \bm{x} \in \pi_{i} [ \mathcal{A}] } \\
        & = \sum_{i =1}^{n!} \abs{ \pi_i [ \mathcal{A}] } \\
        & = \sum_{i =1}^{n!} \abs{  \mathcal{A} } \\
        & = \abs{  \mathcal{A} } \times n!.
    \end{align}
    Since $\mathrm{deg} ( \bm{x} )$ is the same across all $\bm{x} \in \mathcal{T}_n(Q_X)$, it follows that
    \begin{equation}
        \mathrm{deg} ( \bm{x} ) =  \frac{ \abs{  \mathcal{A} } \times n! }{ \abs{ \mathcal{T}_n (Q_X) }}  \qquad \forall \bm{x} \in \mathcal{T}_n(Q_X),
    \end{equation}
    which completes the proof.
\end{proof}
The task is to show that for every \(\mathcal{A} \subseteq \mathcal{T}_n(Q_X)\), we can find a sequence of permutations \(\pi_1, \pi_2, \ldots, \pi_k\) such that $\bigcup_{i =1}^{k} \pi_i [ \mathcal{A} ] = \mathcal{T}_n(Q_X)$,
i.e.,
\begin{equation}
    \sum_{ \bm{x} \in \mathcal{T}_n(Q_X) } \mathbbmss{1} \Big\{ \bm{x} \not \in \bigcup_{i =1}^{k} \pi_i [ \mathcal{A} ]  \Big\} = 0.
\end{equation}
Let \(\bm{\pi}\) be a random permutation, uniform on the set of all permutations \(\{ \pi_1, \pi_2, \ldots, \pi_{n!} \} \), i.e.,
\begin{equation}
    \P \{ \bm{\pi} = \pi_i \} = \frac{1}{n!}, \qquad \forall i \in [ n!].
\end{equation}
The existence of a sequence of permutations \(\pi_1, \pi_2, \ldots, \pi_k\) that satisfies the desired property is proved through averaging over a random ensemble of \(k\) permutations: a length-\(k\) vector \(\bar{\bm{\pi}} \triangleq ( \bm{\pi}_1, \bm{\pi}_2, \ldots, \bm{\pi}_k)\) where every \(\bm{\pi}_i\) independently follows the same distribution as \(\bm{\pi}\).
It follows that
\begin{align}
    & \E_{\bar{\bm{\pi}}} \bigg[ \sum_{ \bm{x} \in \mathcal{T}_n(Q_X) } \mathbbmss{1} \Big\{ \bm{x} \not \in \bigcup_{i =1}^{k} \bm{\pi}_i [ \mathcal{A} ]  \Big\} \bigg] \nonumber \\
    & = \sum_{ \bm{x} \in \mathcal{T}_n(Q_X) } \E_{\bar{\bm{\pi}}} \bigg[  \mathbbmss{1} \Big\{ \bm{x} \not \in \bigcup_{i =1}^{k} \bm{\pi}_i [ \mathcal{A} ]  \Big\} \bigg] \\
    & = \sum_{ \bm{x} \in \mathcal{T}_n(Q_X) } \P \Big\{  \bm{x} \not \in \bigcup_{i =1}^{k} \bm{\pi}_i [ \mathcal{A} ]  \Big\} \\
    & = \sum_{ \bm{x} \in \mathcal{T}_n(Q_X) } \P \left\{  \bm{x} \not \in \bm{\pi}_i [ \mathcal{A} ] , \forall i \in [k] \right\} \\
    & = \sum_{ \bm{x} \in \mathcal{T}_n(Q_X) } \prod_{i=1}^{k} \P \left\{ \bm{x} \not \in \bm{\pi}_i [ \mathcal{A}] \right\} \label{eq:appendix_Ahlswede_covering_lemma_independent_selection}  \\
    & = \sum_{ \bm{x} \in \mathcal{T}_n(Q_X) } \prod_{i=1}^{k}( 1 - \P \left\{ \bm{x} \in \bm{\pi}_i [ \mathcal{A}] \right\} ) \\
    & = \sum_{ \bm{x} \in \mathcal{T}_n(Q_X) } ( 1 - \P \left\{ \bm{x} \in \bm{\pi} [ \mathcal{A}] \right\} )^k  \label{eq:appendix_Ahlswede_covering_lemma_same_distribution}\\
    & \leq \sum_{ \bm{x} \in \mathcal{T}_n(Q_X) } e^{ - k \P \left\{ \bm{x} \in \bm{\pi} [ \mathcal{A}] \right\}  } \label{eq:appendix_Ahlswede_covering_lemma_inequality} \\
    & = \sum_{ \bm{x} \in \mathcal{T}_n(Q_X) } e^{ - k \abs{\mathcal{A}} \abs{\mathcal{T}_n(Q_X)}^{-1}  } \label{eq:appendix_Ahlswede_covering_lemma_uniform_distribution_dege}\\
    & = e^{ - k \abs{\mathcal{A}} \abs{\mathcal{T}_n(Q_X)}^{-1} + \log \abs{\mathcal{T}_n(Q_X)}  },
\end{align}
where \eqref{eq:appendix_Ahlswede_covering_lemma_independent_selection} is due to the independent selection of permutations in the random ensemble;
\eqref{eq:appendix_Ahlswede_covering_lemma_same_distribution} is because every \(\bm{\pi}_i\) in the random ensemble has the same distribution as \(\bm{\pi}\);
in \eqref{eq:appendix_Ahlswede_covering_lemma_inequality}, we make use of \((1-x)^t \leq e^{-tx}\) for \(x \in [0,1]\) and \(t \geq 0\);
and \eqref{eq:appendix_Ahlswede_covering_lemma_uniform_distribution_dege} follows from
\begin{equation}
    \P \left\{ \bm{x} \in \bm{\pi} [ \mathcal{A}] \right\} = \frac{ \mathrm{deg}(\bm{x})}{n!} = \abs{\mathcal{A}} \abs{\mathcal{T}_n(Q_X)}^{-1},
\end{equation}
on account of the uniform distribution of \(\bm{\pi}\).

It immediately follows that if \(k > \abs{\mathcal{A}}^{-1} \abs{\mathcal{T}_n(Q_X)} \log \abs{ \mathcal{T}_n(Q_X)}\), we have
\begin{equation}
    \E_{\bar{\bm{\pi}}} \bigg[ \sum_{ \bm{x} \in \mathcal{T}_n(Q_X) } \mathbbmss{1} \Big\{ \bm{x} \not \in \bigcup_{i =1}^{k} \bm{\pi}_i [ \mathcal{A} ]  \Big\} \bigg] < 1.
\end{equation}
Since the cardinality of a set must be either \(0\) or a positive integer, there must exist a sequence of permutations \(\pi_1, \pi_2, \ldots, \pi_k \) such that
\begin{equation}
    \sum_{ \bm{x} \in \mathcal{T}_n(Q_X) } \mathbbmss{1} \Big\{ \bm{x} \not \in \bigcup_{i =1}^{k} \pi_i [ \mathcal{A} ]  \Big\} = 0,
\end{equation}
which completes the proof.

\subsection{Proof of Lemma \ref{lemma:Ahlswede_covering_lemma_expurgation}}
\label{appendix:Ahlswede_covering_lemma_expurgation_proof}
Following the same steps in the proof of Lemma \ref{lemma:Ahlswede_covering_lemma}, for every \(\mathcal{A}_j \in \mathcal{F}\), after averaging over the random ensemble \(\bar{\bm{\pi}}\), we have
\begin{align}
    \E_{\bar{\bm{\pi}}} \bigg[ \sum_{ \bm{x} \in \mathcal{T}_n(Q_X) } \mathbbmss{1} \Big\{ \bm{x} \not \in \bigcup_{i =1}^{k} \bm{\pi}_i [ \mathcal{A}_j ]  \Big\} \bigg] & \leq e^{ - k \abs{\mathcal{A}_j} \abs{\mathcal{T}_n(Q_X)}^{-1} + \log \abs{\mathcal{T}_n(Q_X)}  } \\
    & \leq e^{ - k |\mathcal{A}_{\min}| \abs{\mathcal{T}_n(Q_X)}^{-1} + \log \abs{\mathcal{T}_n(Q_X)}  } \\
    & < \frac{1}{2}, \label{eq:appendix_Ahlswede_covering_lemma_expurgation}
\end{align}
if \(k > |\mathcal{A}_{\min}|^{-1} \abs{\mathcal{T}_n(Q_X)} \log 2 \abs{ \mathcal{T}_n(Q_X)}\).
We now define the random set \(\bm{A}\) such that
\begin{equation}
    P_{\bm{A}}( \mathcal{A}_j ) = \frac{1}{\abs{\mathcal{F}}} \qquad \forall \mathcal{A}_j \in \mathcal{F},
\end{equation}
i.e., \(\bm{A}\) is uniformly distributed on the collection $\mathcal{F}$.
Since \eqref{eq:appendix_Ahlswede_covering_lemma_expurgation} holds for every \(\mathcal{A} \in \mathcal{F}\), it follows that
\begin{equation}
    \E_{\bm{A}} \bigg\{ \E_{\bar{\bm{\pi}}} \bigg[ \sum_{ \bm{x} \in \mathcal{T}_n(Q_X) } \mathbbmss{1} \Big\{ \bm{x} \not \in \bigcup_{i =1}^{k} \bm{\pi}_i [ \bm{A} ]  \Big\} \bigg] \bigg\} < \frac{1}{2}.
\end{equation}
Recall that every \(\bm{\pi}_i\) independently follows the uniform distribution over all possible \(n!\) permutations, so \(\bar{\bm{\pi}}\) and \(\bm{A}\) are independent.
Hence, we can exchange the order of expectations and obtain
\begin{equation}
    \E_{\bar{\bm{\pi}}} \bigg\{ \E_{\bm{A}} \bigg[ \sum_{ \bm{x} \in \mathcal{T}_n(Q_X) } \mathbbmss{1} \Big\{ \bm{x} \not \in \bigcup_{i =1}^{k} \bm{\pi}_i [ \bm{A} ]  \Big\} \bigg] \bigg\} < \frac{1}{2}.
\end{equation}
Thus, there must exist a sequence of permutations \(\pi_1, \pi_2, \ldots, \pi_k\) such that
\begin{equation}
    \E_{\bm{A}} \bigg[ \sum_{ \bm{x} \in \mathcal{T}_n(Q_X) } \mathbbmss{1} \Big\{ \bm{x} \not \in \bigcup_{i =1}^{k} \pi_i [ \bm{A} ]  \Big\} \bigg] < \frac{1}{2}.
\end{equation}
Therefore, for this particular sequence of permutations, at least half of the sets \(\mathcal{A}_j \in \mathcal{F}\) must satisfy
\begin{equation}
    \sum_{ \bm{x} \in \mathcal{T}_n(Q_X) } \mathbbmss{1} \Big\{ \bm{x} \not \in \bigcup_{i =1}^{k} \pi_i [ \mathcal{A}_j ]  \Big\} < 1.
\end{equation}
This happens only if for this half of sets \(\mathcal{F}\), we have
\begin{equation}
    \sum_{ \bm{x} \in \mathcal{T}_n(Q_X) } \mathbbmss{1} \Big\{ \bm{x} \not \in \bigcup_{i =1}^{k} \pi_i [ \mathcal{A}_j ]  \Big\} = 0,
\end{equation}
which completes the proof.
\subsection{Proof of Lemma \ref{prop:ratio_unique_codewords_ensemble}}
\label{appendix:proof_ratio_unique_codewords_ensemble}
Fix $\delta \in (0,1)$. If we select \(\delta e^{n\tilde{R}}\) unique codewords from \(\mathcal{T}_n(Q_X)\), then there are
\begin{equation}
    \label{eq:unique_codeword_selection}
    \binom{ |\mathcal{T}_n(Q_X)| }{\delta  e^{n\tilde{R}}}
\end{equation}
possible selections.
We denote by \(\mathcal{H}_i= \{\bm{x}_1, \bm{x}_2, \ldots, \bm{x}_{\delta  e^{n\tilde{R}}}\}\) a possible selection (set).
We will write \(\mathcal{C}_n \subset \mathcal{H}_i\) if all of the unique codewords in \(\mathcal{C}_n\) are contained in \(\mathcal{H}_i\).
If a codebook \(\mathcal{C}_n \in \mathcal{T}_n(Q_X)^{e^{n\tilde{R}}}\) has less than \(\delta  e^{n\tilde{R}}\) unique codewords, i.e., \(|\mathcal{C}_n| \leq \delta  e^{n\tilde{R}}\), then we can construct a possible selection \(\mathcal{H}_i\) 
from \(\mathcal{C}_n\), i.e., in \(\mathcal{H}_i\) we first select the unique codewords in  \(\mathcal{C}_n\) and then arbitrarily select the remaining codewords from \(\mathcal{T}_n(Q_X)\).
Thus, through contradiction, we see that for every \(\mathcal{C}_n \in \mathcal{T}_n(Q_X)^{e^{n\tilde{R}}}\) with \(|\mathcal{C}_n| \leq \delta  e^{n\tilde{R}}\), there must exist a selection \(\mathcal{H}_i\) 
such that \(\mathcal{C}_n \subset \mathcal{H}_i\), since otherwise we can construct a new selection.

For a selection \(\mathcal{H}_i= \{\bm{x}_1, \bm{x}_2, \ldots, \bm{x}_{\delta  e^{n\tilde{R}}}\}\), notice that \(\mathcal{C}_n \subset \mathcal{H}_i\) means that the codewords of \(\mathcal{C}_n\) are from the set \(\{\bm{x}_1, \bm{x}_2, \ldots, \bm{x}_{\delta  e^{n\tilde{R}}}\}\).
Consequently, we observe that
\begin{equation}
    | \{\mathcal{C}_n \in \mathcal{T}_n(Q_X)^{e^{n\tilde{R}}} : \mathcal{C}_n \subset \mathcal{H}_i \}| \leq (\delta  e^{n\tilde{R}})^{e^{n\tilde{R}}},
\end{equation}
where we upper bound \(| \{\mathcal{C}_n \in \mathcal{T}_n(Q_X)^{e^{n\tilde{R}}} : \mathcal{C}_n \subset \mathcal{H}_i \}|\) by the size of the product set over \(\{\bm{x}_1, \bm{x}_2, \ldots, \bm{x}_{\delta  e^{n\tilde{R}}}\}\).
Hence, we have
\begin{align}
    \left|\left\{ \mathcal{C}_n \in \mathcal{T}_n(Q_X)^{e^{n\tilde{R}}} :   | \mathcal{C}_n |  \leq \delta e^{n\tilde{R}} \right\}\right| & \leq \sum_{\mathcal{H}_i}  | \{\mathcal{C}_n \in \mathcal{T}_n(Q_X)^{e^{n\tilde{R}}} : \mathcal{C}_n \subset \mathcal{H}_i \}| \\
    & \leq \binom{ |\mathcal{T}_n(Q_X)| }{\delta  e^{n\tilde{R}}} \times (\delta  e^{n\tilde{R}})^{e^{n\tilde{R}}}.
\end{align}
Recall the distribution of the random ensemble
\begin{equation}
    \P\{ \bm{C} = \mathcal{C}_n \} = \left(  \frac{1}{ | \mathcal{T}_n(Q_X)| } \right)^{e^{n\tilde{R}}}.
\end{equation}
Therefore, we see that
\begin{align}
    & \P \left\{  | \bm{C} |   \leq \delta e^{n\tilde{R}}  \right\} \nonumber \\
    & \leq \binom{ |\mathcal{T}_n(Q_X)| }{\delta  e^{n\tilde{R}}} \times \left( \frac{ \delta  e^{n\tilde{R}}  }{ | \mathcal{T}_n(Q_X)| } \right)^{e^{n\tilde{R}}} \\
    & \leq \left(  \frac{ e \times  |\mathcal{T}_n(Q_X)| }{ \delta  e^{n\tilde{R}} }  \right)^{ \delta  e^{n\tilde{R}}} \times \left( \frac{ \delta  e^{n\tilde{R}}  }{ | \mathcal{T}_n(Q_X)| } \right)^{e^{n\tilde{R}}} \label{eq:binomial_coefficient_inequality}\\
    & = e^{ \delta  e^{n\tilde{R}} } \times |\mathcal{T}_n(Q_X)|^{(\delta -1 )e^{n\tilde{R}}} \times ( \delta  e^{n\tilde{R}})^{(1-\delta)e^{n\tilde{R}}} \\
    & \leq  e^{ \delta  e^{n\tilde{R}} } \times e^{nH(Q_X) \times (\delta -1)e^{n\tilde{R}}} \times e^{ (1-\delta)e^{n\tilde{R}} \log(\delta  e^{n\tilde{R}}) } \label{eq:binomial_coefficient_delta} \\
    & = \exp\Big\{ (1 - \delta)e^{n\tilde{R}}\Big( \frac{\delta}{ 1- \delta} - nH(Q_X) + n\tilde{R} + \log \delta  \Big) \Big\}, \label{eq:double_exponentially_decaying}
\end{align}
where in \eqref{eq:binomial_coefficient_inequality} we use this inequality on binomial coefficient
\begin{equation}
    \binom{n}{k} \leq \left(  \frac{e \times n}{k}  \right)^{k};
\end{equation}
in \eqref{eq:binomial_coefficient_delta} we notice \(\delta -1 < 0\) and \(|\mathcal{T}_n(Q_X)| \geq e^{nH(Q_X)}\).
Since \(\delta \in (0,1)\) and \(H(Q_X) > \tilde{R}\), it is clear that \eqref{eq:double_exponentially_decaying} decays to \(0\) double exponentially.

\section*{Acknowledgement}
The authors would like to thank the associate editor and two anonymous reviewers for their timely and constructive feedback that helped improve the paper.



\bibliography{ref}
\bibliographystyle{IEEEtran}

\end{document}